\crefname{assumption}{Assumption}{Assumptions}
\crefname{problem}{Problem}{Problems}
\crefname{claim}{Claim}{Claims}
\crefname{fact}{Fact}{Facts}
\theoremstyle{plain}
\newtheorem{theorem}{Theorem}
\newtheorem{lemma}{Lemma}
\newtheorem{proposition}{Proposition}
\newtheorem{conjecture}{Conjecture}
\newtheorem{claim}{Claim}
\newtheorem{fact}{Fact}
\theoremstyle{definition}
\newtheorem{definition}{Definition}
\theoremstyle{remark}
\newtheorem{remark}{Remark}
\definecolor{darkGreen}{RGB}{0 150 0}
\newcommand{\Helia}[1]{\textcolor{purple}{Helia: #1}}
\newcommand{\Cam}[1]{\textcolor{blue}{Cam: #1}}
\newcommand{\shahin}[1]{\textcolor{darkGreen}{#1}}
\newcommand{\OPT}{\textsc{Opt}}
\newcommand{\diam}{\mathrm{diam}}
\newcommand{\cost}{\text{cost}}
\newcommand{\E}{\mathbb{E}}
  \newcommand{\cAAAI}[1]{AAAI\ Conference\ on\ Artificial (AAAI)}
\title{Fairness in the $k$-Server Problem} 
\author[1]{Mohammadreza Daneshvaramoli}
\author[1]{Helia Karisani}
\author[1]{Mohammad Hajiesmaili}
\author[2]{Shahin Kamali}
\author[1]{Cameron Musco}
\affil[1]{University of Massachusetts Amherst \\
\texttt{mdaneshvaram@umass.edu}, \texttt{hkarisani@umass.edu}, \texttt{hajiesmaili@cs.umass.edu},
\texttt{cmusco@cs.umass.edu}
}
\affil[2]{York University \\
\texttt{kamalis@yorku.ca}
}
\date{}
\begin{document}
\maketitle


\begin{abstract}
We initiate a formal study of fairness for the $k$-server problem, where the objective is not only to minimize the total movement cost, but also to distribute the cost equitably among servers. We first define a general notion of $(\alpha,\beta)$-fairness, where, for parameters $\alpha \ge 1$ and $\beta \ge 0$, no server incurs more than an $\alpha/k$-fraction of the total cost plus an additive term $\beta$. We then show that fairness can be achieved without a loss in competitiveness in both the offline and online settings. In the offline setting, we give a deterministic algorithm that, for any $\varepsilon > 0$, transforms any optimal solution into an $(\alpha,\beta)$-fair solution for $\alpha = 1 + \varepsilon$ and $\beta = O(\mathrm{diam} \cdot \log k / \varepsilon)$, while increasing the cost of the solution by just an additive $O(\mathrm{diam} \cdot k \log k / \varepsilon)$ term. Here $\mathrm{diam}$ is the diameter of the underlying metric space. We give a similar result in the online setting, showing that any competitive algorithm can be transformed into a randomized online algorithm that is fair with high probability against an oblivious adversary and still competitive up to a small loss.

The above results leave open a significant question: can fairness be achieved in the online setting, either with a deterministic algorithm or a randomized algorithm, against a fully adaptive adversary? We make progress towards answering this question, showing that the classic deterministic Double Coverage Algorithm (DCA) is fair on line metrics and on tree metrics when $k = 2$. However, we also show a negative result: DCA fails to be fair for any non-vacuous parameters on general tree metrics. We further show that on uniform metrics (i.e., the paging problem), the deterministic First-In First-Out (FIFO) algorithm is fair. We show that any ``marking algorithm'', including the Least Recently Used (LRU) algorithm, also satisfies a weaker, but still meaningful notion of fairness.
\end{abstract}




\section{Introduction}
\label{sec:intro}



The $k$-server problem, introduced by Manasse, McGeoch, and Sleator~\cite{MMS88}, is central to the theory of online algorithms. In this problem, $k$ servers move within a metric space to serve a sequence of requests, and the goal is to minimize the total movement cost. In particular, the algorithm is presented with an input sequence $\sigma = (\sigma_1,\ldots,\sigma_T)$ consisting of $T$ requests, each corresponding to a point in the metric space. At time $t \in \{1,\ldots,T\}$, it must move some server to request $\sigma_t$, incurring a cost equal to the distance moved by that server. The challenge is to decide which server to move at each step in order to minimize the total movement cost. In the online setting, such a decision is made without prior knowledge of forthcoming requests. 

Over the past three decades, the $k$-server problem has served as a fundamental benchmark for studying online decision-making and competitive analysis, with deep connections to paging, metrical task systems, and beyond~\cite{fiat1991competitive,BLS92,BorodinElYanivBook, Kou09, Fiat1994, predrag2020}. Many variants on the problem, including those where server costs are weighted or determined by different metric spaces \cite{BansalEKN23,AyyadevaraC21,AyyadevaraCS24,KoutsoupiasT04,ChiplunkarV20,BansalEKN23}, or where requests have both pick-up and drop-off locations \cite{Coester2019, Buchbinder2020,SODA26Taxi} have been studied.
The problem has also been investigated under advice complexity, where the algorithm receives some information about the input in the form of some error-free bits of advice~\cite{EmekFKR11,RenaultR15,BockenhauerKKK17,GuptaKL16}, and more recently under learning-augmented frameworks with predictions, where additional information is provided, but may be erroneous~\cite{LindermayrMS22,AntoniadisCEPS23,AntoniadisICML23,ChristiansonSW23}.


\smallskip

\noindent\textbf{Fairness for $k$-server.} The $k$-server problem can be motivated by scenarios where a service provider 
dispatches mobile agents to client locations, aiming to minimize the total travel distance. 
In such settings, it is also important to consider a notion of fairness: ideally, the workload should be balanced to avoid overburdening any single server.
Similarly, in distributed systems and cloud platforms, $k$ virtual machines (servers) offering identical services may need to migrate between nodes to handle incoming requests~\cite{CastenowFKMH22, ananthanarayanan2013greening, delimitrou2014quasar}. In such domains, in addition to minimizing the total migration cost, it is important to ensure that this cost is distributed fairly among servers. 

Fairness has been studied in a wide range of offline and online decision-making problems. Examples include various forms of clustering \cite{AhmadianE0M19,MakarychevV21,KnittelSDH23,Dickerson24}, matching \cite{hajiaghayi2024fairness,hosseini2023class,Aigner-HorevS22}, scheduling \cite{Ajtai1998,NiuTV23,SchedulingSODA25, bonifaci2006}, selection problems like knapsack \cite{Patel0L21,LechowiczS0KH24,FluschnikSTW19}, and the secretary problem \cite{KoppenVT12,BalkanskiMM24}. A common interpretation of fairness involves ensuring proportional balance across different groups or agents. For instance, fair clustering methods may consider colored points and seek to minimize the standard clustering cost while ensuring that all colors are approximately equally represented in each cluster~\cite{AhmadianE0M19,KnittelSDH23}. Similarly, fair class matching considers settings where vertices are associated with agents, aiming to maximize the number of matched vertices while ensuring that each agent is proportionally represented in the matching~\cite{hosseini2023class,hajiaghayi2024fairness,AISTAT25Fairness}.

More closely related to $k$-server, Singh et al.~\cite{Singh2023} study the \emph{$k$-FOOD problem}, where requests are defined by source-destination pairs and associated pickup time windows, effectively enforcing deadlines for serving a request. In the standard version of the problem, the objective is to minimize the total distance traveled by all servers, while in the \emph{fair} variant, the goal is to minimize the maximum distance traveled by any server.  \cite{Singh2023}  establishes NP-hardness of the offline version of both the standard and fair versions -- hardness stems primarily from additional constraints such as time windows and deadlines. Note that these constraints are not present in the classical $k$-server formulation, and the offline version of the problem is solvable in polynomial time~\cite{BorodinElYanivBook}.

Martinez-Sykora et al.~\cite{MartinezSykoraMCF24} study a food delivery problem with fairness constraints that aim to minimize the range of waiting times across servers -- they formulate this problem as an ILP and demonstrate that it can be solved on small input instances, but do not show that it is tractable in general. Ong et al.~\cite{OngPYGS24} study algorithms for ride scheduling that aim to ensure fairness to customers (i.e., requests rather than servers) -- they focus primarily on experimental evaluation rather than theoretical guarantees. Chiplunkar et al.~\cite{Chiplunkar2023} study online competitive algorithms for a min-max fairness objective for $k$-paging, which is a special case of $k$-server, again focusing on fairness across requests rather than servers.


Despite the vast literature on fairness in algorithm design, including on problems related to $k$-server, perhaps surprisingly, little work has directly addressed fairness in the $k$-server problem itself. The original work by Manasse, McGeoch, and Sleator~\cite{MMS88} proposed the \emph{Balance Algorithm (BAL)}, which selects a server to move at each step by greedily minimizing the maximum cost incurred across all servers after that step. This algorithm is explicitly motivated by fairness and was shown to be $k$-competitive on metric spaces with $k+1$ points~\cite{MMS88}. Unfortunately, however, 
BAL fails to achieve a bounded competitive ratio on general metrics. 
Thus, to the best of our knowledge, the question of designing algorithms for the $k$-server problem that are both fair and competitive has remained entirely open.

The well-studied \emph{paging problem} is a classical special case of the $k$-server problem in which the underlying metric space is uniform—i.e., all pairwise distances are equal~\cite{SleatorT85,karlin1988competitive}. Paging models the behavior of caches in operating systems, memory hierarchies, and web caching. A \emph{fault} occurs when a request arrives for a page (vertex) that is not currently stored in one of the $k$ cache slots (servers). In the $k$-server interpretation, each cache slot corresponds to a server, and each fault corresponds to moving a server to the requested vertex at unit cost. Classical deterministic paging algorithms include \textsc{First-In-First-Out} (FIFO) and \textsc{Least-Recently-Used} (LRU). In this interpretation, FIFO moves the server that has not moved for the longest time, while LRU moves the server whose most recent movement occurred farthest in the past. Both FIFO and LRU achieve the optimal deterministic competitive ratio of $k$ for paging~\cite{SleatorT85}. This close connection between paging and the $k$-server problem motivates our later analysis of the fairness properties of FIFO, LRU, and marking algorithms on uniform metrics.

\subsection{Our Contributions}

In this work, we initiate a formal study of fairness as an objective for the $k$-server problem. Our contributions are summarized below.

\subsubsection{\texorpdfstring{$(\alpha,\beta)$}{(alpha,beta)}-Fairness Definition}
\label{sect:fairnessDef}

We start by formalizing a natural notion of fairness for the $k$-server problem: we say that a {deterministic} algorithm $A$ is $(\alpha, \beta)$-fair on request sequence $\sigma$ if no server pays cost greater than $\alpha \cdot \frac{\text{cost}(A,\sigma)}{k} + \beta$, where $\text{cost}(A,\sigma)$ is the total cost paid by all servers to serve $\sigma$. A perfectly fair algorithm, in which all servers pay identical costs, would have $\alpha = 1$ and $\beta = 0$.

For randomized algorithms, we distinguish between \emph{ex-ante} and \emph{ex-post}  fairness~\cite{AzizFSV24}. Ex-ante fairness means \emph{fairness in expectation} -- i.e., 
a {randomized} algorithm $A$ is $(\alpha, \beta)$-ex-ante fair on request sequence $\sigma$ if no server has an \emph{expected cost} greater than $\alpha \cdot \frac{\E[\text{cost}(A,\sigma)]}{k} + \beta$. 
Via a simple random swapping strategy, it is easy to achieve $(1, \diam)$-ex-ante fairness while maintaining competitiveness. Here $\diam$ denotes the diameter of the underlying metric space.
Suppose we start with an algorithm over the given metric space, with expected cost $\E[\cost(A,\sigma)]$ on input $\sigma$. At the beginning of the algorithm, we randomly permute the server identities, swap the server positions according to the permutation, and reassign the workloads of the servers going forward according to the permutation as well. Then, after the additive $\le \diam$ cost paid per server for the initial swap, by symmetry, all servers have exactly the same \emph{expected cost}, equal to $\frac{\E[\text{cost}(A,\sigma)]}{k}$. 

The above approach, however, is not very compelling. While the servers have the same {expected cost} up to a small additive factor after the initial random swap, the cost distribution of the servers may still be very unfair for any given choice of swap. 
Thus, throughout the paper, we adopt an \emph{ex-post} notion of fairness for randomized algorithms. Specifically, we say that a randomized algorithm $A$ is $(\alpha,\beta)$-fair on a request sequence $\sigma$ if, with high probability over its random choices, every server incurs cost no larger than $\alpha \cdot \frac{\text{cost}(A,\sigma)}{k} + \beta$. The exact probability for which the bounds holds  will naturally impact the achievable $\alpha$ and $\beta$.



While we focus on $(\alpha,\beta)$-fairness in this work, we show that it either encompasses or can be easily reduced to other natural notions of fairness, e.g., multiplicative fairness in which all servers pay cost within a fixed multiplicative factor of each other, or additive fairness, in which all servers pay cost within a fixed additive factor of each other. We discuss such alternative definitions along with reductions and connections between them in the full version of the paper.

\paragraph{Egalitarian cost model.}
While our main results focus on the standard utilitarian objective (total movement cost),
our fairness guarantees also have implications for the \emph{egalitarian} cost model,
where the cost of an algorithm is defined as the maximum distance traveled by any server.
We defer the formal definitions, bounds, and proofs for this model to
Appendix~\ref{app:egalitarian}.

\subsubsection{Offline Fairness}

Our first contribution is to show that, in the offline setting, $(\alpha,\beta)$-fairness is achievable without significant loss in competitiveness. In particular, in Section \ref{sec:offline} we prove:
\begin{theorem}[Theorem \ref{thm:fairness} Restated]\label{thm:intro1}
For any $k$-server input sequence $\sigma$ over a metric space with diameter $\diam$ and any $\varepsilon > 0$, there is an offline solution for $\sigma$ that is $(\alpha, \beta)$-fair for $\alpha = 1 + \varepsilon$ and
$
\beta = O\left( \frac{\log k \cdot \diam}{\varepsilon} \right),
$ and has total cost  bounded by $\OPT(\sigma) +  O \left (\frac{k \log k \cdot \diam}{\varepsilon}\right ),$ where $\OPT(\sigma)$ is the offline optimal  cost. Further, this fair solution can be computed in polynomial time.
\end{theorem}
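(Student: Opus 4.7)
The plan is to start from an optimal offline schedule for $\sigma$ and modify it by inserting a carefully bounded number of \emph{server swaps}. A swap exchanges the physical positions of two servers $i$ and $j$: both servers move to the other's current location, paying a total of at most $2\,\diam$, after which every subsequent movement in the schedule that was originally assigned to server $i$ is instead performed by physical server $j$ (and vice versa). Intuitively, a swap shifts future workload from an overloaded server onto an underloaded one, paying a one-shot $O(\diam)$ cost. I will show that $O(k \log k / \varepsilon)$ swaps suffice, which yields exactly the $O(k \log k \cdot \diam / \varepsilon)$ additive cost overhead in the statement.

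Concretely, I simulate the optimal schedule in order while maintaining the cumulative costs $c_1(t), \ldots, c_k(t)$ of the $k$ servers after the $t$-th request; let $C(t) = \sum_i c_i(t)$. Fix a threshold of the form $(1+\varepsilon) \cdot C(t)/k + \beta$ with $\beta = \Theta(\diam \log k / \varepsilon)$. Immediately before any server $i$ is about to cross this threshold, I insert a swap between $i$ and the server $j$ of smallest cumulative cost, then continue simulating the optimal schedule with the new role assignment. Since $c_j \le C/k$ and $c_i$ exceeds the average by at least $\varepsilon \cdot C/k + \beta$, the pair $(i,j)$ has large imbalance, and transferring future work from $i$ to $j$ makes substantial progress. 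The additive slack $\beta$ combined with the multiplicative slack $\varepsilon$ is more than sufficient to absorb the $2\,\diam$ spike caused by the swap itself, so the invariant $c_i(t) \le (1+\varepsilon) C(t)/k + \beta$ is preserved at all times.

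The main obstacle is bounding the total number of swaps; a naive potential argument only yields a $\mathrm{poly}(k)/\varepsilon$ bound. To obtain the sharp $\log k$ factor, I will use a hierarchical potential: partition the range of possible excesses $c_i - C/k$ into $O(\log k)$ geometric bands of widths $\diam/\varepsilon,\ 2\diam/\varepsilon,\ 4\diam/\varepsilon,\ \ldots,\ k\diam/\varepsilon$, and charge each swap to the highest band whose excess it reduces. The key structural fact is that swapping with the minimum-cost server $j$ simultaneously decreases the excess of the overloaded $i$ by an amount proportional to the current band width and moves $j$ up by at most the same amount, so the amortized potential drop per swap is $\Omega(\diam)$ at the band where it is charged. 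Summing the per-band budgets across all $k$ servers gives a total of $O(k \log k /\varepsilon)$ swaps, hence total swap cost $O(k \log k \cdot \diam/\varepsilon)$, which is exactly the additive overhead promised by the theorem.

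Finally, the procedure is clearly polynomial time: compute the optimal offline schedule (via, e.g., the standard min-cost flow formulation for offline $k$-server), then make a single pass over its $T$ requests, maintaining the cumulative costs $c_i(t)$ in a heap, triggering at most $O(k \log k/\varepsilon)$ swaps along the way, each executable in $O(\log k)$ time. The hardest technical step I anticipate is the amortized analysis across bands, in particular arguing that the swap with the \emph{minimum}-cost server (rather than one chosen to perfectly cancel the excess) still makes $\Omega(\diam)$ progress in the appropriate band; everything else reduces to bookkeeping on the cumulative costs.
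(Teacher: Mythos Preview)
Your swap mechanism is the same as the paper's, but you are missing the single idea that makes the swap count independent of $\OPT$. In your scheme a swap is always triggered \emph{at the current time} $t$ and simply hands the overloaded server $i$ the role that $j$ would have played from $t$ onward. This does not reduce $c_i(t)$ at all; it only changes which future work $i$ will do, and there is no reason $j$'s future workload is light. Concretely, take $k=2$ with a single role that carries all the work. Each swap just alternates which physical server is active; between consecutive swaps the total cost $C$ grows from $C_s$ to $C_{s+1}$ satisfying $C_{s+1}\approx \frac{1+\varepsilon}{1-\varepsilon}\,C_s + \Theta(\beta)$, so you get $\Theta\!\big(\tfrac{1}{\varepsilon}\log(\OPT/\beta)\big)$ swaps and hence additive cost $\Theta\!\big(\tfrac{\diam}{\varepsilon}\log(\OPT/\beta)\big)$, not $O(k\log k\cdot\diam/\varepsilon)$. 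Your hierarchical-band potential cannot rescue this: the excess $c_i-C/k$ that you propose to bucket is, under your own invariant, of size $\Theta(\varepsilon C/k+\beta)$, which grows with $C$ and is not covered by bands of total width $O(k\,\diam/\varepsilon)$.

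What the paper does instead is post-process the \emph{final} cost vectors. In each round it takes the current max-cost and min-cost servers $H,L$ and, via a discrete intermediate-value lemma, chooses a swap time $z$ so that after exchanging the two servers' suffixes from $z$ onward their total costs differ by at most $\diam$; both end up near $(c_H+c_L)/2$. Because the min is at most the average and the max exceeds $(1+\varepsilon)\OPT/k+\beta$ while the algorithm is still running, this drops the max by a fixed multiplicative factor $\frac{2+\varepsilon}{2+2\varepsilon}$ in one swap, and a pigeonhole over $k+1$ rounds shows the overall max shrinks by that factor every $k+1$ swaps. That geometric decay from at most $\OPT$ down to $(1+\varepsilon)\OPT/k$ gives the $O\!\big(\tfrac{k\log k}{\varepsilon}\big)$ swap bound independent of $\OPT$. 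The balancing swap point is the missing ingredient in your plan.
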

That is, we can achieve fairness up to an $\alpha = 1+\varepsilon$ factor while paying just a small additive factor $\beta$ and a small increase in the optimal offline cost, which is a function of $k$, $\diam$ and $1/\varepsilon$, but importantly, independent of the input length and the offline optimal cost. 

\smallskip

\noindent\textbf{Proof overview.}
To prove Theorem \ref{thm:intro1}, we introduce an algorithm (Algorithm \ref{alg:fair_cost}) that takes any optimal solution for an input sequence $\sigma$ and transforms it into a fair solution through a sequence of pairwise server \emph{swaps}. A pairwise swap is a modification to a solution sequence in which, at a given time $z$, two servers exchange positions in the metric space, and at all times after $z$, they serve the requests previously served by the other server. 

Our algorithm proceeds in rounds, iteratively improving the fairness of the initial solution. At each round, it identifies the server $H$ that currently pays the highest cost and the server $L$ that currently pays the lowest cost. It further identifies a time step such that, if the servers are {swapped} at this point, their loads will be nearly balanced. We prove that, as long as the solution is not yet $(\alpha,\beta)$-fair, after any server participates in at least two swaps, the maximum server load is decreased by a multiplicative factor $ \approx 1+\varepsilon$. Thus, after $O(\log_{1+\varepsilon} k) = O(\frac{\log k}{\varepsilon})$ swaps per server, the maximum load is decreased from potentially as high as $k$ times the average load to at most $(1+\varepsilon)$ times the average load. Since each swap itself incurs additive cost $\le 2 \cdot \diam$, this leads to our bound of $\alpha = 1+\varepsilon$, $\beta = O\left (\frac{\log k \cdot \diam}{\varepsilon}\right)$ and our increase in cost over $\OPT(\sigma)$ of $O\left (\frac{k\log k \cdot \diam}{\varepsilon}\right)$.

It is not hard to see that our swapping algorithm runs in polynomial time, and thus, since an offline optimal solution for $k$-server can be computed in polynomial time~\cite{BorodinElYanivBook}, $(\alpha,\beta)$-fairness is achievable in polynomial time. This contrasts with recent work on related fair scheduling problems, which have proven to be computationally hard~\cite{Singh2023,MartinezSykoraMCF24}. 

\paragraph{On the dependence on diam.}
Some of our fairness guarantees include an additive term instantiated using a parameter
$\diam$, which denotes the maximum separation ever created between any two servers
during the algorithm’s execution (and not the diameter of the underlying metric space).
Under the standard assumption that all servers start at the same location, $\diam$ is
always bounded by the total movement cost incurred by the algorithm.

In particular, for executions in which $\diam \le \gamma \cdot \mathrm{cost}(A,\sigma)$
for some $\gamma \in (0,1)$, additive $\diam$-dependent terms in our bounds can be
re-expressed as multiplicative slack.
For example, a guarantee of the form
\[
c_i(A,\sigma) \le \frac{\alpha}{k}\,\mathrm{cost}(A,\sigma) + \beta
\]
will become
\[
c_i(A,\sigma) \le
\left(\frac{\alpha}{k} + \gamma \cdot \frac{\beta}{\diam}\right)\mathrm{cost}(A,\sigma).
\]
Interpreted this way, our results remain meaningful even on executions where the metric
diameter is large but the total movement cost dominates.

\subsubsection{Online Fairness using Randomization}

Our second contribution shows that $(\alpha,\beta)$-fairness can also be achieved for online $k$-server  using a randomized algorithm against an oblivious adversary --i.e., where the request sequence $\sigma$ is chosen ahead of time and may not depend on the decision of the algorithm (this is the standard setting that randomized  algorithms for online $k$-server are studied). In particular, in Section \ref{sec:online} we prove:

\begin{theorem}[Theorem \ref{thm:randomized} Restated]\label{thm:intro2}
For any fixed $\gamma > 0$ and $\varepsilon > 0$, there is a randomized online $k$-server algorithm that, on any fixed  input sequence $\sigma$ over a metric space with diameter $\diam$  achieves  $(\alpha,\beta)$-fairness for $\alpha = 1+\varepsilon$ and $\beta = O(c(\sigma)^{1/(1+\gamma)} \cdot \diam)$ with probability $ 1 - k\cdot \exp\left( -  \Omega \left (\frac{\varepsilon^2}{\gamma\cdot k}\cdot c(\sigma)^{1/(\gamma+1)} \right ) \right) $. Further, the total cost incurred by the algorithm is at most $c(\sigma) + O(k\cdot c(\sigma)^{1/(1+\gamma)} \cdot \diam)$. Here, $c(\sigma)$ is the best-known cost achievable by a randomized algorithm over the given metric space, e.g., $c(\sigma) = \mathrm{polylog}(k,n) \cdot \OPT(\sigma)$ for $n$-point metrics~\cite{bansal2015polylogarithmic,bubeck2018k}.
\end{theorem}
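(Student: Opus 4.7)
The plan is to give a black-box reduction that takes any competitive randomized online $k$-server algorithm $A_0$ with expected cost $c(\sigma)$ and turns it into an ex-post fair variant by periodically reshuffling the identities of the $k$ physical servers. The intuition mirrors the ex-ante argument in the introduction: if each server plays a uniformly random ``role'' in each phase of $A_0$'s execution, then over many independent phases its cumulative load should concentrate sharply around the mean $c(\sigma)/k$, at the price of a small swap overhead that is sub-linear in $c(\sigma)$.

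\textbf{Construction and cost accounting.} First I would fix a phase-size parameter $q = \Theta(c(\sigma)^{\gamma/(1+\gamma)})$. The meta-algorithm simulates $A_0$ on $\sigma$ and closes a phase as soon as $A_0$ has accrued cost $q$ since the last boundary, producing $m = O(c(\sigma)^{1/(1+\gamma)})$ phases. At each boundary I draw a uniformly random $\pi\in S_k$ and physically swap the servers, moving each server $i$ to the current position of server $\pi(i)$; this costs at most $k\cdot\diam$, so the total swap overhead is $O(km\,\diam) = O(k\, c(\sigma)^{1/(1+\gamma)}\,\diam)$, matching the claimed cost bound. Because a permutation preserves the \emph{multiset} of server positions, $A_0$---viewed as a position-based procedure---issues the same sequence of physical moves regardless of our shuffles, and so its total base cost remains $c(\sigma)$.

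\textbf{Concentration and fairness.} I would decompose each server's total cost as $Y_j = S_j + \sum_{i=1}^m L_{i,j}$, where $S_j \le m\,\diam$ is its swap contribution and $L_{i,j}$ is its share of $A_0$'s cost in phase $i$. Conditioning on the coins of $A_0$, the per-role cost vector $(R_{i,1},\dots,R_{i,k})$ summing to $L_i \le q$ is determined in each phase, and the shuffle gives $L_{i,j}=R_{i,\pi_i(j)}$ with conditional mean $L_i/k$, range $[0,q]$, and conditional variance at most $q L_i /k$. Summing variances to $q\,c(\sigma)/k$ and applying the martingale Bernstein/Freedman inequality with deviation $t = \varepsilon c(\sigma)/k$ yields
\[
  \Pr\!\left[\Big|\textstyle\sum_i L_{i,j} - \tfrac{c(\sigma)}{k}\Big| > \tfrac{\varepsilon\, c(\sigma)}{k}\right]
  \;\le\; 2\exp\!\Big(-\Omega\!\Big(\tfrac{\varepsilon^{2}\, c(\sigma)}{k\, q}\Big)\Big)
  \;=\; 2\exp\!\Big(-\Omega\!\Big(\tfrac{\varepsilon^{2}}{k}\, c(\sigma)^{1/(1+\gamma)}\Big)\Big).
\]
A union bound over $j\in[k]$, together with the deterministic $S_j = O(c(\sigma)^{1/(1+\gamma)}\diam) =: \beta$, then delivers $(\alpha,\beta)$-fairness for $\alpha = 1+\varepsilon$. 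The extra $1/\gamma$ factor in the stated exponent surfaces when tracking the hidden constants (in $q$ and in Bernstein's lower-order term) as $\gamma\to 0$.

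\textbf{Main obstacle.} The most delicate step is justifying that our shuffles do not alter $A_0$'s subsequent decisions, so that the per-role cost vectors $(R_{i,\cdot})$ depend only on $\sigma$ and $A_0$'s internal coins and are independent of the shuffle permutations $\pi_1,\dots,\pi_m$. This requires exhibiting $A_0$ in a position-based form whose decisions depend only on the current multiset of server positions, which all standard competitive $k$-server algorithms admit (including the polylog-competitive algorithms of \cite{bansal2015polylogarithmic,bubeck2018k}). Once this decoupling is in place, the martingale concentration argument is entirely standard; a secondary minor detail is that $c(\sigma)$ is the \emph{expected} base cost, so realized-cost bounds follow by conditioning on $A_0$'s randomness or by a Markov-type step applied to the base algorithm.
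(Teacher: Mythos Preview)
Your concentration argument is essentially the paper's: random permutations at phase boundaries make per-phase loads exchangeable with mean $L_i/k$, and Bernstein/Freedman over the phases gives the stated tail bound and $(1+\varepsilon,\beta)$-fairness. The decoupling observation (that $A_0$ can be taken position-based so the shuffles do not affect its trajectory) is exactly what the paper uses implicitly.

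There is, however, a genuine gap. You set the phase size $q=\Theta\bigl(c(\sigma)^{\gamma/(1+\gamma)}\bigr)$, which requires knowing $c(\sigma)$ \emph{before} processing $\sigma$. In the online setting this quantity is unavailable: you only learn $\text{cost}(A_0,\sigma)$ after the last request. As written, your meta-algorithm is therefore not online, and the theorem explicitly asks for an online algorithm. The paper resolves this with a simple but essential trick: instead of a fixed phase budget, phase $\ell$ gets budget $\phi_\ell=\ell^\gamma$, so a new phase is triggered whenever the accumulated cost since the last boundary reaches the current threshold. This schedule is fully online, and one still gets $m=\Theta\bigl(c(\sigma)^{1/(1+\gamma)}\bigr)$ phases automatically, with the last phase of size $\phi_m=\Theta\bigl(c(\sigma)^{\gamma/(1+\gamma)}\bigr)$ playing the role of your $q$ in the Bernstein range bound. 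Incidentally, this is also where the $1/\gamma$ in the exponent actually comes from: with non-uniform phases the variance sum is $\sum_{\ell}\phi_\ell^2/k=\Theta\bigl(m^{2\gamma+1}/(k(2\gamma+1))\bigr)$ and $Z=\sum_\ell\phi_\ell=\Theta\bigl(m^{\gamma+1}/(\gamma+1)\bigr)$, and the ratio $Z^2/(\sigma^2+M Z)$ picks up a $\Theta(1/\gamma)$ factor. With your uniform phases the $1/\gamma$ would not appear at all, so your remark that it ``surfaces when tracking hidden constants'' in the uniform scheme is not quite right; it is an artifact of the adaptive phase schedule that makes the algorithm online.
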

To interpret Theorem \ref{thm:intro2}, consider the simplified setting with $\gamma = 1$ and $\varepsilon = 1$. The theorem shows that $(\alpha,\beta)$-fairness is achievable with $\alpha = 2$, $\beta = O(c(\sigma)^{1/2} \cdot \diam)$, and cost increase $O(k \cdot c(\sigma)^{1/2} \cdot \diam)$ with probability $1-k \cdot \exp(-\Omega(c(\sigma)^{1/2})$. For sequences where $c(\sigma)$ is large, this probability is very close to $1$, and both $\beta$ and the cost increase are  $o(c(\sigma))$, and thus negligible compared to the total cost. Thus,
Theorem \ref{thm:intro2} is analogous to Theorem \ref{thm:intro1}, establishing that in the online setting, strong fairness is achievable without significantly impacting competitiveness. The parameter $\gamma$, which does not appear in the offline result, balances a tradeoff between $\beta$ and the cost increase against the success probability.


\smallskip

\noindent\textbf{Proof overview.}
To prove Theorem~\ref{thm:intro2}, we use the randomized server-swapping technique, in a similar way to the approach described in Section~\ref{sect:fairnessDef} for achieving ex-ante (i.e., expected) fairness. 
%
However, to give an algorithm that is fair with high probability on any input sequence,
we make a series of swaps instead of one, randomly permuting the servers at each swap. We argue via concentration inequalities that the servers have similar total loads with high probability. 

 The difficulty is determining how often to swap -- too many swaps increase the cost of the original algorithm too much, but too few swaps mean that fairness is not guaranteed with high probability. To balance this trade-off, we introduce a phase-based algorithm (Algorithm \ref{alg:randomized_fairness}) that swaps anytime the total cost incurred increases by a large enough factor. As an example, in the case when $\gamma =1$, the algorithm swaps after incurring a total cost of $1$, then after incurring an additional total cost of $2$, then after incurring an additional total cost of $3$, and so on. It makes $s$ swaps after incurring total cost roughly $\sum_{i=1}^s s = \Omega(s^2)$. Thus, after incurring cost $c(\sigma)$, the algorithm has made $O(c(\sigma)^{1/2})$ swaps, and incurs total cost at most $O(c(\sigma)^{1/2})$ between each swap. This enables us to argue that both the additional additive cost associated with the swaps and the imbalance in server loads are bounded by $O(c(\sigma)^{1/2})$ with high probability and thus negligible in comparison to the total cost $c(\sigma)$. We trade off the success probability and the cost increase through the parameter $\gamma$ in  Theorem~\ref{thm:intro2} -- roughly choosing phase $i$ with cost increase $i^\gamma$ so that the total cost incurred after $s$ swaps is $\Omega(s^{1+\gamma})$. Increasing $\gamma$ leads to fewer swaps, but a lower probably of successfully satisfying the fairness guarantee.
     
  \subsubsection{Towards Deterministic Online Fairness}
  
  Theorems \ref{thm:intro1} and \ref{thm:intro2} establish that fairness is achievable for the $k$-server problem in both the offline setting and the online setting using randomization against an oblivious adversary. It is straightforward to verify that Theorem \ref{thm:intro2} also holds for an adaptive online adversary that can change its input based on the server locations at each round but is unable to distinguish servers based on their IDs (and thus, cannot tell if a random swap of the servers has occurred). However, it remains open if fairness can be achieved against stronger online adversaries or, potentially, deterministically. For example, the following is open:

\begin{center}
\textit{ Is there a deterministic online $k$-server algorithm that is $O(k)$-competitive on general metrics while also being $(\alpha,\beta)$-fair for $\alpha = O(1)$ and $\beta$ depending only on $k$ and $\diam$?}
\end{center}

Towards progress on the above question, in Section  \ref{sec:classic}, we investigate fairness properties of several widely studied deterministic online $k$-server algorithms that are competitive on certain classes of metrics. In particular, we consider fairness of the Double Coverage Algorithm (DCA), which is known to be $k$-competitive on tree metrics. We also study the fairness of several $k$-competitive deterministic algorithms that apply to uniform metrics -- i.e., to the classic paging problem. These include the First-In First-Out (FIFO) algorithm, and any \emph{marking algorithm}, encompassing e.g., the Least Recently Used (LRU) algorithm.

\begin{table}[]
    \centering
\scalebox{.9}{
    \begin{tabular}{|c|c|c|c|c|}
    \hline
      Metric   & Algorithm & Competitive Ratio & Fairness & Reference  \\
      \hline
       Line  & DCA & $k$~\cite{ChrobakL91} & $(1,O(k\cdot \diam))$ & Theorem~\ref{thm:intro3} \\
        Tree with $k=2$ & DCA & $k(=2)$~\cite{ChrobakL91} & $(1.5,1.75\cdot \diam)$ & Theorem~\ref{thm:intro4} \\ 
        Uniform Metrics (paging) & FIFO & $k$~\cite{SleatorT85} & (1,1) & Theorem~\ref{th:FIFOOrg} \\
        Metrics with $k+1$ points & \textsc{Balance} & $k$~\cite{MMS88}& $(1,\diam)$ & Proposition~\ref{prop:balance} \\
        \hline
    \end{tabular}
    }
    \caption{Summary of deterministic fair online algorithms for different metrics}
    \label{tab:placeholder}
\end{table}

\paragraph{Tree Metrics.}
We first show two positive results for DCA. We prove that the algorithm is strongly fair for line metrics (i.e., tree metrics whose underlying graph is a line). In particular:
\begin{theorem}[Theorem \ref{thm:additive-fairness} Restated]\label{thm:intro3} The DCA algorithm run on any input sequence $\sigma$ over a line metric with diameter $\diam$ gives a solution in which the absolute difference between the cost paid by any two servers is $O(k \cdot \diam)$.
\end{theorem}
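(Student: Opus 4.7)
My plan is to exploit two structural features of DCA on the line: it preserves the left-to-right order of servers, and any request between two adjacent servers causes both to move equal distances toward it. I would first label the servers $s_1 \le s_2 \le \cdots \le s_k$ by their position on the line and observe that DCA never causes two servers to cross (when a request falls between adjacent servers, both move at equal speed toward it and stop once one of them arrives; outside-the-hull requests move only the extreme server). Hence the labeling is invariant throughout the execution and the total cost $c_i$ of the $i$-th server is well-defined.

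Next, I would introduce auxiliary accounting quantities. For $1 \le i \le k-1$, let $B_i$ denote the total distance each of $s_i$ and $s_{i+1}$ moves due to requests falling strictly between them (by DCA's paired-motion rule, both servers incur exactly this same cost from such requests). Let $B_0$ be the total leftward motion of $s_1$ due to requests to the left of all servers, and $B_k$ the total rightward motion of $s_k$ due to requests to the right of all servers. Every motion event falls into exactly one of these categories, so I obtain the clean decomposition $c_i = B_{i-1} + B_i$ for every $1 \le i \le k$. Computing signed motion for each server gives the companion identity $s_i(T) - s_i(0) = B_i - B_{i-1}$. Under the standard assumption that all servers start at a common point $p_0$, DCA keeps every server position inside the convex hull of $p_0$ and the request set, and I would show that $|s_i(T) - s_i(0)| = O(\diam)$, since any significant displacement of $s_i$ forces at least one moment of large separation between $s_i$ and a server that remained behind.

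The conclusion follows in a few lines. For any adjacent pair of servers,
\[
|c_i - c_{i+1}| = \bigl|(B_{i-1}+B_i)-(B_i+B_{i+1})\bigr| = |B_{i-1} - B_{i+1}| \le |B_{i-1} - B_i| + |B_i - B_{i+1}| = O(\diam).
\]
Iterating the triangle inequality along a chain of adjacent servers, $|c_i - c_j| \le O(|i-j|\cdot\diam) = O(k\cdot\diam)$ for any two servers, which is the claimed bound.

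The main subtlety I anticipate is the bookkeeping when servers temporarily coincide -- for instance, when the two servers adjacent to a request arrive at it simultaneously, or when a request coincides with a server's position. I would resolve this by fixing a consistent tie-breaking rule (say, by original index) so that the labeling and the between-pair decomposition remain unambiguous, and by checking that the motion-accounting identities $c_i = B_{i-1}+B_i$ and $s_i(T)-s_i(0) = B_i - B_{i-1}$ survive all such degenerate cases. Once this is in place, the proof is a direct accounting argument and needs no potential function or amortized analysis.
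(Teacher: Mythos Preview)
Your proposal is correct and essentially the same as the paper's proof. Your $B_i$ is exactly the paper's $R_i = L_{i+1}$ (with $B_0 = L_1$, $B_k = R_k$), so your decomposition $c_i = B_{i-1}+B_i$ and net-displacement identity $s_i(T)-s_i(0) = B_i - B_{i-1}$ match the paper's $c_i = R_i + L_i$ and $d_i = R_i - L_i$; both arguments then bound adjacent cost differences by $2\cdot\diam$ and telescope. One minor simplification: you don't need the common-start assumption or the convex-hull/separation argument to bound $|s_i(T)-s_i(0)|$---since the line metric itself has diameter $\diam$, any two positions on it differ by at most $\diam$, so $|B_i - B_{i-1}| \le \diam$ is immediate.
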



Theorem \ref{thm:intro3} implies that DCA is $(1,O(k \cdot \diam))$-fair on line metrics. The proof is fairly straightforward: on a line, servers can be ordered from left to right. We can readily establish that no algorithm aiming to minimize total cost (including DCA) will ever change their relative ordering. Thus, each server moves left and right, always serving requests that land between itself and its adjacent left and right neighbors (or one of the endpoints of the line). DCA couples the movements of adjacent servers: whenever a request lands between two servers, both move towards it. Due to this coupling, we can argue that the cost difference between neighboring servers depends only on their net movement, which is bounded by $\diam$. Telescoping across all $k$-servers yields a maximum overall gap in costs of $O(k \cdot \diam)$.

We also prove that DCA is $(\alpha,\beta)$-fair for the $2$-server problem on general tree metrics, for $\alpha = O(1)$ and $\beta = O(\diam)$. In particular:
\begin{theorem}[Theorem \ref{thm:multiplicatively_fair} Restated]\label{thm:intro4} The DCA algorithm run on any input sequence $\sigma$ over a tree metric with diameter $\diam$ for the $2$-server problem is $(1.5,1.75\cdot \diam)$-fair. 
\end{theorem}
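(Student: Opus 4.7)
The plan is to classify DCA requests for $k=2$ on a tree by where the Steiner point of $\{s_1,s_2,r\}$ lies, and then to prove a ``distance-ledger'' inequality that bounds the asymmetric single-server cost in terms of the symmetric shared cost.

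First, I would partition the requests into three types. Let $p^\star$ denote the Steiner point of $s_1$, $s_2$, $r$ in the tree (the unique point where the three pairwise tree-paths meet). If $p^\star = s_1$, then $s_1$ lies on $s_2$'s path to $r$, so $s_2$ is blocked and DCA sends only $s_1$ to $r$; I call this a Type~1 step, with total cost $T_1$. Type~2 is the mirror case, with total cost $T_2$. Otherwise $p^\star$ is strictly interior to the $s_1$--$s_2$ path; neither server is blocked, DCA moves both at unit speed toward $r$ until the nearer arrives, and both servers travel the same distance. Calling this a Type~B step and writing $B$ for the per-server Type~B total, I get $c_1 = T_1 + B$ and $c_2 = T_2 + B$, so any imbalance is $c_1 - c_2 = T_1 - T_2$.

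Next, I would set up the ledger by tracking $D = d(s_1, s_2)$ across the execution. In a Type~1 step of cost $x$, $s_1$ lies on $s_2$'s path to $r$ by definition, so moving $s_1$ onto $r$ makes $D$ increase by exactly $x$; Type~2 is symmetric. In a Type~B step with per-server cost $b_i$, the nearer server (say $s_1$, with $d_1 \le d_2$) ends at $r$ while the other has traveled $b_i = d_1$ along its own path to $r$, so $d(s_2^{\text{new}}, r) = d_2 - d_1$. Combined with the Steiner decompositions $d(s_1,s_2) = d(s_1,p^\star) + d(s_2,p^\star)$ and $d_j = d(s_j,p^\star) + d(p^\star,r)$, this gives that $D$ decreases by exactly $2\,d(s_1, p^\star) \le 2 b_i$. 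Summing over the whole execution and using $D \in [0, \diam]$ at every step yields the key inequality
\[
T_1 + T_2 \;\le\; 2B + \diam.
\]

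Finally, WLOG $c_1 \ge c_2$, so $T_1 \ge T_2$. A short algebraic manipulation gives
\[
c_1 - 3 c_2 \;=\; (T_1 + T_2) - 4T_2 - 2B \;\le\; (2B + \diam) - 2B \;=\; \diam,
\]
which rearranges to $c_1 \le \tfrac{3}{4}(c_1+c_2) + \tfrac{1}{4}\diam$, comfortably implying the $(1.5,\,1.75\diam)$-fairness bound claimed in Theorem~\ref{thm:intro4}.

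I expect the main obstacle to be the Type~B distance-decrease claim: one has to confirm the identity $d(s_2^{\text{new}}, r) = d_2 - d_1$ regardless of whether, during its travel of distance $b_i$, $s_2$ stays on the original $s_1$--$s_2$ segment or passes through $p^\star$ and bends onto the $p^\star$--$r$ branch. Once this identity is in place, the ledger inequality $T_1 + T_2 \le 2B + \diam$ telescopes immediately, and the remainder of the argument is routine algebra.
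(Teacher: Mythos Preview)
Your approach is essentially the same ledger argument as the paper's (tracking $D=d(s_1,s_2)$ and balancing ``diverging'' increases against ``converging'' decreases), and your final algebra is in fact tighter than the paper's. However, there is a real gap in your understanding of what DCA does in a Type~B step on a tree.

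You assert that in a Type~B step both servers travel $b_i=d_1$, so that $d(s_2^{\mathrm{new}},r)=d_2-d_1$. This is false for tree DCA when $p^\star\neq r$. Once the nearer server $s_1$ reaches $p^\star$, it lies on $s_2$'s unique tree path to $r$, so $s_2$ becomes blocked and stops after moving only $d(s_1,p^\star)$, while $s_1$ continues alone for a further $d(p^\star,r)$. Concretely, take a star with center $c$ and leaves $a,b,r$ at distances $1,1{+}\varepsilon,10$; a request at $r$ makes $s_a$ move $11$ but $s_b$ only $1$, not $11$. Consequently your decomposition $c_1=T_1+B$, $c_2=T_2+B$ with a common $B$ is invalid: the serving server picks up additional solo cost $d(p^\star,r)$ in every Type~B step.

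The fix is exactly what the paper does: decompose by whether $D$ is decreasing or increasing rather than by request type. Set $con=$ the total per-server movement while $D$ decreases (both servers always move together at equal speed in this phase, so $con_1=con_2$), and $div_i=$ server $i$'s movement while $D$ increases (this now includes both your $T_i$ and the $d(p^\star,r)$ stems from Type~B steps where $i$ was nearer). Then $c_i=con+div_i$, and your ledger telescopes exactly to $div_1+div_2-2\,con\in[-\diam,\diam]$. With these corrected quantities your algebra goes through unchanged and yields $c_1\le \tfrac34(c_1+c_2)+\tfrac14\diam$, which indeed implies the stated $(1.5,1.75\,\diam)$-fairness (and is sharper than the constant the paper obtains).
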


We prove Theorem \ref{thm:intro4} by splitting the total cost paid by each server into the \emph{diverging cost}, when the server is moving away from the other server, and the \emph{converging cost}, when the server is moving towards the other. We show in Lemma \ref{lem:DCAdivConv} that these costs both comprise a constant fraction of the total cost paid by the server. Since DCA couples converging costs (when one server moves towards the other, the other server moves as well), this allows us to argue that the servers pay the same costs up to roughly a constant multiplicative factor.

Despite the above positive results, we show that, surprisingly, DCA can exhibit substantial unfairness on general tree metrics,  forcing one server to do a majority of the total work.
\begin{theorem}[Theorem \ref{thm:DCA_additive_fair} Restated]\label{thm:intro5}
There is a tree metric and an input sequence $\sigma$ over this metric on which DCA outputs a solution in which a single server pays cost $\Omega(k \cdot \OPT(\sigma))$, where $\OPT(\sigma)$ is the offline optimal cost for request sequence $\sigma$.
\end{theorem}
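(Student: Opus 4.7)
The plan is to exhibit a concrete tree metric $T$ and request sequence $\sigma$ on which one specific server under DCA bears almost all of the total cost, while the offline optimum is much smaller. A natural candidate for $T$ is a \emph{broom}: a central hub $c$ connected by one long edge of length $D=\Theta(k)$ to a far leaf $F$, and by $k-1$ short edges of length $1$ to leaves $L_1,\ldots,L_{k-1}$. The initial configuration places $s_F$ at $F$ and $s_i$ at $L_i$. The structural fact I would leverage is that every request at $c$ (with $c$ empty) activates all $k$ servers under DCA's per-edge rule, since no server blocks any other's path to $c$; in particular, $s_F$ is dragged one unit along the long edge during each such $c$-request, while the short-leg servers each also move one unit to reach $c$.

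The proof would proceed in three steps. First, design a request sequence that repeatedly creates ``empty-$c$'' events and keeps DCA from settling into a fair steady state. A candidate is a carefully tuned interleaving of $c$-requests, $L_i$-requests (which peel short-leg servers back out of $c$ so that the next $c$-request once again finds $c$ empty and activates the long edge), and occasional $F$-requests (to shuttle $s_F$ back to $F$ so that the ``pull toward $c$'' can be induced afresh). Second, simulate DCA on this sequence, tracking $s_F$'s position via an invariant stating that after each round the configuration is essentially reset; summing $s_F$'s per-round cost across $\Omega(k)$ rounds with $D=\Theta(k)$ should yield $\mathrm{cost}(s_F)=\Omega(kD)$. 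Third, exhibit an offline strategy for OPT that never moves $s_F$: a single short-leg server is designated as a shuttle between its leaf and $c$, while all other short-leg servers and $s_F$ remain at their home positions. This yields $\mathrm{OPT}(\sigma)=O(D)$, and hence $\mathrm{cost}(s_F)=\Omega(kD)=\Omega(k\cdot\mathrm{OPT}(\sigma))$.

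The main technical obstacle lies in the second step: actually forcing DCA to concentrate $\Omega(kD)$ cost on $s_F$ rather than distribute cost evenly across all $k$ servers. DCA's coupled movement during $c$-requests (every active server moves exactly the minimum-active-distance, here just $1$) means that on straightforward periodic sequences like $(c,L_1,\ldots,L_{k-1},F)$, every server pays comparably per round, so $s_F$'s share of DCA's total cost ends up only $\Theta(1/k)$ rather than the $\Omega(1)$ needed for the theorem. Breaking this symmetry will require either an adaptively designed non-periodic sequence that exploits how DCA's deterministic tie-breaking among the tied short-leg servers at $c$ interacts with the evolving configuration, or supplementing the broom with additional structure --- for instance, introducing a bottleneck node between $c$ and the short leaves, so that not every short-leg server is simultaneously visible from $c$ --- thereby preventing DCA from achieving a balanced execution. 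Identifying the precise mechanism that amplifies $s_F$'s share of DCA's cost from $\Theta(1/k)$ to $\Omega(1)$ is the crux of the construction.
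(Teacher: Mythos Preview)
Your proposal correctly identifies the core obstacle, and you are right that the broom construction as described does not work: on a star with hub $c$, every time $c$ is empty and requested, all $k$ servers move one unit toward $c$ under DCA, so the long-edge server $s_F$ incurs no more per round than any short-leg server. Your suggested fix of adding a bottleneck does not obviously help either, since any node you insert between $c$ and the short leaves is on $s_F$'s path too; the symmetry of ``all unblocked servers move at equal speed'' is not broken by asymmetric edge lengths alone. As you yourself say, you have not found the amplification mechanism --- so at present this is a plan with its crux unresolved rather than a proof.

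The paper's construction is structurally quite different from a broom. It uses a \emph{near-line}: a spine path on nodes $1,\dots,k{+}1$ with unit edges, where each spine node $i$ has a tiny leaf $i'$ attached at distance $\varepsilon$. Servers start at $1,3,4,\dots,k{+}1$ (leaving a gap at $2$). The key mechanism is that these $\varepsilon$-leaves serve as \emph{parking spots}: a server pushed onto $i'$ is off the spine and therefore no longer blocks another server from moving past node $i$. The sequence repeatedly (i) requests a spine point between the leftmost server $s_1$ and its right neighbor $s_j$, so both converge and (by an $\varepsilon$-perturbation) $s_j$ arrives first and serves; (ii) then requests the leaf attached there, which pushes $s_j$ off the spine while $s_1$ is blocked behind it and stays put. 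Now $s_1$ can advance past $s_j$ on the next iteration. Iterating, $s_1$ traverses the entire spine and incurs cost $\Theta(k)$, while each other server moves only $O(1+\varepsilon)$. Meanwhile $\OPT$ serves the same sequence by moving $s_1$ one unit to fill the gap at $2$ and having every other server handle its own leaf at cost $O(\varepsilon)$, for a total of $O(1+k\varepsilon)$; taking $\varepsilon\to 0$ gives $\OPT(\sigma)=O(1)$, so $s_1$'s cost is $\Omega(k\cdot\OPT(\sigma))$.

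The idea you were missing is precisely this ``park and pass'' trick: rather than trying to pull one server farther than the rest via asymmetric edge lengths (which DCA's equal-speed rule neutralizes), the construction exploits blocking/unblocking on a perturbed line so that one server makes net progress along the spine while each other server is peeled off onto an $\varepsilon$-leaf and thereafter oscillates in place.
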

Note that DCA is known to be $k$-competitive on general tree metrics and thus it incurs total cost $\le k \cdot \OPT(\sigma)$. Theorem \ref{thm:intro5} shows that a constant fraction of this cost can be incurred by just a single server, and thus DCA is not $(\alpha,\beta)$-fair for any $\alpha = o(k)$ and fixed $\beta$. 

The hard case is a tree metric that is similar to a line: the tree consists of a line of \emph{major} nodes, each of which connects to a very close by \emph{minor} leaf node. Requests alternate between midpoints on the line and nearby minor nodes, repeatedly forcing the leftmost server to traverse the entire structure, past other servers which have been moved to minor nodes so that they can be `passed'. This causes the leftmost server to incur cost $\Omega(k \cdot \OPT)$, despite DCA being $k$-competitive overall.

\paragraph{Uniform Metrics.}
We next turn our attention to uniform metrics -- where all points have the same distance from each other. This variant of $k$-server is also know as the \emph{paging problem}. We prove several positive results for existing algorithms. For simplicity, assume that the metric is scaled so all points are at distance $1$ from each other. First, we prove that the classic First-In First-Out (FIFO) algorithm, which is known to be $k$-competitive on uniform metrics~\cite{SleatorT85}, achieves essentially the strongest possible notion of fairness: since the algorithm simply `cycles through' servers (i.e., cache positions) to serve any incoming request that is not already covered (i.e., to cover any cache miss), all servers pay the same cost, up to an additive factor of one. Formally,
\begin{theorem}[Theorem \ref{thm:fifo} Restated]
    FIFO is $(1,1)$-fair. \label{th:FIFOOrg}
\end{theorem}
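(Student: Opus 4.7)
}
The plan is to unpack what FIFO does in the $k$-server view of the paging problem and show that it distributes moves essentially uniformly across servers. Recall that in the uniform metric, each request either hits a point currently occupied by a server (cost $0$) or causes a fault, in which case exactly one server must be moved to the requested point at cost $1$. FIFO maintains the order in which the current $k$ occupied points were last ``brought in'', and on a fault it evicts (i.e., moves) the server whose current location was the earliest such arrival. Thus the total cost equals the number of faults, which I will denote $F = \cost(\mathrm{FIFO}, \sigma)$.

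The key structural observation I would establish is that FIFO cycles through the $k$ servers in a fixed round-robin order. First fix an ordering $s_1, s_2, \dots, s_k$ of the servers consistent with the initial ``arrival times'' of their starting positions (this is well-defined by breaking ties arbitrarily). I would argue by induction on the fault index $j \ge 1$ that the server moved on the $j$-th fault is $s_{((j-1) \bmod k) + 1}$. The base case $j = 1$ holds by definition of the initial ordering. For the inductive step, after fault $j$ the server that was just moved has the most recent arrival time, so it is pushed to the back of the FIFO queue; the remaining servers retain their relative order. Consequently, on fault $j+1$ the server at the front of the queue is the next one in the cyclic order, namely $s_{(j \bmod k) + 1}$.

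Given this round-robin structure, the number of moves made by server $s_i$ after $F$ total faults is either $\lfloor F/k \rfloor$ or $\lceil F/k \rceil$. Since every move costs exactly $1$ in the uniform metric, the cost $c_i(\mathrm{FIFO}, \sigma)$ of any single server satisfies
\[
c_i(\mathrm{FIFO}, \sigma) \;\le\; \left\lceil \frac{F}{k} \right\rceil \;\le\; \frac{\cost(\mathrm{FIFO}, \sigma)}{k} + 1,
\]
which is exactly the $(\alpha,\beta) = (1,1)$ fairness guarantee.

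There is no real obstacle here beyond stating the round-robin claim carefully; the only subtlety is the initial configuration, which I would handle by declaring a fixed tie-breaking order among the starting positions so that FIFO's queue is well-defined from the outset. Everything else follows mechanically from the fact that on a uniform metric each fault has unit cost, so fairness in cost reduces to fairness in the \emph{count} of moves per server, which FIFO equalizes up to one.
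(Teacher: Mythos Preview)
Your proposal is correct and takes essentially the same approach as the paper: both arguments rest on the observation that FIFO services faults in a fixed cyclic order over the servers. You phrase this as an explicit round-robin induction on the fault index, while the paper phrases it as a pairwise interleaving property (for any two servers $A,B$ with fault indices $a_i,b_i$, one has $a_i<b_i<a_{i+1}<b_{i+1}$), but these are two presentations of the same structural fact, and both immediately yield that any two servers' fault counts differ by at most one.
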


We next observe (Theorem \ref{thm:lru}) that the popular Least Recently Used (LRU) algorithm does not enjoy such a strong fairness guarantee: there are input instances where one server pays essentially all of the cost of the algorithm (i.e., one cache slot is reloaded for essentially all cache misses seen). However, LRU, and in fact any of the broader class of \emph{marking algorithms}, still achieves an interesting notion of fairness: no server pays cost greater than $\OPT(\sigma)$ on request sequence $\sigma$. I.e., no server pays more than what the average cost per server might be on a worst-case input sequence with competitive ratio $k$. Formally:
\begin{theorem}[Theorem \ref{thm:marking} Restated]\label{thm:intro_marking}
For any marking algorithm for the $k$-server problem on uniform metrics, the cost incurred by any given server on input $\sigma$ is at most $\OPT(\sigma)$. 
\end{theorem}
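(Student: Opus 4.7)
The plan is to combine a structural consequence of the marking property with a phase-counting lower bound on $\OPT$. First, I would argue that each server faults at most once per phase of the marking algorithm, so $C_i \le P$ where $C_i$ denotes the cost of server $i$ and $P$ is the total number of phases in the execution. This follows directly from the definition of a marking algorithm: when server $i$ is selected for eviction, its current page must be unmarked, and the freshly loaded page is immediately marked; since marked pages are never evicted for the remainder of the phase, server $i$ is protected from any further eviction until the phase ends.

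Next, the plan is to bound $P \le \OPT(\sigma)$ (up to a small additive slack) using the standard ``$k+1$-distinct-pages'' lemma. Let $S_j$ denote the set of $k$ distinct pages requested during phase $j$, and let $p_j$ denote the first request of phase $j$. For each phase index $j \in \{2,\dots,P\}$, the set $S_{j-1} \cup \{p_j\}$ contains $k+1$ distinct pages since $p_j \notin S_{j-1}$ by the phase definition; all of these pages are requested in the contiguous window spanning phase $j-1$ together with the opening request of phase $j$. Since OPT's cache has only $k$ slots, it must incur at least one fault on this window. Summing these per-window lower bounds, with care to avoid double-counting a single OPT fault that might fall exactly at a phase boundary, yields $\OPT(\sigma) \ge P - O(1)$. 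Combined with the first step, this gives $C_i \le P \le \OPT(\sigma)$.

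The main obstacle is obtaining the tight phase-counting bound. A naive application of the $k+1$-distinct-pages lemma to closed intervals spanning consecutive phase starts loses a factor of two, since a single OPT fault at a boundary time can be charged to two consecutive intervals. To close this gap, I would work with half-open intervals and argue by amortization that, whenever such an interval happens to contain only $k$ distinct pages (which can occur if $p_{j-1}$ is not re-requested inside phase $j-1$), the structure of the subsequent phase forces OPT to pay an additional fault that can be credited to this transition. An alternative, potentially cleaner route is a direct per-server charging argument: each fault of server $i$ loads a page in $S_j \setminus S_{j-1}$, and I would try to inject these loads into OPT's fault events by tracking how OPT's cache must evolve to keep up with the successive phase-set changes. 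Making either route airtight is the main technical step I would need to carry out carefully.
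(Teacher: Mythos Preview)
Your approach matches the paper's proof exactly: phase-partition $\sigma$, observe that each server faults at most once per phase (since its newly loaded page is marked and hence protected until the next unmarking), and combine with the standard lower bound $\OPT(\sigma)\ge m_\sigma$. The paper asserts this last inequality in a single line without engaging with the double-counting issue you raise, so the ``main obstacle'' you describe already goes beyond the level of detail the paper provides.
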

This result contrasts e.g., with Theorem \ref{thm:intro5}, which shows that for DCA on general tree metrics, one server may pay up to $k$ times the optimal offline cost.
The proof follows a similar approach to how marking algorithms are proven to be $k$-competitive: we argue that between any two cache misses that cause the same cache position to be reloaded, at least $k$ unique files must be requested. Thus, the sequence containing these two misses must cause at least one miss even for the optimal offline solution. Iterating this argument, no cache position (i.e., server) has more faults than $\OPT(\sigma)$ over the full input sequence. 




\paragraph{Metrics with $k+1$ Points.} Finally, we consider metrics with $k+1$ points, which have been widely studied in the literature on the $k$-server problem. It is known that no deterministic algorithm can achieve a competitive ratio better than $k$ on any metric with $k+1$ points~\cite{MMS88}. It is also well known that the \textsc{Balance} algorithm matches this optimal competitive ratio~\cite{MMS88}. Recall that this algorithm is inherently fair:  it always moves the server whose total cost, after potentially serving the current request, is minimum among all servers. We can therefore conclude the following:

\begin{proposition}
\textsc{Balance} is a $(1,\diam)$-fair algorithm for metrics with $k+1$ points. \label{prop:balance}
\end{proposition}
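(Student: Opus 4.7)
The plan is to establish the stronger invariant that throughout the execution of \textsc{Balance}, the cost gap between any two servers is at most $\diam$, i.e. $\max_i c_i(t) - \min_j c_j(t) \le \diam$ for every time $t$, where $c_i(t)$ denotes the cumulative cost paid by server $i$ up through step $t$. This invariant immediately yields the proposition: since the arithmetic mean $\frac{1}{k}\sum_i c_i$ is sandwiched between $\min_i c_i$ and $\max_i c_i$, we get
\[
\max_i c_i \ \le\ \min_j c_j + \diam \ \le\ \frac{1}{k}\sum_i c_i + \diam \ =\ \frac{\cost(\textsc{Balance},\sigma)}{k} + \diam,
\]
which is exactly $(1,\diam)$-fairness.

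I would prove the invariant by induction on the number of requests served. The base case is trivial since all servers start with cost $0$. For the inductive step, suppose the invariant holds before serving request $r$ at time $t+1$, and let $i^\ast$ be the server \textsc{Balance} chooses to move. The defining property of \textsc{Balance} is $c_{i^\ast} + d(p_{i^\ast}, r) \le c_j + d(p_j, r)$ for every $j$, and since $d(p_j,r) \le \diam$, this yields
\[
c_{i^\ast}^{\text{new}} \ =\ c_{i^\ast} + d(p_{i^\ast}, r)\ \le\ c_j + \diam \ =\ c_j^{\text{new}} + \diam \quad \text{for all } j \neq i^\ast,
\]
so the moved server's post-move cost exceeds any other server's cost by at most $\diam$. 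Now split into two cases based on where the new maximum lies. If the new maximum is attained at $i^\ast$, then the displayed inequality, applied to the $j$ attaining the new minimum, gives new max $-$ new min $\le \diam$. Otherwise, the new maximum is attained at some unchanged server $j^\ast \neq i^\ast$, so new max $\le$ old max; and since only $i^\ast$'s cost changed (and only weakly increased), new min $\ge$ old min. Hence new max $-$ new min $\le$ old max $-$ old min $\le \diam$ by the inductive hypothesis.

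The proof is short and essentially follows from unpacking the greedy rule, so there is no real obstacle. The only subtlety is the case analysis in the inductive step, where one must be careful that the new minimum is not accidentally lower than the old minimum (it is not, because costs only weakly increase) and that when the moved server is the post-move maximum, the bound $c_{i^\ast}^{\text{new}} \le c_j^{\text{new}} + \diam$ is applied with $j$ chosen to attain the new minimum. Once the invariant is in hand, the $(1,\diam)$-fairness conclusion is a one-line averaging argument, and no further properties of the $(k{+}1)$-point metric are needed beyond the diameter bound $d(p_j,r) \le \diam$.
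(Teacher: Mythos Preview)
Your proof is correct and takes essentially the same approach as the paper. The paper does not give a formal proof of this proposition at all; it simply remarks that \textsc{Balance} ``is inherently fair: it always moves the server whose total cost, after potentially serving the current request, is minimum among all servers'' and asserts the proposition as an immediate consequence. Your argument is precisely the formalization of this remark---the inductive invariant $\max_i c_i(t)-\min_j c_j(t)\le\diam$ together with the averaging step is the natural way to make the paper's one-sentence justification rigorous, and your observation that the $(k{+}1)$-point assumption is used only for competitiveness, not for fairness, is also correct.
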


     \subsection{Roadmap}

The remainder of the paper is organized as follows. Section~\ref{sec:setup} introduces the $k$-server problem and formalizes our $(\alpha, \beta)$-fairness notion. We explore alternative notions of fairness in Appendix~\ref{sec:beyond}. 
Section~\ref{sec:offline} presents our offline $(\alpha,\beta$)-fair algorithm and Section~\ref{sec:online} presents our randomized online algorithm. Section~\ref{sec:classic} evaluates the fairness of deterministic algorithms for particular metric spaces, such as DCA, FIFO, and  LRU. We conclude the paper with open questions in Section~\ref{sec:conclusion}. 

\section{Problem Setup and \texorpdfstring{$(\alpha,\beta)$}{(alpha,beta)}-Fairness}
\label{sec:setup}

In this section, we formally define the $k$-server problem and introduce notation used throughout the paper. We then define the central fairness notion that we consider.


\smallskip

\noindent\textbf{The $k$-server problem:} The $k$-server problem is defined over a metric space $(M, d)$ with $k$ servers, labeled $1,\ldots,k$, initially located at positions $s_1(0), \dots, s_k(0) \in M$. At each time step $t = 1, 2, \dots, T$, a request $\sigma_t \in M$ arrives. The algorithm must choose one server to move to the requested location, incurring a cost equal to the distance moved -- i.e., $d(s_i(t-1),\sigma_t)$ if server $i$ is moved and was previously at position $s_i(t-1)$. The total cost is the sum of the costs over the full sequence of requests $\sigma = (\sigma_1, \dots, \sigma_T)$.

\smallskip

\noindent \textbf{Cost notation:} Let $A$ be a $k$-server algorithm. For a fixed request sequence $\sigma$, let $c_i(A,\sigma,t)$ denote the distance that server $i$ moves at time $t$ under algorithm $A$, and define
$
c_i(A,\sigma) = \sum_{t=1}^T c_i(A,\sigma,t)$ 
as the total cost incurred by server $i$. The total cost of the algorithm is $
\text{cost}(A, \sigma) = \sum_{i=1}^k c_i(A,\sigma).$

We denote by $\OPT(\sigma)$ the minimum total cost achievable by any offline algorithm on input $\sigma$.
When $A$ and $\sigma$ are clear from context, we will sometimes drop them as arguments, using $c_i(t)$, $c_i$, $\text{cost}$, and $OPT$ to refer to the above quantities. 

\smallskip

\noindent \textbf{Competitive ratio:} An algorithm $A$ is said to be \emph{$c$-competitive} if there exists a constant $C \ge 0$ (possibly depending on $k$ and on the underlying metric space) such that for all input sequences $\sigma$,
$
\text{cost}(A, \sigma) \leq c \cdot \OPT(\sigma) + C.$ 
For randomized algorithms, we consider the \emph{expected} cost and say that $A$ is \emph{$c$-competitive in expectation} if
$
\mathbb{E}[\text{cost}(A, \sigma)] \leq c \cdot \OPT(\sigma) + C.
$

The $k$-server problem is typically studied in the \emph{online setting}, where an algorithm must make its decision at time $t$ after seeing request $\sigma_t$, but before seeing any later requests. It is well known that no deterministic online algorithm can achieve competitive ratio $< k$ for general metrics~\cite{KP95}. This lower bound is matched on tree metrics by the well-known Double Coverage Algorithm (DCA)~\cite{CL91b}, and up to a constant factor by the Work Function Algorithm (WFA), which is $2k-1$ competitive on general metric spaces~\cite{KP95, koutsoupias1994phd}, and $k$-competitive on line metrics, star metrics, and metric spaces with $k+2$ points~\cite{Bartal2004}. For randomized algorithms against an oblivious adversary, the best known upper bounds are polylogarithmic in $k$ and $n$~\cite{bansal2015polylogarithmic,bubeck2018k}. Recently, the long-standing conjecture that $O(\log k)$ is achievable was disproved; the randomized competitive ratio is now known to be $\Omega(\log^2 k)$ in some metrics~\cite{Bubeck2023}. 






\subsection{\texorpdfstring{$(\alpha,\beta)$}{(alpha,beta)}-Fairness}

We now define our central fairness notion, which requires that no individual server incurs significantly more than its proportional share of the total cost.


\begin{definition}[$(\alpha, \beta)$-Fairness]
\label{def:fairness}
Let $A$ be a $k$-server algorithm, and let $w(\sigma)$ be a reference cost baseline associated with request sequence $\sigma$. We say that $A$ is \emph{$(\alpha, \beta)$-fair with respect to $w$} if for all input sequences $\sigma$ and all servers $i \in [k]$,
\begin{equation}
c_i(A, \sigma) \leq \frac{\alpha \cdot w(\sigma)}{k} + \beta,
\label{eq:fairness_general}
\end{equation}
where $c_i(A, \sigma)$ denotes the total cost incurred by server $i$ under algorithm $A$ on input $\sigma$.

\medskip
\textbf{Randomized case.}
A randomized algorithm $A$ is \emph{$(\alpha, \beta)$-fair with probability $1-\delta$ with respect to $w$} if for every input sequence $\sigma$,
\begin{equation}
\Pr\left[\, \forall i \in [k],\;
c_i(A, \sigma) \le \frac{\alpha \cdot w(\sigma)}{k} + \beta \,\right]
\ge 1-\delta.
\end{equation}
When $\delta \le 1/\mathrm{poly}(k,|\sigma|)$, we simply say that $A$ is $(\alpha,\beta)$-fair \emph{with high probability}.
\end{definition}

Note that our general definition allows fairness to be expressed relative to a chosen baseline $w$, such as the optimal offline cost $\OPT(\sigma)$ or the algorithm's own total cost, $\text{cost}(A, \sigma)$. 

\begin{remark}[Fairness v.s. Competitiveness]
We remark that when $w(\sigma) = \text{cost}(A, \sigma)$, $(\alpha,\beta)$-fairness is not an interesting notion unless the algorithm $A$ is also required to have a bounded competitive ratio. Otherwise, we could have all servers move in synchrony and achieve perfect fairness, but at the cost of competitiveness. Roughly speaking, in the online setting, given an algorithm $A$ that is $\alpha$ competitive, and assuming that each time a server moves to serve request $\sigma_t$ we move all other servers an equal amount (but do not change their positions), then we have a perfectly fair algorithm that is $\alpha \cdot k$-competitive. Thus, we will always look for fair algorithms that beat this trivial baseline.
\end{remark}


\section{A Fair Offline Algorithm}
\label{sec:offline}

We first show that in the offline setting, we can obtain a solution that is both near-optimal in cost and provably fair across servers. In particular, we design a transformation (Algorithm~\ref{alg:fair_cost}) that takes any (possibly unfair) optimal offline solution and converts it into a solution that is $( 1 + \varepsilon, \beta)$-fair for $\beta = O(\diam \cdot \log k / \varepsilon)$. The conversion increases the total cost by, at most an additive constant.
The key idea is to redistribute cost among servers using a sequence of pairwise swaps that gradually balance the load. The transformation is simple, deterministic, and black-box—it does not require any specific structure of the underlying optimal solution. 

\begin{theorem}[$( 1 + \varepsilon,\beta)$-Fair Transformation for Offline $k$-Server]
\label{thm:fairness}
Consider any $k$-server input sequence 
$\sigma$ of length $T$ over a metric space of diameter $\diam$. Let $\{{s}_i(t)\}_{i \in [k], t \in [T]}$ be a sequence of server positions that achieves optimal cost on this input sequence. 

For any $\varepsilon > 0$, Algorithm~\ref{alg:fair_cost} takes $\{{s}_i(t)\}_{i \in [k], t \in [T]}$ as input and returns a modified sequence of positions $\{\hat {s}_i(t)\}_{i \in [k], t \in [T]}$ that is $( 1 + \varepsilon, \beta)$-fair with respect to $\OPT(\sigma)$, for 
$
\beta =  O\left( \frac{\log k \cdot \diam}{\varepsilon} \right)$. 
Moreover, the total cost of the output solution is $\OPT(\sigma) +  O \left (\frac{k \log k \cdot \diam}{\varepsilon}\right ).$
\end{theorem}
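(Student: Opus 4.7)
The plan is to introduce a \emph{pairwise swap} operation and iteratively apply it to equalize server loads. A swap at time $z$ between servers $i$ and $j$ teleports each server to the other's current position (paying at most $\diam$ per server in extra movement) and then, for every $t > z$, has server $i$ take over server $j$'s original trajectory and service assignments, and vice versa. This yields a legal $k$-server solution whose total cost exceeds the original by at most $2\diam$, distributed as at most $\diam$ of additive slack per participating server. Algorithm~\ref{alg:fair_cost} will iteratively apply such swaps, starting from a given optimal solution, until the $(1+\varepsilon,\beta)$-fairness condition is met.

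First, I will show that for any pair of servers $H$ (current max) and $L$ (current min), there is a swap time $z^\ast$ such that the two post-swap costs are both within $O(\diam)$ of their average $(c_H+c_L)/2$. Let $c_i(z)$ denote the cost accumulated by server $i$ through time $z$ in the pre-swap solution. After swapping at time $z$, server $H$'s new total cost is
\[
f(z) \;:=\; c_H(z) + (c_L - c_L(z)) \;+\; O(\diam).
\]
The function $f$ satisfies $f(0) = c_L + O(\diam)$ and $f(T) = c_H + O(\diam)$, and $|f(z)-f(z-1)| \le \diam$ since at most one server moves per time step and any single step costs at most $\diam$. A discrete intermediate value argument then gives a $z^\ast$ with $f(z^\ast) \in [(c_H+c_L)/2 - \diam,\;(c_H+c_L)/2 + \diam]$; after the swap, both $H$ and $L$ have cost $(c_H+c_L)/2 + O(\diam)$.

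Next, I will analyze the main loop of Algorithm~\ref{alg:fair_cost}, which repeatedly picks the current max-cost server $H$ and min-cost server $L$ and swaps them at such a $z^\ast$, halting once fairness is achieved. The key claim, which I will prove, is that while $(1+\varepsilon,\beta)$-fairness is violated (i.e., $c_H > (1+\varepsilon)\bar c + \beta$ for the current average $\bar c$) we have $c_L \le \bar c$, so the post-swap cost of $H$ is at most $(c_H + \bar c)/2 + O(\diam)$. Tracked over the swaps a server participates in, this drives the excess above the average down geometrically, so after $O(\log k / \varepsilon)$ swaps involving a given server its cost lies in the target band $[\bar c - O(\beta),\; (1+\varepsilon)\bar c + \beta]$.

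The main obstacle is the \emph{global} potential analysis: any single swap only repairs two servers and may temporarily promote a different server to be the new max, so I need an invariant that survives over the whole run of the algorithm. I will handle this by a potential argument combining $\max_i c_i - \bar c$ with a count of how many servers have already been brought into the target band, showing that once a server is balanced it remains within the band in all subsequent swaps (because future swaps only shrink the inter-server gap further, up to $O(\diam)$ per swap which is absorbed into $\beta$). This yields a total swap count of $O(k \log k / \varepsilon)$; since each swap contributes $O(\diam)$ both to the total cost and to each participating server's additive slack, the resulting guarantees are total-cost increase $O(k \diam \log k / \varepsilon)$ and per-server additive term $\beta = O(\diam \log k / \varepsilon)$. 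Fairness relative to $\OPT(\sigma)$ follows because $\bar c \le \OPT(\sigma)/k + O(\diam \log k / \varepsilon)$, and the extra $O(\diam \log k / \varepsilon)$ is absorbed into $\beta$, yielding Theorem~\ref{thm:fairness}.
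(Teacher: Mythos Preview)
Your overall strategy---the pairwise swap operation, the discrete intermediate-value argument to find a balancing time $z^\ast$, and iterating until fairness---matches the paper's approach and is correct up to the point you yourself flag as the ``main obstacle.'' Your proposed resolution of that obstacle, however, has a genuine gap.

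The invariant ``once a server is balanced it remains within the band'' is false. Suppose server $j$ currently has cost well inside the target band (even $c_j = 0$) and is chosen as $L$, while the current max $H$ has cost, say, $10\bar c$. After the swap both servers land near $(c_H + c_L)/2 \approx 5\bar c$, so $j$ has been thrown far above the band. The justification you offer (``future swaps only shrink the inter-server gap'') is true for the global $\max - \min$ gap, but that says nothing about an individual server's cost staying below the $(1+\varepsilon)\bar c + \beta$ threshold. Consequently, the per-server accounting (``$O(\log k/\varepsilon)$ swaps per server, hence $O(k\log k/\varepsilon)$ total'') does not go through as stated.

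The paper sidesteps this by never tracking individual servers. Instead it proves (i) after a swap, \emph{both} participating servers have cost at most $\frac{2+\varepsilon}{2+2\varepsilon}\cdot c_H^{\text{old}}$ (using that while the loop runs $c_H > (1+\varepsilon)\OPT/k + \beta$ and $c_L \le (\OPT + 2r\,\diam)/k$, and choosing $\beta$ large enough to absorb the additive $\diam$ terms); (ii) the global max is non-increasing; and (iii) by pigeonhole, over any $k+1$ consecutive rounds some server is chosen as max twice, which forces the global max to drop by the factor $\frac{2+\varepsilon}{2+2\varepsilon}$ over those $k+1$ rounds. This gives $O(k\log_{(2+2\varepsilon)/(2+\varepsilon)} k) = O(k\log k/\varepsilon)$ rounds directly, without any invariant about balanced servers staying balanced. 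Replacing your per-server potential with this global-max-plus-pigeonhole argument closes the gap.
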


\begin{proof}[Proof Sketch]
Algorithm \ref{alg:fair_cost} iteratively reduces the maximum server cost in the input solution by performing a series of swaps between the maximum and minimum cost servers. Each swap exchanges the servers' remaining work from a chosen time $z$ onward. Using a balancing lemma (Lemma~\ref{lem:l1}), we show that there is always a choice of a swap that reduces the cost difference between the swapped servers to at most an additive $\diam$ factor. These swaps each add at most $2\cdot \diam$ to the total cost of the new algorithm. Further, note that if the solution is not yet fair, the maximum server load is at least $\approx 1+\varepsilon$ times as large as the minimum. This allows us to argue that after a swap, both servers' costs are less than the previous maximum cost by roughly a $1/(1+O(\varepsilon))$ multiplicative factor (Claim~\ref{claim1}). We also observe that the maximum server load always decreases with a swap and that non-swapped servers retain their cost (Claim~\ref{claim2}).
We proceed by considering any window of $k+1$ swaps. By the pigeonhole principle, at least one server must appear twice as the maximum-cost server in a swap. This server’s cost must have decreased by a $\approx 1/(1 + O(\varepsilon))$ factor between these two swaps. Thus, we can argue that the maximum cost drops by a factor of at least $(1 + \varepsilon)$ every $k+1$ rounds (Claim~\ref{claim3}). Repeating this process for $O \left (k \cdot \log_{1+\varepsilon} k\right ) = O\left( \frac{ k \log k}{\varepsilon} \right)$ swaps in order to reduce a potential initial multiplicative imbalance for $k$ to $(1+\varepsilon)$ yields a solution satisfying $(1 + \varepsilon, \beta)$-fairness. See Appendix~\ref{apx:fairness} for full details.
\end{proof}

\begin{algorithm}[t]\LinesNumbered
\caption{Offline Fair $k$-server via Pairwise Swapping}
\label{alg:fair_cost}

\KwIn{Optimal server positions $\{s_i(t)\}_{i \in [k], t \in [T]}$ with corresponding cost allocation $\{{c}_i(t)\}_{i \in [k], t \in [T]}$ and parameter $\varepsilon$}
\KwOut{Modified sequence of positions $\{\hat s_i(t)\}_{i \in [k], t \in [T]}$ satisfying $(1+\varepsilon, \beta)$-fairness.}

Set round counter $r \gets 0$ \label{line:init-round} \\
Initialize $\hat s_i^{(0)}(t) \gets {s}_i(t)$ and $\hat c_i^{(0)}(t) \gets {c}_i(t)$ for all $i \in [k], t \in [T]$ \\
$\beta \gets2 \cdot (1+\varepsilon) \cdot diam \cdot (\frac{3}{2}+\log_{\frac{2+2\varepsilon}{2+\varepsilon}} k)$\\

\While{$\max_{i \in [k]} \sum_{t=1}^T \hat c_i^{(r)}(t) > \frac{ (1 + \varepsilon) \cdot \OPT(\sigma)}{k} + \beta$}{
  Let $H_r \gets \arg\max_{i \in [k]} \sum_{t=1}^T \hat c_i^{(r)}(t)$ \label{line:mnmx} \tcp*{Heaviest-loaded server}
  Let $L_r \gets \arg\min_{i \in [k]} \sum_{t=1}^T \hat c_i^{(r)}(t)$ \tcp*{Lightest-loaded server}

  Let $z \in [T]$ be the any index such that the difference in the total costs of $H_r$ and $L_r$ after a swap at time $z$ become less than diam  \tcp*{Such a $z$ always exists (Lemma~\ref{lem:l1})}

  Initiate current cost/locations for all $i, t$: $\hat c_i^{(r+1)}(t) \gets \hat c_i^{(r)}(t), \hat s_{i}^{(r+1)}(t) \gets \hat s_{i}^{(r)}(t)$\label{line:copy}

  \For{$t > z$}{ 
    $\hat s_{H_r}^{(r+1)}(t) \gets \hat s_{L_r}^{(r)}(t) , $ $\hat s_{L_r}^{(r+1)}(t) \gets \hat s_{H_r}^{(r)}(t)$  \tcp*{swap positions after $z$}
    $\hat c_{H_r}^{(r+1)}(t) \gets \hat c_{L_r}^{(r)}(t) , $ $\hat c_{L_r}^{(r+1)}(t) \gets \hat c_{H_r}^{(r)}(t)$ \label{line:swapc}\tcp*{swap costs after $z$}  
  }
  $\hat c_{H_r}^{(r+1)}(z+1) \gets  d(\hat s_{H_{r}}^{(r+1)}(z+1), \hat s_{H_{r}}^{(r+1)}(z))$ \tcp*{swap penalty}
  $\hat c_{L_r}^{(r+1)}(z+1) \gets   d(\hat s_{L_{r}}^{(r+1)}(z+1), \hat s_{L_{r}}^{(r+1)}(z))$ \tcp*{swap penalty}

  Update round: $r \gets r + 1$
}
\Return{$\{\hat c_i^{(r)}(t)\}$ and $\{\hat s_i^{(r)}(t)\}$}

\end{algorithm}

\section{A Fair Randomized Online Algorithm} 
\label{sec:online}

We next show how to achieve fairness with minimal loss in competitiveness in the online setting with a randomized algorithm. 
Similar to Algorithm~\ref{alg:fair_cost}, we assume access to some existing online $k$-server algorithm $A$ with bounded competitive ratio, but potentially with no fairness guarantees. We use a series of random server swaps to balance the servers' loads and make the algorithm fair. 

At each swap point, we randomly permute all server identifiers, swap the server positions according to the permutation, and reassign the workloads of the servers going forward according to the permutation as well. Thus, between swaps, by symmetry, the expected cost incurred by all servers is the same. Via concentration inequalities, as long as we swap sufficiently often, we can prove that all servers pay similar costs with high probability, establishing $(\alpha,\beta)$-fairness.
The key challenge is determining how often to swap -- too many swaps will increase the cost of the original algorithm too much, but too few swaps will mean that fairness is not guaranteed with high probability. 
To balance this trade-off, we introduce a phase-based algorithm (Algorithm \ref{alg:randomized_fairness}) that swaps anytime the total cost incurred has increased by a large enough factor. We prove the following:

\begin{algorithm}[t]\LinesNumbered
\caption{Online Fair $k$-Server via Phased Random Swapping}
\label{alg:randomized_fairness}
\KwIn{Online request sequence $\sigma = (\sigma_1, \sigma_2, \dots)$, $k$-server algorithm $A$, initial server configuration $(s_1(0),\ldots,s_k(0))$, phase exponent $\gamma > 0$}
Initialize:  phase number $\ell \gets 1$, phase budget $\phi_\ell \gets \ell^\gamma$, and cumulative cost $U_\ell \gets 0$. \\
Reposition servers via a random permutation $\pi_\ell$: for all $i \in [k]$, set $s_i(0) \gets s_{\pi_\ell(i)}(0)$ where $\pi_\ell$ is chosen uniformly at random.

\ForEach{request $\sigma_t$}{
    Serve $\sigma_t$ using algorithm $A$ on servers labeled before repoistion.$\{s_{\pi^{-1}_\ell(i)}(t)\}.$
    \\
    Update $U_\ell \gets U_\ell +cost(A,t)$.\label{line:update_u}

    \If{$U_\ell \geq \phi_\ell$}{
        $\ell \gets \ell + 1$ \label{line:phase_update}, $\phi_\ell \gets \ell^\gamma$, and $U_\ell \gets 0$. \\
        Reposition servers via a random permutation $\pi_\ell$: for all $i \in [k]$, set $s_i(t) \gets s_{\pi_\ell(i)}(t)$ where $\pi_\ell$ is chosen uniformly at random .\label{line:permute}
    }
}
\end{algorithm}

\begin{theorem}[Randomized Online Fairness Guarantee]
\label{thm:randomized}
Let $A$ be any online $k$-server algorithm and let $\sigma$ be any request sequence. Then, Algorithm~\ref{alg:randomized_fairness} with input $A$ and  parameter $\gamma > 0$ satisfies: for any $\varepsilon > 0$, with probability at least $1-  k \cdot \exp\left( - \Omega\left(\frac{\varepsilon^2}{\gamma \cdot k} \cdot  \text{cost}(A, \sigma)^{1/(\gamma+1)} \right) \right),$ each server incurs cost at most
    \[
    (1 + \varepsilon) \cdot \frac{\text{cost}(A, \sigma)}{k} + O\left( \text{cost}(A, \sigma)^{1/(\gamma+1)} \cdot \diam \right).
    \]
    Further, the total cost across all servers is bounded by $\text{cost}(A, \sigma) + O\left( \text{cost}(A, \sigma)^{1/(\gamma+1)} \cdot k \cdot \diam \right)$.

%
\end{theorem}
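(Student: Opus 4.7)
The plan is to decompose each server's total cost into within-phase movement and phase-transition swap cost, and to bound each contribution separately. To count the phases, I would use that $\sum_{\ell=1}^{s-1}\phi_\ell \le \cost(A,\sigma)$ together with the integral estimate $\sum_{\ell=1}^s \ell^\gamma = \Theta(s^{\gamma+1}/(\gamma+1))$, which gives $s = O(\cost(A,\sigma)^{1/(\gamma+1)})$. Each phase transition repositions every server by at most $\diam$, so every server pays at most $s\cdot\diam = O(\cost(A,\sigma)^{1/(\gamma+1)}\cdot\diam)$ in swap cost, and the total added cost across all servers is $O(k\cdot\cost(A,\sigma)^{1/(\gamma+1)}\cdot\diam)$. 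This already delivers the theorem's total-cost bound and accounts for the additive $\beta$ in the per-server guarantee.

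For the within-phase work, the key observation is that algorithm $A$ is label-agnostic: its moves in phase $\ell$ are determined by the server positions at the start of the phase, not by the labels. Hence the per-slot cost vector $(c_1^{(\ell)},\ldots,c_k^{(\ell)})$ realized in phase $\ell$ is fixed before the permutation $\pi_\ell$ is drawn. After applying $\pi_\ell$, server $i$ is assigned the random within-phase cost $Y_{i,\ell}=c_{\pi_\ell(i)}^{(\ell)}$, a uniform draw from the multiset of per-slot costs. Writing $C_\ell:=\sum_j c_j^{(\ell)}$ and noting $C_\ell\le \phi_\ell+\diam$ (the first request that pushes $U_\ell$ past $\phi_\ell$ costs at most $\diam$), we get $\E[Y_{i,\ell}]=C_\ell/k$ and $\E\bigl[\sum_\ell Y_{i,\ell}\bigr]=\cost(A,\sigma)/k$. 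Because $\pi_1,\pi_2,\ldots$ are independent, the variables $\{Y_{i,\ell}\}_\ell$ are independent for fixed $i$, and their variance satisfies $\mathrm{Var}(Y_{i,\ell})\le (\max_j c_j^{(\ell)})\cdot\E[Y_{i,\ell}]\le C_\ell^2/k\le (\phi_\ell+\diam)^2/k$.

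I would then apply Bernstein's inequality to $\sum_\ell Y_{i,\ell}-\cost(A,\sigma)/k$ with deviation $t=\varepsilon\cdot\cost(A,\sigma)/k$. The total variance is $\sum_\ell \mathrm{Var}(Y_{i,\ell}) = O(s^{2\gamma+1}/(\gamma k))$ via the integral estimate on $\sum_\ell \ell^{2\gamma}$, and each summand has range $B=O(s^\gamma)$. Substituting $s=\Theta(\cost(A,\sigma)^{1/(\gamma+1)})$, both the variance term and the $Bt$ term in the Bernstein denominator scale as $\Theta(\cost(A,\sigma)^{(2\gamma+1)/(\gamma+1)}/(\gamma k))$, so the exponent simplifies to $-\Omega\bigl(\varepsilon^2\cost(A,\sigma)^{1/(\gamma+1)}/(\gamma k)\bigr)$. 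A union bound over the $k$ servers yields the failure probability stated in the theorem, and combining the deviation bound on $\sum_\ell Y_{i,\ell}$ with the deterministic swap contribution $Z_i\le s\cdot\diam$ produces the claimed per-server bound.

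The main obstacle is obtaining the right concentration rate. A naive Hoeffding bound using only the range $\phi_\ell+\diam$ of $Y_{i,\ell}$ gives an exponent weaker by a factor of $k$; exploiting the variance bound $\mathrm{Var}(Y_{i,\ell})\le C_\ell^2/k$, which holds precisely because $Y_{i,\ell}$ is a uniform draw from a nonnegative vector summing to $C_\ell$, is essential for landing on the exponent $\cost(A,\sigma)^{1/(\gamma+1)}/k$ rather than $\cost(A,\sigma)^{1/(\gamma+1)}/k^2$. A secondary bookkeeping subtlety is handling the per-phase overshoot of at most $\diam$ when $U_\ell$ crosses $\phi_\ell$, but this contributes only lower-order $\diam$ terms that are absorbed into $\beta$ and into the global cost bound.
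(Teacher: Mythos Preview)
Your proposal is correct and follows essentially the same route as the paper: count phases via $\sum_{\ell}\ell^\gamma=\Theta(s^{\gamma+1})$, bound swap cost by $s\cdot\diam$ per server, bound per-phase variance by $\phi_\ell^2/k$, apply Bernstein with $M=\phi_s$ and deviation $\varepsilon\cdot\cost(A,\sigma)/k$, and union bound over the $k$ servers. Your explicit articulation of why $A$ being label-agnostic makes the per-slot cost vector deterministic before $\pi_\ell$ is drawn, and your remark that Hoeffding alone would lose a factor of $k$ in the exponent (so the $1/k$ in the variance bound is essential), are both correct and are points the paper leaves implicit.
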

In terms of $(\alpha,\beta)$-fairness, Theorem \ref{thm:randomized} implies that Algorithm \ref{alg:randomized_fairness} is $(\alpha,\beta)$-fair  with respect to $\text{cost}(A,\sigma)$ for $\alpha = 1+\varepsilon$, $\beta = O\left( \text{cost}(A, \sigma)^{1/(\gamma+1)} \cdot \diam \right)$ with high probability. It pays total cost penalty $O\left( \text{cost}(A, \sigma)^{1/(\gamma+1)} \cdot k \cdot \diam \right)$. Both $\beta$ and this total cost increase grow sublinearly in $\text{cost}(A, \sigma)$, and can be regarded as lower order. 
 The success probability depends on $k$, $\gamma$, and $\varepsilon$, but tends to one as $\text{cost}(A,\sigma)$ tends to infinity.


\begin{proof}[Proof Sketch]
Algorithm~\ref{alg:randomized_fairness} divides the input sequence into phases, with each phase ending once the base algorithm $A$ accumulates cost at least $\phi_\ell = \ell^\gamma$ (line~\ref{line:phase_update}). At the start of each phase (line~\ref{line:permute}), the algorithm applies a random permutation to the server identities and exchanges their positions. This guarantees that the expected share of the phase cost is uniformly distributed across servers, regardless of their prior history.

Let $m$ denote the number of phases. Since each phase contributes $\phi_\ell = \ell^\gamma$ cost and $\sum_{\ell=1}^m \ell^\gamma = \Theta(m^{\gamma+1})$, we have: $m = \Theta\left( \text{cost}(A)^{1/(\gamma + 1)} \right).$
Thus, each server's expected cost is:
\[
\E[c_i] = \frac{\text{cost}(A)}{k} + O\left( m \cdot \diam \right) = \frac{\text{cost}(A)}{k} + O\left( \text{cost}(A)^{1/(\gamma+1)} \cdot \diam \right).
\]

To show high-probability fairness, we apply a concentration inequality to the random variables equal to each server’s cost per phase. Each of these variables is upper-bounded by $\phi_\ell$, with maximum phase cost $\phi_m = m^\gamma = \Theta\left(\text{cost}(A)^{\gamma/(\gamma+1)}\right)$. Thus, we can bound the total variance of these random variables by  $\sigma^2 = O\left( \sum \phi_\ell^2 / k \right )$. 

 Applying Bernstein’s inequality with these upper bounds gives that each server’s total cost remains $(1+\varepsilon)$ close to its expectation with probability at least $1-  \exp\left( -\Omega \left ( \frac{\varepsilon^2}{\gamma\cdot k} \cdot \text{cost}(A, \sigma)^{1/(\gamma+1)} \right) \right) $. A union bound over all $k$ servers gives the final high-probability guarantee: with probability at least $1-  k \cdot \exp\left( - \frac{\varepsilon^2}{\gamma \cdot k} \cdot \Omega\left( \text{cost}(A, \sigma)^{1/(\gamma+1)} \right) \right)$, for all $i \in [k]$, 
\[
c_i \leq (1 + \varepsilon) \cdot \frac{\text{cost}(A)}{k} + O\left( \text{cost}(A)^{1/(\gamma+1)} \cdot \diam \right).
\]
See Appendix~\ref{apx:randomized} for full details of the proof.
\end{proof}


\begin{remark}[Extension to ID-Oblivious Adversary]
We remark that Theorem \ref{thm:randomized} also holds when $\sigma$ is generated by an \emph{ID-oblivious} adaptive adversary -- i.e., an adversary that can pick $\sigma_t$ based on the positions of the servers at rounds $t' < t$, but not on their identities. At the start of each phase, Algorithm \ref{alg:randomized_fairness} applies a random permutation to the server IDs without changing their overall set of positions. As a result, an ID-oblivious adversary observes the same sequence of positions under Algorithm \ref{alg:randomized_fairness}  as under the base algorithm $A$, and thus, the analysis for the oblivious adversary case carries through unchanged.
\end{remark}

\section{Towards Deterministic Online Fairness}
\label{sec:classic}


A central open question left by our randomized fairness result is whether fairness can
be achieved by \emph{deterministic} online algorithms, or against fully adaptive
adversaries that observe server identities. In this section we study the fairness
properties of classical deterministic $k$-server algorithms on specific metric classes.
Our focus is on the Double Coverage Algorithm (DCA) on tree metrics and on
deterministic paging algorithms (FIFO, LRU, and marking algorithms) on uniform
metrics. These algorithms are well-studied from a competitive-analysis perspective,
and analyzing their fairness properties gives insight into what kinds of fairness may be
achievable deterministically.

\subsection{Fairness of the DCA on Special Cases}

The Double Coverage Algorithm (DCA) works on a tree metric and moves all `unblocked' servers that do not have another server on their path to the request point, each at a uniform speed towards the request until the request is served.

We first show that, owing to the inherent fairness of moving all unblocked servers at the same time, the algorithm is fair on line metrics for any $k$ and general tree metrics for $k = 2$. However, we also show a strong negative result: DCA can be highly unfair on general tree metrics for general $k$. 

We begin with our positive result showing that DCA is $(1, O(k \cdot \diam))$-fair with respect to $\OPT$ on line metrics for any value of $k$.

\begin{theorem}
\label{thm:additive-fairness}
Let $\sigma$ be any request sequence over a line metric with diameter $\diam$, and let $c_i$ denote the total cost incurred by server $i$ when serving $\sigma$ using DCA. We have $\max_{i,j \in [k]} |c_i-c_j| =  O(k \cdot \diam)$. Thus, DCA is $(1, O(k \cdot \diam))$-fair on line metrics. 
\end{theorem}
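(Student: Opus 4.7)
The plan is to combine the one-dimensional geometry of the line with DCA's paired-movement behavior to control the difference between any two servers' costs. As a first step, I would establish that DCA preserves the left-to-right ordering of the servers throughout the execution: on a line, only the two servers closest to a request on either side are ever unblocked, and when both are activated they travel at identical speeds until one reaches the request, which lies strictly between them. Hence adjacent servers never swap positions, and I may fix a permanent left-to-right labeling $1,\dots,k$. I would then decompose each server's cost as $c_i = L_i + R_i$, where $L_i$ and $R_i$ denote its total leftward and rightward movement, respectively, so that the signed net displacement equals $R_i - L_i = s_i(T) - s_i(0)$ and has absolute value at most $\diam$.

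The core identity underlying the proof is that $R_i = L_{i+1}$ for every $1 \le i \le k-1$. The reason is that server $i$ can only move rightward in response to a request that arrives strictly between the current positions of $i$ and $i+1$---any request further to the right would have $i+1$ as the closer left-neighbor and leave $i$ blocked---and for each such request DCA moves both $i$ and $i+1$ towards the request at the same speed, so both travel the same distance. Summing over all such requests yields $R_i = L_{i+1}$.

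Combining $R_i = L_{i+1}$ with the two net-displacement bounds $|R_i - L_i| \le \diam$ and $|R_{i+1} - L_{i+1}| \le \diam$, a short algebraic manipulation gives
\[
c_i - c_{i+1} \;=\; -\bigl(\Delta s_i + \Delta s_{i+1}\bigr),
\]
where $\Delta s_j := s_j(T) - s_j(0)$. In particular, $|c_i - c_{i+1}| \le 2\diam$ for every adjacent pair. Telescoping across the $k-1$ adjacent pairs then yields $\max_{i,j \in [k]} |c_i - c_j| \le 2(k-1)\diam = O(k \cdot \diam)$, which is the first conclusion of the theorem. The $(1, O(k\cdot\diam))$-fairness claim follows immediately, since the largest server cost can exceed the average $\mathrm{cost}(\mathrm{DCA},\sigma)/k$ by at most this max-minus-min spread.

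The main technical step I expect to require care is the order-preservation claim together with the ensuing bookkeeping: the intervals ``between $i$ and $i+1$'' used to partition requests refer to the \emph{current}, not the initial, positions of the servers, so the pairing must be tracked across an evolving configuration. Once order-preservation is established, however, the remainder of the argument rests on the clean pairing identity $R_i = L_{i+1}$ and elementary triangle-inequality manipulations.
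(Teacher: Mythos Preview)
Your proposal is correct and follows essentially the same argument as the paper: order preservation, the decomposition $c_i = L_i + R_i$, the key identity $R_i = L_{i+1}$, the resulting bound $|c_{i+1} - c_i| \le |d_i| + |d_{i+1}| \le 2\,\diam$ (where $d_i$ is the net displacement), and telescoping to $2(k-1)\,\diam$. The paper's proof is slightly less careful than yours in stressing that the pairing is with respect to current positions, but the substance is identical.
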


\begin{proof}[Proof Sketch]
Since DCA never moves servers past each other on the line, we can fix a unique labeling of the servers as $1, \ldots, k$ based on their initial positions from left to right. The double coverage approach ensures that, for each request, the two closest servers move toward the request at equal speed until one of them reaches it. Therefore, every time server $i$ moves to the right, it is matched by an equal leftward movement by server $i+1$, and vice versa. This gives the recurrence:
\[
R_i = L_{i+1}, \quad \text{where } R_i \text{ is server $i$'s rightward movement and } L_i \text{ is its leftward movement}.
\]
The total cost for server $i$ is $c_i = R_i + L_i$. Letting $d_i = R_i - L_i$ be the net movement of server $i$,
\[
c_{i+1} - c_i = L_{i+1} + R_{i+1} - (L_i + R_{i}) = R_{i+1} - L_i = d_{i+1} + d_i.
\]
Since all movement is along a line of diameter $\diam$, we have $|d_i| \leq \diam$ for all $i$. Therefore, the cost difference between adjacent servers is bounded by:
\[
|c_{i+1} - c_i| \leq |d_i| + |d_{i+1}| \leq 2 \cdot \diam.
\]
By summing the pairwise differences across all $k-1$ server pairs, we conclude that:
\[
\text{for any $i, j \in [k]$ } |c_i - c_j| \leq 2(k - 1) \cdot \diam,
\]
as claimed. See Appendix~\ref{apx:dca_on_line} for a full proof.
\end{proof}

We next show that DCA is $(\alpha,\beta)$-fair for $\alpha = O(1)$ and $\beta = O(\diam)$ for any tree metric in the special case of $k=2$.



\begin{theorem}
\label{thm:multiplicatively_fair}
    DCA is $(1.5,1.75\cdot\diam)$-fair for any tree metric when $k=2$.
\end{theorem}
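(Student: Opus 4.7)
The plan is to decompose each server's total cost into a \emph{converging} part (motion that decreases the distance to the other server) and a \emph{diverging} part (motion that increases it), and then exploit the fact that DCA on a tree couples the converging motions of the two servers perfectly.

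Fix a single request $r$ with the two servers at positions $s_1$ and $s_2$. Let $m$ be the median of $\{s_1, s_2, r\}$ in the tree, i.e., the unique point through which all three pairwise paths pass. DCA's service of this request splits naturally into two subphases. In the first subphase, both servers move in lockstep toward $m$, incurring equal cost---neither blocks the other because $m$ lies on both servers' shortest paths to $r$. In the second subphase, the closer server has reached $m$ and now lies on the other's unique path to $r$, so it blocks the other and alone completes the journey to $r$. I call the first subphase \emph{converging} and the second \emph{diverging}. Summing over all requests, let $C$ denote the total converging cost (identical for each server by the coupling) and let $D_i$ be the total diverging cost of server $i$. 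Then $c_i = C + D_i$ and $\cost(\mathrm{DCA}, \sigma) = 2C + D_1 + D_2$.

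Next I would establish the key structural Lemma~\ref{lem:DCAdivConv} using the inter-server distance $d(s_1(t), s_2(t))$ as a potential. Each unit of converging motion decreases the potential by $2$, while each unit of diverging motion increases it by $1$ (since the moving server travels along the unique tree path away from its blocked counterpart). Telescoping over the request sequence and using that the potential always lies in $[0, \diam]$ yields $|D_1 + D_2 - 2C| \le \diam$, and in particular $2C \ge D_1 + D_2 - \diam$. Informally, this says the converging and diverging totals are balanced up to an additive $\diam$, so neither category dominates the server's total cost.

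With the decomposition in hand, the fairness bound reduces to short algebra. Assume without loss of generality $c_1 \ge c_2$, so $D_1 \ge D_2$. Substituting $c_i = C + D_i$ into the target $c_1 \le \tfrac{3}{4}\cost(\mathrm{DCA}, \sigma) + 1.75\,\diam$ reduces it to bounding $-\tfrac{1}{2}C + \tfrac{1}{4}D_1 - \tfrac{3}{4}D_2$; plugging in $C \ge \tfrac{1}{2}(D_1 + D_2 - \diam)$ yields an upper bound of $-D_2 + \tfrac{1}{4}\diam \le \tfrac{1}{4}\diam$, comfortably within the stated $1.75\,\diam$ slack. The main obstacle is formalizing the subphase decomposition that underlies Lemma~\ref{lem:DCAdivConv}: within a single request, the blocking relationship between the two servers changes exactly when the closer server passes through $m$, and one must verify carefully that every unit of DCA motion falls cleanly into either the converging or diverging category and that each such unit contributes the claimed $-2$ or $+1$ to the potential. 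Once this structural lemma is pinned down, the fairness bound follows from the routine algebra above.
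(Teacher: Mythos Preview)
Your proposal is correct and follows essentially the same approach as the paper: the same converging/diverging decomposition, the same potential argument tracking the inter-server distance to establish $|D_1+D_2-2C|\le\diam$, and the observation that DCA couples the converging costs so that $C$ is shared. Your final algebra is in fact cleaner than the paper's---you use only the lower bound $2C\ge D_1+D_2-\diam$ and obtain $c_i\le \tfrac{3}{4}\cost+\tfrac{1}{4}\diam$, whereas the paper uses both directions of the inequality and arrives at the looser additive term $1.75\,\diam$; so your argument actually proves the slightly stronger statement $(1.5,\,0.25\,\diam)$-fairness.
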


\begin{proof}

Fix an input $\sigma$. For each server, let $con_i(\sigma)$ denote the total distance moved by $i$ while ``converging" towards the other server (moving $i$ closer to the other server). Similarly, let $div_i(\sigma)$ denote the total distance moved by $i$ while ``diverging" away from the other server. For example, when a request is made on the shortest path between $i$ and $i'$, the distance moved by any of the two servers contributes to $con_i(\sigma)$. When $i$ moves away from the other server (and the other server does not move), the distance moved by $i$ contributes to $div_i(\sigma)$. 

Intuitively, for two servers the divergence and convergence terms directly
reflect how their mutual distance changes over time: when they diverge the
servers move farther apart, and when they converge they move closer together.
Since their separation can never exceed the diameter of the metric (nor become
negative), the total amount by which divergence can exceed convergence, or vice
versa, is necessarily bounded by $\diam$.

Formally, for any input sequence $\sigma$, with $k = 2$ and any server $i$, the following inequality holds (see Lemma~\ref{lemma:DCATwo} in Appendix~\ref{apx:dcadivconv} for the proof). 
\begin{equation}
-\diam \leq div_i(\sigma) + div_{i'}(\sigma) - 2con_i(\sigma) \leq \diam.   \label{eq:diam}
\end{equation}

Since each unit move of a server $i$ either increases or decreases its distance to the other server, we can write:
$c_i = con_i(\sigma) + div_i(\sigma)$.
%
Furthermore, by the definition of DCA and the nature of converging moves, we have $con_i(\sigma) = con_{i'}(\sigma)$ since both servers move towards each other at the same pace during converging moves (this argument fails for $k > 2$, when one server can be blocked from moving by some third server).
Therefore, using the inequalities in (\ref{eq:diam}) 
we can conclude that:
\begin{align*}
    c_i = con_i(\sigma) + div_i(\sigma) &\leq 3\cdot con_i(\sigma) - div_{i'}(\sigma) + \diam\\ 
    &\leq 3\cdot con_{i}(\sigma) + \diam \\
    &=  1.5\cdot con_{i}(\sigma) + 1.5\cdot con_{i}(\sigma) + \diam\\
    &\leq 1.5\cdot con_{i}(\sigma) + 0.75\cdot(div_i(\sigma) + div_{i'}(\sigma)) + 1.75\cdot\diam.
\end{align*}
Substituting the definitions of \(c_i\) and \(c_{i'}\) and the fact that $con_i(\sigma) = con_{i'}(\sigma)$ into the previous bound yields:
\begin{align*}
c_i\leq 0.75 \cdot c_i + 0.75\cdot c_{i'} + 1.75\cdot\diam  \leq 1.5\cdot\frac{cost(DCA,\sigma)}{2}  + 1.75\cdot\diam.
\end{align*}
\end{proof}


\subsection{Unfairness of DCA on General Tree Metrics}


We next show that, unfortunately, the positive results of Theorems \ref{thm:additive-fairness} and \ref{thm:multiplicatively_fair} cannot be extended to the full range of settings where DCA is $k$-competitive. In particular, DCA can fail to satisfying any reasonable notion of fairness on general tree metrics for general $k$, assigning the majoring of work -- in fact, $\Omega(k \cdot \OPT(\sigma))$ work -- to just a single server. 
%

\begin{theorem}
\label{thm:DCA_additive_fair}
There exists a tree metric and a request sequence $\sigma$ on that metric such that, when the Double Coverage Algorithm (DCA) is run on $\sigma$, some server incurs cost $\Omega(k \cdot \OPT(\sigma))$, where $\OPT(\sigma)$ denotes the cost of the offline optimal solution.
\end{theorem}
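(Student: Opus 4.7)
The plan is to exhibit a specific tree metric together with a request sequence $\sigma$ on which DCA concentrates a constant fraction of its total movement on a single server, which gives the stated lower bound. The construction I would use follows the intuition sketched after the theorem statement: the tree will be a caterpillar consisting of a main spine $v_0, v_1, \ldots, v_m$ of $m = \Theta(k)$ major nodes connected by unit-length edges, together with a short minor leaf $u_i$ at distance $\epsilon \ll 1$ attached to each major node. The initial configuration will place one distinguished server---call it the \emph{tracked} server---at or near one endpoint of the spine, while the remaining $k-1$ servers are positioned along the spine in a way that allows them to be quickly pulled onto their adjacent minor leaves by cheap requests.

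The request sequence will be built out of many repeated episodes. Each episode begins with a short batch of minor-leaf requests that sequentially migrate the non-tracked servers off the spine onto their respective minor leaves. Because every such leaf request is served by the single server currently at the associated major node (all others being blocked by the existing spine configuration), this batch costs only $O(k \epsilon)$ per episode. Once these servers are parked, the episode issues a single midpoint or far-spine request. Choosing the request location and the preceding minor-leaf configuration carefully, I would argue that DCA's blocking analysis forces the tracked server to be the unique mover over a long stretch of the spine: although the parked servers are nominally unblocked at time $0$, they each take $\epsilon$ time to step back onto the spine, at which point they become mutually blocked by each other, leaving only the tracked server able to continue, and it must then traverse $\Omega(k)$ to serve the request. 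A short restoration sub-sequence then resets the configuration so that the episode can be iterated identically.

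For the cost comparison, summing over $T$ such episodes, the tracked server accumulates DCA cost $\Omega(Tk)$. The offline optimum, by contrast, can serve the entire sequence with cost $O(T)$ by dedicating specific servers to specific frequently requested locations so that each individual request costs only $O(1)$ under OPT; in particular, a single dedicated server near the long-request location answers all long-spine requests cheaply, and the minor-leaf requests are answered in place by the already-parked servers. Combining these two bounds yields $c_1(\mathrm{DCA}, \sigma) = \Omega(k \cdot \OPT(\sigma))$, as required.

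The main obstacle will be rigorously verifying DCA's continuous movement dynamics through each long request in an episode. Because DCA moves every unblocked server in parallel at unit speed and the set of unblocked servers can change at the instant any server reaches a spine node, the argument must carefully track, through every event time, which servers are moving, which become blocked, and where each one stops. In particular, the most delicate step is choosing the per-episode request pattern and initial configuration so that the parked servers really do become mutually blocking the moment they step back onto the spine, rather than allowing one of them to serve the long request. I expect this case-by-case simulation of DCA---rather than any conceptually novel high-level argument---to be the bulk of the formal proof; once the per-episode dynamics are pinned down, the cost accounting and comparison with $\OPT(\sigma)$ are routine.
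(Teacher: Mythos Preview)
Your proposal has a concrete mechanical gap in the episode design. You want to park all non-tracked servers on minor leaves, issue a single far-spine request, and argue that once the parked servers step back onto the spine they ``become mutually blocked by each other, leaving only the tracked server able to continue.'' On a caterpillar this is backwards. Suppose the tracked server sits at the left end $v_0$, the others are parked on leaves at $v_1,\ldots,v_{k-1}$, and you request $v_m$ with $m\gg k$. After time $\varepsilon$ the leaf servers are at $v_1,\ldots,v_{k-1}$ on the spine. Now the tracked server's path to $v_m$ passes through $v_1$, so the tracked server is blocked; server $j$'s path passes through $v_{j+1}$, so it is blocked; but the server at $v_{k-1}$ has a clear path and serves the request. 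The tracked server moves only $\varepsilon$. No choice of leaf lengths or spacing fixes this while keeping the tracked server at the far end: on a path, whichever server is closest to the request on the spine side is the one DCA ultimately lets through.

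The paper's construction uses a genuinely different mechanism. The tracked server's $\Omega(k)$ cost does \emph{not} come from serving a single long request; it comes from DCA's parallel-movement rule moving the tracked server even on requests it does not serve. The requests go one leaf at a time ($2',3',\ldots,k'$). For the request at $i'$, exactly two servers are unblocked: the tracked server $s_1$ (currently near $i{-}1$) and the server $s_i$ at $i{+}1$. Both move toward $i$ at unit speed; $s_i$ reaches node $i$ first, blocks $s_1$ at $i{-}\varepsilon$, and then steps onto the leaf to serve $i'$. So per request the tracked server advances one spine edge \emph{without serving anything}, while $s_i$ pays $1+O(\varepsilon)$ once and is then parked. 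Over $k{-}1$ such requests the tracked server accumulates $\Theta(k)$ of wasted parallel movement; OPT, which never moves a non-serving server, handles each leaf request in place for $O(\varepsilon)$ after a single unit move to fill the initial gap, giving $\OPT(\sigma)=O(1)$. The unfairness is thus a consequence of double coverage, not of forcing one server to be the unique responder---the latter, as you noted yourself is the delicate point, is exactly what your park-then-long-request scheme cannot arrange.
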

\begin{proof}[Proof Sketch]
We construct a tree where $k+1$ major nodes form a path (the “spine”), and each major node has an attached minor leaf at distance $\varepsilon$, for arbitrarily small $\varepsilon$. Label the spine nodes $\{1,2,\ldots, k+1\}$ and place servers, labeled $s_1,s_2,\ldots,s_k$, initially at positions $1, 3, 4, \dots, k+1$ respectively, skipping  $2$ to create a gap. The goal is to  force $s_1$ to move from the start to the end of the spine, incurring cost $\Omega(k)$, while all other servers (and the offline optimal solution) incur cost just $O(1)$. 
The hard request sequence proceeds in repeated steps: 
\begin{itemize}
    \item First, issue a request at position $2$, causing both servers $s_1$ and $s_2$ at positions $1$ and $3$ to move toward it, perturbing the distances by an arbitrarily small amount so that $s_2$ arrives first and actually serves the request.
    \item Next, request the minor leaf connected to $2$. This forces $s_2$ to move off the spine, freeing up position $2$. $s_1$ is `blocked' by $s_2$ and does not move.
    \item Now, issue a request at position $3$; $s_1$ at position $2$ and $s_3$ at position $4$ move toward it, and again the other server (now $s_3$) serves the request and is sent off the spine.
\end{itemize}

This pattern repeats: server $s_1$ slowly moves right, passing each server one by one while the others are diverted off the main path. After $k$ steps, $s_1$ incurs cost $\Omega(k)$ while all other servers incur cost just $O(1+\varepsilon)$. See Figure \ref{fig:dcaLowerBound} for an illustration.

Further, the offline optimum can serve all spine requests using just one server move of length $1$ ($s_1$ moves to position $2$ in the first step), and handle minor leaf requests with $\varepsilon$-moves. Thus, $\text{OPT}(\sigma) = O(1+ \varepsilon k)$, and so, setting $\varepsilon$ arbitrarily small, $s_1$ pays costs $\Omega(k \cdot \text{OPT}(\sigma))$, as desired. See Appendix~\ref{apx:DCA_not_fair} for full details, and an argument that this hard sequence can be repeated to apply to settings where $\OPT(\sigma)$ is arbitrarily large in comparison to $k$ and the metric space diameter.
\end{proof}
\begin{figure}[h]
    \centering
    \includegraphics[width=0.5\linewidth]{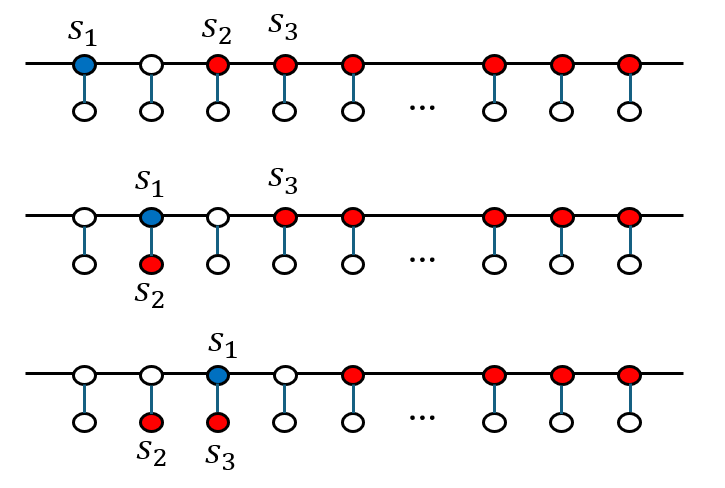}
    \caption{
Hard instance for Theorem~12.
\emph{Top:} Initialization of server positions on the spine.
\emph{Middle:} Requests~1--2 move $s_1$ one step to the right and push $s_2$ off the spine via a
minor leaf.
\emph{Bottom:} Requests~3--4 repeat the process with $s_3$, and so on, forcing $s_1$
to traverse the entire spine while all other servers incur $O(1)$ additional cost.
}
    \label{fig:dcaLowerBound}
\end{figure}

\subsection{Uniform Metrics and Paging}
\label{sec:metrics}


Finally, we consider  the fairness of deterministic $k$-server algorithms on uniform metrics, where we scale all distances to be $1$ for simplicity. This setting is equivalent to the classical paging problem, where the $k$ servers correspond to the $k$ memory cells of a cache. A \emph{fault} occurs when a requested page is not currently in the cache, equivalently, when a request is made to a node (page) without a server located on it. The objective of minimizing total server movements thus becomes equivalent to minimizing the number of faults.

Given this equivalence, we adopt the terminology of paging. To simplify the discussion, we restrict our attention to \emph{lazy paging algorithms}, which evict a page only when a fault occurs and the cache is full. Non-lazy algorithms, such as Flush-When-Full, are also relevant in the $k$-server context,

We start with a strong positive result concerning the First-In-First-Out (FIFO) algorithm, which evicts the page that has been in the cache the longest. It is well known that FIFO is $k$-competitive~\cite{SleatorT85}. We now show that FIFO also guarantees the strongest notion of fairness.

\begin{theorem}\label{thm:fifo}
    FIFO is $(1,1)$-fair on uniform metrics. 
\end{theorem}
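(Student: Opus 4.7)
The plan is to prove this via the standard equivalence between $k$-server on uniform metrics (with all distances scaled to $1$) and the paging problem, together with a simple invariant showing that FIFO touches servers in a strict round-robin order.

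First, I would formalize the reduction. In the scaled uniform metric, every server move costs exactly $1$, so $\text{cost}(\text{FIFO},\sigma)$ equals the total number of faults $f$, and for each server $i$, $c_i(\text{FIFO},\sigma)$ equals the number of faults served by server $i$. Thus the fairness claim reduces to: each server serves at most $f/k+1$ of the $f$ faults.

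Second, I would establish the key invariant by induction on the number of faults served: at any point in time, if we order the servers by ``time since last move'' (breaking ties according to the initial ordering of the starting configuration), this total ordering cycles deterministically. Concretely, label the servers $1,\ldots,k$ so that server $j$ is the $j$-th server to move (for the first $k$ faults; for those initially never moved on $\sigma$, break ties using the initial configuration order, which serves as their ``virtual'' last-move time). I would then argue by induction that for every fault index $j \ge 1$, FIFO moves server $((j-1) \bmod k) + 1$. The base case is immediate from the labeling. For the inductive step, after fault $j$, server $((j-1)\bmod k)+1$ has just become the most recently moved, so among the remaining servers, the least recently moved is the one that was next in the prior round-robin order, namely server $(j \bmod k)+1$, which FIFO will indeed select on fault $j+1$.

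Third, from this round-robin behavior, after $f$ total faults each server $i \in [k]$ has served either $\lfloor f/k \rfloor$ or $\lceil f/k \rceil$ faults. Therefore
\[
c_i(\text{FIFO},\sigma) \;\le\; \left\lceil \tfrac{f}{k} \right\rceil \;\le\; \tfrac{f}{k} + 1 \;=\; \tfrac{\text{cost}(\text{FIFO},\sigma)}{k} + 1,
\]
which is exactly the $(1,1)$-fairness guarantee of Definition~\ref{def:fairness} (with $w(\sigma)=\text{cost}(\text{FIFO},\sigma)$).

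There is no significant technical obstacle: the only subtlety is pinning down the tiebreaking rule for the initial ``last moved'' timestamps of servers that have not yet moved, so that the cyclic invariant is literally true from fault $1$ onward. This is a cosmetic matter of fixing a convention (e.g., using the canonical initial labeling $1,\ldots,k$). Once that is set, the induction is immediate and the bound $c_i \le \lceil f/k \rceil$ follows with no additional work.
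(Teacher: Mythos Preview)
Your proof is correct and uses essentially the same idea as the paper: FIFO cycles through the servers in a fixed order, so the per-server fault counts differ by at most one. The paper phrases this as a pairwise interleaving of fault indices ($a_i < b_i < a_{i+1} < b_{i+1}$ for any two servers $A,B$), while you phrase it as an explicit round-robin invariant, but the content is the same.
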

\begin{proof}
    Let $A$ and $B$ be two servers (memory cells), and let $a_i, b_i$ denote the indices of the $i$th faults in the input at $A$ and $B$, respectively. Suppose the first fault at $A$ occurs before the first fault at $B$. By the FIFO rule, for any $i$, we have
    \[
        a_i < b_i < a_{i+1} < b_{i+1}.
    \]
    This implies that the number of faults at $A$ exceeds that of $B$ by at most one. Applying this logic to all pairs of servers yields the $(1,1)$-fairness guarantee.
\end{proof}
    
Thus, FIFO is not only optimally competitive (among deterministic algorithms) but also optimally fair. One may ask whether other paging algorithms, such as Least-Recently-Used (LRU), which is also $k$-competitive~\cite{SleatorT85}, share this fairness property. The following theorem shows that they do not. Recall that LRU evicts the page in the cache that has been least recently used.

\begin{theorem}\label{thm:lru}
There exists a request sequence $\sigma$ on a uniform metric such that under LRU,
one server incurs cost $\Omega(\mathrm{cost}(\mathrm{LRU},\sigma))$ while all other servers incur
only $O(1)$. Consequently, LRU does not satisfy $(\alpha,\beta)$-fairness with respect
to its own total cost for any $\alpha = o(k)$ and any $\beta = o(\mathrm{cost}(\mathrm{LRU},\sigma))$.
\end{theorem}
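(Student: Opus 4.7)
The plan is to construct an adversarial request sequence on $k+1$ pages that forces every cache miss under LRU to evict from the same slot. Label the pages $\{1,2,\ldots,k,k+1\}$ and the slots $1,\ldots,k$. Starting from an empty cache, I would first run a warmup that requests $1,2,\ldots,k$ in order; this produces $k$ misses that place page $i$ into slot $i$ and establishes the recency order $(1,2,\ldots,k)$ from LRU to MRU. I would then repeatedly issue the length-$2k$ block
\[
(k+1,\, 2,\, 3,\, \ldots,\, k,\, 1,\, 2,\, 3,\, \ldots,\, k),
\]
and argue that this block leaves the cache state unchanged while incurring exactly two faults, both in slot $1$.

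The key invariant I would maintain across repetitions is that at the start of each block the recency list from LRU to MRU is $(1,2,\ldots,k)$ and slot $1$ holds the current LRU page. The step-by-step verification runs as follows: the opening request to $k+1$ misses and evicts from slot $1$, leaving recency order $(2,3,\ldots,k,k+1)$; the subsequent hits at $2,3,\ldots,k$ rotate this order so that $k+1$ (still in slot $1$) becomes LRU again; the request to $1$ then misses and once more evicts from slot $1$; and the closing hits at $2,3,\ldots,k$ rotate the order back to $(1,2,\ldots,k)$, restoring the invariant. Thus each repetition contributes exactly two misses, both incurred by slot $1$, while every other slot sees only hits after the warmup.

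Repeating the block $T$ times yields an input $\sigma$ on which LRU incurs $\mathrm{cost}(\mathrm{LRU},\sigma)=2T+k$, with $c_1(\mathrm{LRU},\sigma)=2T+1$ and $c_j(\mathrm{LRU},\sigma)=1$ for every $j\neq 1$. This directly gives $c_1(\mathrm{LRU},\sigma)=\Omega(\mathrm{cost}(\mathrm{LRU},\sigma))$ while $c_j(\mathrm{LRU},\sigma)=O(1)$ for $j\neq 1$, proving the first claim. For the fairness consequence, substituting into $c_1\le \alpha\cdot\mathrm{cost}(\mathrm{LRU},\sigma)/k+\beta$ yields $2T+1\le\alpha(2T+k)/k+\beta$, i.e., $\beta\ge 2T(1-\alpha/k)-(\alpha-1)$. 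Whenever $\alpha=o(k)$ (so $1-\alpha/k=1-o(1)$) and $T$ is taken large, the right-hand side is $(1-o(1))\cdot 2T$, which forces $\beta=\Omega(T)=\Omega(\mathrm{cost}(\mathrm{LRU},\sigma))$, contradicting $\beta=o(\mathrm{cost}(\mathrm{LRU},\sigma))$.

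The main obstacle is verifying that the recency list really does rotate through the $2k$ steps of the block exactly as stated, since it is easy to misplace which page is the current LRU during the two interior hit-phases. I would handle this by writing out the LRU-to-MRU list after each of the $2k$ steps and checking that the list at step $2k$ coincides with the starting list, which then allows unrolling the block $T$ times without re-analysis. The warmup itself plays no role beyond setting up the invariant; it contributes $O(k)$ total cost spread one miss per slot, which is absorbed into the additive $O(1)$ bounds for the non-distinguished slots.
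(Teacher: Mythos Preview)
Your proposal is correct and follows essentially the same construction as the paper: both use $k+1$ pages with two distinguished pages that alternate in slot~1, interleaved with a pass through the remaining $k-1$ pages to refresh their recency and keep them pinned in the other slots. The paper writes the sequence as $(\alpha,1,\dots,k-1,\beta,1,\dots,k-1)^m$ whereas you write it as a warmup followed by $(k+1,2,\dots,k,1,2,\dots,k)^T$, but these are the same idea up to relabeling; your version is in fact more carefully fleshed out, with an explicit recency-list invariant and an explicit derivation of the $(\alpha,\beta)$-fairness impossibility, both of which the paper leaves to the reader.
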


\begin{proof}
We construct an input where a single server incurs (almost) all the cost incurred by LRU.  
    Consider the input sequence
    \[
        (\alpha, 1,2,\ldots,k-1, \beta, 1,2,\ldots,k-1)^m .
    \]
    By the LRU rule, pages $1,2,\ldots,k-1$ remain in the cache throughout at positions $2,\ldots,k$ (i.e., covered by servers $s_2,\ldots,s_k$). The first cache location (server $s_1$) experiences faults at requests to $\beta$ and after repeating the sequence at requests to $\alpha$. 
    Thus, all of the algorithm's cost beyond the initial $k-1$ ``cold misses" (the faults before the cache becomes full) is borne by this single server.
\end{proof}

Despite the above negative result, one can show that LRU, and more generally all \emph{marking algorithms}, guarantee a weaker but still meaningful fairness property: each server incurs a cost no more than the optimal offline solution. That is, they are $(1,0)$-fair with respect to $k \cdot \OPT(\sigma)$, the potential highest cost paid by the algorithm on a worst-case input where it is $k$-competitive. We explore this notion further in Appendix \ref{sec:beyond}, where we call it \emph{acceptable fairness} (Definition \ref{def:acceptable-fairness}).

A marking algorithm works as follows
On a hit, the requested page is simply marked. On a fault, if all pages are already
marked, all marks are cleared; then an unmarked page is evicted and the requested
page is inserted and marked. LRU fits this framework because its notion of
``least-recently-used'' exactly corresponds to the unmarked pages: a page loses its
mark precisely when it has not been requested among the most recent $k$ distinct
requests, matching the classical marking definition.

\begin{theorem}\label{thm:marking}
For any marking algorithm on a uniform metric, the cost incurred by each server on input sequence $\sigma$ is at most $\mathrm{OPT}(\sigma)$. 
\end{theorem}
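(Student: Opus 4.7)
The plan is to adapt the classical charging argument that establishes the $k$-competitiveness of marking algorithms, but to track fault counts on a \emph{per-server} basis rather than in aggregate. First I would decompose the request sequence into phases using the standard definition: phase $1$ begins at time $1$, and each subsequent phase begins at the earliest request that would introduce a $(k+1)$-th distinct page into the current phase's window. By construction every phase consists of exactly $k$ distinct pages (except possibly the last), and by the structure of marking, the algorithm's cache at the end of phase $i$ contains precisely those $k$ pages, all marked. Let $P$ denote the total number of phases.

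The structural heart of the argument is the observation that for any fixed server $s$, \emph{at most one} fault can occur at $s$ during a single phase. The reason is that the first fault at $s$ within a phase loads a page into the slot of $s$ and immediately marks it; by the marking invariant, marked pages cannot be evicted until all marks are cleared at the next phase boundary. Hence for the remainder of the phase the slot of $s$ holds a marked page and cannot fault again. Writing $F_s$ for the total number of faults at $s$, this yields $F_s \le P$.

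It then remains to show $P \le \mathrm{OPT}(\sigma)$. The classical reasoning is that over the window formed by phase $i$ together with the first request of phase $i+1$, exactly $k+1$ distinct pages are requested: the $k$ pages of phase $i$ together with a page fresh to the marking algorithm's cache that triggers the phase boundary. Since OPT holds only $k$ pages at any time, it must incur at least one fault in each such window. By disjointly charging each phase boundary to a distinct forced OPT fault and combining with the per-phase server bound, we obtain $F_s \le P \le \mathrm{OPT}(\sigma)$.

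The step I expect to be the main obstacle is making the second-step charging \emph{tight}, obtaining $P \le \mathrm{OPT}(\sigma)$ exactly rather than the textbook $P \le \mathrm{OPT}(\sigma) + O(1)$. The subtlety is that the consecutive $(k+1)$-distinct-page windows share their boundary request, so a naive sum double-counts an OPT fault occurring exactly at a phase boundary. The cleanest resolution I have in mind is to assume the marking algorithm and OPT begin from the same initial cache configuration, so that the first phase contributes no charge, and then to assign the charge for each $i \ge 2$ to the earliest OPT fault inside the window formed by phase $i-1$ together with the first request of phase $i$, prioritizing faults strictly interior to the window before boundary faults. A careful bookkeeping argument then shows these charges are pairwise distinct, giving the clean bound $F_s \le \mathrm{OPT}(\sigma)$ claimed in the theorem.
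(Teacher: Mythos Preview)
Your overall strategy---phase decomposition, the per-server bound $F_s\le P$ from the marking invariant, and then $P\le\mathrm{OPT}(\sigma)$ via $(k{+}1)$-distinct-page windows---is exactly the paper's argument. You are also right that the second step is the delicate one: the paper simply asserts $\mathrm{OPT}(\sigma)\ge m_\sigma$, and as you suspect this can be off by one (e.g.\ $k=2$, initial cache $\{a,b\}$, sequence $a,b,c$: two phases but $\mathrm{OPT}=1$).

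The genuine gap is that your proposed charging cannot close this off-by-one, because for arbitrary marking algorithms the exact bound $F_s\le\mathrm{OPT}(\sigma)$ actually fails. Take $k=2$, servers initially at $a$ and $c$, and sequence $a,b,c,b$. Here $m_\sigma=2$ and $\mathrm{OPT}=1$ (move the $a$-server to $b$; all later requests are hits). A marking algorithm that, at the mark-clear triggered by the request to $c$, chooses to evict $b$ rather than $a$ sends server~2 through $c\to b\to c$, incurring cost $2>\mathrm{OPT}$. So the phase argument delivers only $F_s\le m_\sigma\le\mathrm{OPT}(\sigma)+1$, and your ``careful bookkeeping'' cannot sharpen this for \emph{every} marking rule. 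The shared-initial-configuration assumption you invoke is already built into the $k$-server model and does not help here: in the example above, the single OPT fault lies in phase~1, so it cannot separately cover both the phase-1 and the phase-2 server fault. (For LRU specifically the tight bound does hold, via a direct argument that between consecutive faults of a fixed server, $k$ distinct pages \emph{different from that server's current page} are requested---but that is not the phase argument you and the paper use, and it does not extend to adversarial eviction choices.)
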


\begin{proof}
    We use the standard phase-partitioning technique. 
    Fixing the input sequence $\sigma$, define the first phase as the maximal prefix of $\sigma$ containing $k$ distinct pages. Define subsequent phases recursively on the remainder of the sequence. Let $m_\sigma$ denote the number of phases. Then
    $\mathrm{OPT}(\sigma) \geq m_\sigma$,
    since each phase introduces $k+1$ distinct pages and thus forces at least one fault for $\mathrm{OPT}$. On the other hand, in each phase, any server of a marking algorithm incurs at most one fault: after its first fault in a phase, its page is marked and remains protected for the rest of the phase. Thus, each server pays cost at most $\OPT(\sigma)$.
\end{proof}

\section{Conclusion}
\label{sec:conclusion}
In this work, we initiated a formal study of fairness in the $k$-server problem, introducing
$(\alpha,\beta)$-fairness as a unifying framework for capturing equitable cost distribution
among servers.
Within this framework, we showed that strong fairness guarantees can be achieved without
significantly compromising competitiveness.
In particular, we presented an offline transformation that achieves near-perfect fairness
while incurring only a controlled increase in total cost, and a randomized online algorithm
that achieves fairness with high probability against an oblivious adversary.

Beyond algorithmic design, we analyzed the fairness behavior of several classical online
algorithms.
Our results demonstrate that, despite their optimal or near-optimal competitive ratios,
algorithms such as  DCA can exhibit substantial unfairness on general metric spaces,
highlighting an inherent tension between competitiveness and fairness in the online setting.

Several open questions remain.
Can fairness guarantees comparable to our randomized results be achieved by deterministic
online algorithms on general metrics, or against fully adaptive adversaries?
Can the dependence of fairness bounds on structural parameters be further improved?
More broadly, it would also be interesting to consider $(\alpha,\beta)$-fairness for other important generalizations, e.g., metrical task systems~\cite{BLS92}.  
We hope that this work provides a foundation for further investigation of fairness as a
first-class objective in online algorithm design.

\section*{Acknowledgements}

This research was supported by the National Science Foundation under grants
2045641, 2325956, 2512128, and 2533814, and by the Natural Sciences and Engineering Research
Council of Canada (NSERC) under funding reference number RGPIN-2025-07295.





\bibliographystyle{alpha}
\bibliography{references}

\appendix
\label{appendix}
\newpage
\section{Beyond \texorpdfstring{$(\alpha,\beta)$}{(alpha,beta)}-Fairness: Alternative Fairness Notions}
\label{sec:beyond}


Our $(\alpha,\beta)$-fairness definition captures a natural proportionality condition: no server pays significantly more than its fair share of the total cost, up to a small slack. However, there are many other reasonable ways to formalize fairness.
In this section, we introduce and compare three alternative fairness notions—additive, multiplicative, and acceptable fairness -- some of which have appeared implicitly or in domain-specific contexts~\cite{MMS88}. 

We study how these notions relate to each other, and whether algorithms satisfying one notion can be transformed to satisfy another. 
We start with the basic definitions. For simplicity, throughout this section, we consider deterministic algorithms. Although all definitions can be applied to applied to randomized algorithms, analogously to our $(\alpha,\beta)$-fairness notion.

\begin{definition}[Additive Fairness]
\label{def:add}
An algorithm \( A \) is said to be \emph{$\beta$-additively fair} for $\beta \ge 0$ if, for any input sequence \( \sigma \) we have:
\[
\max_{i,j \in [k]} |c_i(A,\sigma) - c_j(A,\sigma)| \le \beta.
\]
\end{definition}

\begin{definition}[Multiplicative Fairness]
An algorithm \( A \) is said to be \emph{$\alpha$-multiplicatively fair} for $\alpha \ge 1$ if, for any input sequence \( \sigma \) we have:
\[
 \max_{i,j \in [k]} \frac{c_i(A,\sigma)}{c_j(A,\sigma)} \leq \alpha.
\]
\end{definition}

\begin{definition}[Acceptable Fairness]\label{def:acceptable-fairness}
An algorithm \( A \) is said to be \emph{$\gamma$-acceptably fair} for $\gamma \ge 1$ if, for any input sequence \( \sigma \) we have:
\[
\max_{i \in [k]} c_i \leq  \gamma \cdot \OPT(\sigma),
\]
where \( \OPT(\sigma) \) is the offline optimal cost for input \( \sigma \).
\end{definition}

We now explore the relationships between different notions of fairness. In particular, we show that under mild assumptions on competitiveness, certain fairness definitions imply others. These reductions help unify the space of fairness guarantees and show that designing algorithms with one fairness objective can lead to others as a consequence.

We should mention that these notions are closely related to $(\alpha,\beta)$-fairness, but not equivalent. 
In particular, any $\beta$-additively fair algorithm is also 
$(1,\Theta(\beta))$-fair under our $(\alpha,\beta)$-fairness definition, 
since bounded pairwise differences imply that no server deviates from the 
average cost by more than an additive $\Theta(\beta)$ term. 
The converse, however, does not hold: an algorithm may be 
$(1,\beta)$-fair without satisfying additive fairness, as the latter 
requires uniform bounds on all pairwise cost differences.

Similarly, any $\alpha$-multiplicatively fair algorithm is 
$(\Theta(\alpha),0)$-fair, since bounding the ratio between server costs 
implies a proportional bound relative to the total cost. 
Finally, $\gamma$-acceptable fairness directly corresponds to 
$(\gamma,0)$-fairness with respect to $\OPT(\sigma)$, since it bounds the 
maximum cost incurred by any server by a constant multiple of the offline 
optimum.

\subsection{Equivalence of Multiplicative and Acceptable Fairness for Competitive Algorithms}

We begin by showing that any algorithm that is multiplicatively fair and \( O(k) \)-competitive is also acceptably fair.

\begin{theorem}
\label{thm:mult-to-acceptable}
Any online algorithm that is $O(1)$-multiplicatively fair and \( O(k) \)-competitive is $O(1)$-acceptably fair.
\end{theorem}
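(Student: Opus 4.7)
The plan is to chain the two hypotheses through the average server cost as an intermediary. First, I would translate the $\alpha$-multiplicative fairness assumption (for some constant $\alpha$) into a bound that compares each server's individual cost to the algorithm's total cost. Since for all $i,j \in [k]$ we have $c_i(A,\sigma) \le \alpha \cdot c_j(A,\sigma)$, the minimum server cost satisfies $c_{\min} \ge c_{\max}/\alpha$. Summing the lower bound over all $k$ servers yields
\[
\text{cost}(A,\sigma) = \sum_{i=1}^k c_i(A,\sigma) \ge k \cdot c_{\min} \ge \frac{k}{\alpha} \cdot c_{\max},
\]
so $c_{\max} \le \frac{\alpha}{k} \cdot \text{cost}(A,\sigma)$. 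This step is essentially a pigeonhole-style argument and should be routine.

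Next, I would apply the $O(k)$-competitiveness assumption to relate $\text{cost}(A,\sigma)$ to the offline optimum. Writing the competitive ratio as $c \cdot k$ for some absolute constant $c$, we have $\text{cost}(A,\sigma) \le c \cdot k \cdot \OPT(\sigma) + C$ for some additive constant $C$ that depends only on $k$ and the metric, not on the length of $\sigma$. Substituting into the previous inequality gives
\[
c_{\max} \le \frac{\alpha}{k}\bigl(c \cdot k \cdot \OPT(\sigma) + C\bigr) = \alpha c \cdot \OPT(\sigma) + \frac{\alpha C}{k},
\]
which is $O(1) \cdot \OPT(\sigma)$ up to a constant additive slack. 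This establishes $O(1)$-acceptable fairness in the sense of Definition~\ref{def:acceptable-fairness} (absorbing the additive constant into the multiplicative constant by a mild assumption such as $\OPT(\sigma) \ge 1$, or by stating the result with the same additive-slack convention used for the competitive ratio).

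The proof is quite short, and I do not anticipate a serious obstacle. The only subtlety is the additive constant from the competitive ratio definition: strictly speaking, $O(k)$-competitiveness allows an additive term, so the conclusion gives $c_{\max} \le O(1)\cdot \OPT(\sigma) + O(1)$ rather than a pure multiplicative bound. I would either absorb this into the definition of acceptable fairness (treating bounded additive slack as part of the ``$O(1)$'') or handle it explicitly by restricting attention to sequences for which $\OPT(\sigma)$ is at least some constant. Either way, the conceptual content is just the two-line chain: multiplicative fairness upgrades the total-cost bound into a per-server bound, and competitiveness turns the total-cost bound into an $\OPT$-bound.
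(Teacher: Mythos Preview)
Your proposal is correct and mirrors the paper's argument exactly: both derive $c_{\max}\le \frac{\alpha}{k}\cdot \text{cost}(A,\sigma)$ from multiplicative fairness and then plug in the $O(k)$-competitive bound to get $c_{\max}=O(\OPT)$. You are in fact slightly more careful than the paper about the additive constant in the competitive-ratio definition, which the paper's proof simply elides.
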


\begin{proof}
By \( \alpha \)-multiplicative fairness, we have \( c_j \leq \alpha \cdot c_i \) for all \( i, j \in [k] \), so:
\[
\sum_{i=1}^k c_i \geq k \cdot \frac{1}{\alpha} \cdot c_j.
\]
By \( O(k) \)-competitiveness, \( \sum_{i=1}^k c_i \leq O(k \cdot \OPT) \), so, using that $\alpha = O(1)$, for any server \( j \):
\[
c_j \leq \frac{O(k\cdot OPT)}{k\cdot \frac{1}{\alpha}} = O(\OPT).
\]
\end{proof}

Next, we show that any algorithm that is acceptably fair can be converted into a multiplicatively fair algorithm with only a constant-factor blowup in competitiveness. Combined with Theorem~\ref{thm:mult-to-acceptable}, this establishes that, for $O(k)$-competitive algorithms, the existence of multiplicatively fair and acceptably fair algorithms is essentially equivalent.

Algorithm~\ref{alg:acctomul} implements a simple load–equalization procedure that converts acceptable fairness into multiplicative fairness. The high-level idea is as follows: at every time step, we compare each server’s cumulative cost to the current maximum cumulative cost \(h_t\). If a server is “too light’’ (i.e., its cumulative cost is significantly below \(h_t\)), the algorithm forces it to perform a corrective move of length \(\Theta(\diam)\), thereby raising its cost toward the maximum. Servers that already match the maximum cost are left unchanged. Because each corrective move has length at most \(2\diam\), whenever a server falls behind it can be brought back within a controlled additive distance of \(h_t\) at a cost of only \(O(\diam)\) additional movement. Next, we present a theorem that uses Algorithm~\ref{alg:acctomul} to show how applying this transformation to an acceptably fair algorithm yields a multiplicatively fair one.

\begin{algorithm}[t]\LinesNumbered
\caption{Conversion from an Acceptably Fair Algorithm to a Multiplicatively Fair Algorithm}
\label{alg:acctomul}

\KwIn{$k$-server algorithm $A$ that moves each server at most $\diam$ at each time step.}

Define $cc_i(t,\mathrm{ALG}) = \sum_{j=1}^t c_i(j,\mathrm{ALG})$ as the cumulative cost of server $i$ in algorithm $\mathrm{ALG}$ up to step $t$.\\
$t \leftarrow 1$\;

\ForEach{request $\sigma_t$}{
    Simulate $A(\sigma_t)$     \tcp*{run $\sigma_t$ on Algorithm $A$}
    Move all servers according to the simulation of Algorithm $A$\;
    $h_t \leftarrow \max_i \, cc_i(t, A)$\;
    \For{$i = 1$ \KwTo $k$}{
        \If{$cc_i(t,\text{this}) < h_t$}{ \label{line:check}
            Pick $x$ such that $d(s_i(t), x) \geq \diam/2$\; \label{line:find}
            Move server $i$ to $x$ and back to $s_i(t)$\; \label{line:step-end}
        }
    }
    $t \leftarrow t + 1$\;
}

\For{$i = 1$ \KwTo $k$}{
    \If{$cc_i(t,\text{this}) = h_t$}{
        Pick $x$ such that $d(s_i(t), x) \geq \diam/2$\;
        Move server $i$ to $x$ and back to $s_i(t)$\;
    }
}
\end{algorithm}

\begin{theorem}
\label{thm:atom}
Any online algorithm that is $O(1)$-acceptably fair (and so \( O(k) \)-competitive as a consequence) can be transformed into an $O(1)$-multiplicatively fair algorithm with a competitive ratio $O(k)$.
\end{theorem}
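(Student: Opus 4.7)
The plan is to split the analysis of Algorithm~\ref{alg:acctomul} into two parts: establishing $O(1)$-multiplicative fairness and bounding the competitive ratio by $O(k)$. Let $cc_i^+(t)$ denote the cumulative cost of server $i$ in the transformed algorithm after all correctives at step $t$, and let $h_t = \max_j cc_j(t, A)$ denote the maximum $A$-cumulative cost across servers at time $t$. The guiding intuition is that the corrective moves force each server to track $h_t$ within an additive slack of $O(\diam)$, after which the final loop ensures no server is pinned at exactly $h_T$.

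The central step is to prove, by induction on $t$, the invariant $h_t \le cc_i^+(t) \le h_t + O(\diam)$ for every server $i$ and every time $t$. For the lower bound: at step $t$ algorithm $A$ moves at most one server by at most $\diam$, so $h_t - h_{t-1} \le \diam$; any server $i$ that would drop below $h_t$ because $A$ did not touch it triggers the check on line~\ref{line:check}, receives a corrective move of cost at least $\diam$, and is restored to a value at least $h_t$. For the upper bound: a corrective move adds at most $2\diam$, and by the inductive hypothesis $cc_i^+(t-1)$ starts no more than $O(\diam)$ above $h_{t-1}$, so even after $A$'s move and a potential correction we remain within $O(\diam)$ of $h_t$. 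A key observation is that correctives on different servers are independent within a round (they do not affect one another's positions or cumulative costs), so the inductive step can be verified server-by-server.

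From the invariant I would derive multiplicative fairness as follows. After the final loop, every server satisfies $cc_i^+(T) \ge \max(h_T, \diam)$: any server that was never corrected during the main loop must have $cc_i^+(T) = cc_i(T,A) \le h_T$, forcing equality with $h_T$, and is therefore bumped by the final loop to at least $h_T + \diam$; any server that received at least one corrective already satisfies $cc_i^+(T) \ge \diam$ (a single corrective adds $\ge \diam$) and also $\ge h_T$ by the invariant. Meanwhile, $\max_i cc_i^+(T) \le h_T + O(\diam)$ still holds after the final loop since its bumps add only $O(\diam)$. A short case split on whether $h_T \ge \diam$ or $h_T < \diam$ shows that $\max_i cc_i^+(T) / \min_i cc_i^+(T) = O(1)$ in both regimes, giving $O(1)$-multiplicative fairness.

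For the competitive ratio, summing the invariant yields $\sum_i cc_i^+(T) \le k(h_T + O(\diam))$. Since $A$ is $O(1)$-acceptably fair, $h_T = \max_i cc_i(T, A) \le O(\OPT(\sigma))$, so the total cost of the transformed algorithm is $O(k \cdot \OPT(\sigma)) + O(k \cdot \diam)$, which is $O(k)$-competitive (up to an additive term depending only on $k$ and $\diam$). The main technical subtlety I expect is ensuring that the hidden constant in the $O(\diam)$ slack does not accumulate across rounds; the resolution is that each round adds at most one corrective per server of cost $\le 2\diam$, and an uncorrected server's slack is inherited directly from the previous round without amplification, so the slack stays uniformly bounded by a fixed multiple of $\diam$.
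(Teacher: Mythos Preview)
Your lower bound $cc_i^+(t) \ge h_t$ matches the paper's Lemma~\ref{lem:lx}, and your overall strategy (run Algorithm~\ref{alg:acctomul}, sandwich each server's final cost, read off the ratio) is exactly the paper's. The gap is in your upper bound: the invariant $cc_i^+(t) \le h_t + O(\diam)$ is \emph{false}, and your inductive step for it does not go through.

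The failure is precisely in the sentence ``an uncorrected server's slack is inherited directly from the previous round without amplification.'' If at step $t$ server $i$ moves in $A$ by $c_i(t,A)>0$ but is not the current maximum (so $h_t=h_{t-1}$), and no corrective fires because $cc_i^+(t)\ge h_t$, then the slack grows by $c_i(t,A)$. Concretely, take $\diam=1$, two servers, with server~1 moving one unit at each of times $1,\dots,100$ and server~2 moving one unit at each of times $101,\dots,200$. During the first block server~2 receives correctives tracking $h_t=t$, so $cc_2^+(100)=100=h_{100}$. During the second block $h_t\equiv 100$, server~2 never triggers a corrective, and $cc_2^+(200)=200=h_{200}+100\cdot\diam$. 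So the slack is $\Theta(h_T)$, not $O(\diam)$.

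The paper avoids this by proving only a \emph{terminal} upper bound, not a time-uniform one: let $t^*$ be the last time server $i$ received a corrective. Just before that corrective $cc_i^+<h_{t^*}\le h_T$; the corrective adds at most $2\diam$; and from $t^*$ onward only $A$'s moves contribute, at most $cc_i(T,A)\le h_T$ more. This yields $cc_i^+(T)\le 2h_T+2\diam$ (Lemma~\ref{lem:l16}), which together with the lower bound $\max(h_T,\diam)$ still gives a ratio of at most $4$ and the $O(k)$ competitive bound. Your argument is repaired by replacing the claimed invariant with this last-corrective analysis.
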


\begin{proof}
Define \( cc_i(t,\mathrm{ALG}) = \sum_{j=1}^t c_i(j,\mathrm{ALG}) \) as the cumulative cost of server \( i \) up to request \( t \) under algorithm \( \mathrm{ALG} \).

First, to justify line~\ref{line:find}, observe that in any metric space with diameter \(\diam\), for any point \(x\), there exists a point \(y\) such that \(d(x,y) \ge \diam/2\). Otherwise, all points would lie within a ball of radius less than \(\diam/2\), implying that the maximum distance between any two points is strictly less than \(\diam\), which contradicts the definition of diameter. Consequently, if a server moves from its current position to such a point and returns, it travels a distance of at least \(\diam\) and at most \(2\diam\).


Second, we show that all servers pay a cost of at least \(h_t = \max_j cc_j(t, A)\) at the end of each step (Line~\ref{line:step-end}).

\begin{lemma}
\label{lem:lx}
For all servers \( i \) and all time steps \( t \), we have
\[
cc_i(t,\text{this}) \;\ge\; h_t \;=\; \max_j cc_j(t, A)
\]
after Line~\ref{line:step-end} in Algorithm~\ref{alg:acctomul}.
\end{lemma}
\begin{proof}
We prove by induction that after the end of each step \( t \) (Line~\ref{line:step-end}), it holds that \( cc_i(t,\text{this}) \ge h_t \) for all \( i \).  
For the base case \( t = 0 \), we have \( cc_i(0,\text{this}) = h_0 = 0 \), so the statement holds.  

Now consider any step \( t > 0 \). By the definition of Algorithm~$A$ (see Algorithm~\ref{alg:acctomul}), the maximum cumulative cost in the simulated execution of \( A \) increases by at most \( \diam \) between two consecutive steps; that is,
\[
h_t \le h_{t-1} + \diam.
\]
If, at Line~\ref{line:check}, we have \( cc_i(t-1,\text{this}) < h_t \), then the algorithm performs a corrective move that adds at least \( \diam \) to server \( i \). Combining this with the inductive hypothesis and the bound on the increase of \( h_t \), we obtain
\[
cc_i(t,\text{this})
\;\ge\;
cc_i(t-1,\text{this}) + \diam
\;\ge\;
h_{t-1} + \diam
\;\ge\;
h_t.
\]
If instead \( cc_i(t-1,\text{this}) \ge h_t \), then no corrective move is made, and hence
\[
cc_i(t,\text{this}) = cc_i(t-1,\text{this}) \ge h_t.
\]
Thus, in both cases the inequality \( cc_i(t,\text{this}) \ge h_t \) holds at the end of step \( t \), completing the induction.
\end{proof}

Next, we complete the bound on the cost paid by each server, which will allow us to show that Algorithm~\ref{alg:acctomul} is multiplicatively fair.

\begin{lemma}
\label{lem:l16}
The total cost paid by any server in Algorithm~\ref{alg:acctomul} lies in the interval
\[
\bigl[\,\max(\max_j cc_j(A),\,\diam),\, 2 \cdot \max_i c_i + 2 \cdot \diam\,\bigr].
\]
\end{lemma}

\begin{proof}
Fix a server \( i \). At each time step, server \( i \) may move either due to simulating algorithm \( A \), or because its cumulative cost \( cc_i(t,\text{this}) \) is less than the current maximum \( h_t \). Let \( t^* \) be the last step for which \( cc_i(t^*,\text{this}) < h_{t^*} \) in Line~\ref{line:check} of Algorithm~\ref{alg:acctomul}, meaning that the server performs a corrective move (note that \( t^* \) could be \( 0 \)).

After time \( t^* \), server \( i \) moves only according to algorithm \( A \), and thus its cost from that point onward is at most \( cc_i(A) \). At time \( t^* \), server \( i \)'s cost up to step \( t^* \) was strictly less than \( h_{t^*} \), and since \( h_{t^*} \le \max_j cc_j(T,A) \), this gives an upper bound on the cost before the corrective move. Each corrective move (see Line~\ref{line:find}) also adds at most \( 2\diam \), so in total, server \( i \)'s movement is bounded by \( \max_j cc_j(A) + 2\diam \).

It is worth mentioning that the last four lines of Algorithm~\ref{alg:acctomul} only move the servers whose cost satisfies
\[
cc_i(T,\text{this}) = h_T = \max_j cc_j(A),
\]
and these moves force their final costs to lie in the range
\[
[\max_j cc_j(A) + \diam,\; \max_j cc_j(A) + 2\diam].
\]
This makes sure that our upper bound is always correct for all servers, and for these servers, our lower bound also holds.

For all other servers,by Lemma~\ref{lem:lx}, they pay at least \( \max_j cc_j(A) \), and any server that pays more(not equal) must have been moved in Line~\ref{line:find}, which contributes at least an additional \( \diam \). Therefore, fortm this fact and Lemma~\ref{lem:lx} every server pays at least
\[
\max(\diam,\; \max_j cc_j(A)),
\]
and at most
\[
2\max_j cc_j(A) + 2\diam.
\]

\end{proof}

Lemma~\ref{lem:l16} ensures that, after transforming \( A \) with Algorithm~\ref{alg:acctomul}, every server incurs a cost of at least \( \max(\max_j c_j,\, \diam) \) and at most \( 2 \cdot \max_j c_j + 2 \cdot \diam \), ensuring that the resulting algorithm is multiplicatively fair with ratio at most \( 4 \le O(1) \).

Next, since the algorithm was \( O(1) \)-acceptably fair, computing the total cost gives
\[
\sum_{i=1}^k c_i 
\;\le\; k \cdot 2 \cdot \max_j c_j \;+\; 2k \cdot \diam
\;\le\; O(k) \cdot \OPT,
\]
which shows that the transformed algorithm is also \( O(k) \)-competitive.

\end{proof}


\subsection{Connections to Additive Fairness}

While we have shown that acceptable and multiplicative fairness are closely related under reasonable competitiveness assumptions 
, additive fairness appears to be fundamentally harder to guarantee in the online setting. In particular, additive fairness requires tight control over cost deviations between servers—something that may be impossible to achieve without knowing when the input sequence ends. 
One might think that using Algorithm~\ref{alg:acctomul} we could transform an acceptably fair algorithm into an additively fair one. However, Algorithm~\ref{alg:acctomul} only keeps servers within a constant \emph{multiplicative} factor of each other (and, in fact, this factor can be improved to $2$), whereas achieving additive fairness would require that at all times every server stays within distance \(\diam\) of the current maximum-cost server. This is problematic in an online setting: consider an adversarial sequence in which, in turn, server~1, then server~2, then server~3, and so on, becomes the unique maximum-cost server. Each time the identity of the maximum changes, an additive-fairness transformation must force the remaining \(k-1\) servers to ``catch up'' by moving them an extra distance \(\Theta(\diam)\), incurring about \((k-1)\diam\) additional cost per phase beyond what the original algorithm $A$ pays. Repeating this pattern over many phases yields a total overhead of order \(k\diam\) per phase, leading to a multiplicative blow-up of roughly \(k\) in the competitive ratio. Thus, any strategy that tries to maintain strict \(\diam\)-additive fairness at all times appears incompatible with preserving the original \(O(k)\)-competitive guarantee.


When the algorithm knows when the request sequence ends, additive fairness becomes achievable; however, this guarantee is weak, since the algorithm is only additively fair at the \emph{final} step rather than throughout the entire execution. Under this additional assumption, we can nevertheless construct an additively fair algorithm from an acceptably fair one, as shown next.

\begin{theorem}
\label{thm:lastAA}
If the algorithm knows the final request in advance, then any $O(1)$-acceptably fair algorithm can be transformed into a $O(\diam)$-additive fair algorithm with competitive ratio $O(k)$.
\end{theorem}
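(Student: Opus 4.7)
The plan is to exploit the fact that the final index $T$ is known in advance by postponing every corrective move until after $\sigma_T$ has been served. The transformation therefore has two phases: first, run the given $O(1)$-acceptably fair algorithm $A$ unmodified on $\sigma_1,\ldots,\sigma_T$; second, after $\sigma_T$ is processed, perform a single batch of back-and-forth corrective moves that equalize the cumulative server costs up to an additive $O(\diam)$ term. This is precisely the source of the $k$ blow-up described just before the theorem for the general online setting: having to re-equalize every time the identity of the maximum-cost server changes forces $\Theta(k\cdot\diam)$ catch-up movement per such event, whereas here a single round of corrections at the end suffices.

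Concretely, let $h_T=\max_j cc_j(T,A)$ denote the maximum cumulative cost at the end of phase one. In phase two, for every server $i$ with $cc_i(T,A)<h_T$, I would repeatedly pick a point $x$ with $d(s_i(T),x)\ge \diam/2$ and move server $i$ to $x$ and back, exactly as in line~\ref{line:find} of Algorithm~\ref{alg:acctomul}. Each such round-trip adds some amount in $[\diam,2\diam]$ to server $i$'s cumulative cost, and the stopping rule is to halt as soon as $cc_i\ge h_T$.

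The two remaining steps are: (a) argue that after phase two, every server's final cost lies in the interval $[h_T,h_T+2\diam]$, from which $O(\diam)$-additive fairness follows immediately; and (b) bound the total overhead of phase two. For (a), the lower bound is immediate from the stopping rule, and the upper bound holds because just before the final corrective move we had $cc_i<h_T$, while the move itself adds at most $2\diam$. For (b), each server $i$ incurs overhead at most $(h_T-cc_i(T,A))+2\diam$; summing over $k$ servers gives a total overhead of at most $k\cdot h_T+2k\cdot\diam$. Acceptable fairness of $A$ yields $h_T\le O(\OPT(\sigma))$, so the transformed algorithm pays at most $\cost(A,\sigma)+O(k\cdot\OPT(\sigma))+O(k\cdot\diam)=O(k)\cdot\OPT(\sigma)+O(k\cdot\diam)$, matching the desired $O(k)$-competitive guarantee up to the usual additive constant depending only on $k$ and $\diam$.

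The main subtlety is the overshoot on the last corrective round-trip: because each trip contributes an amount in $[\diam,2\diam]$ that cannot be tuned exactly, the final cumulative cost may land anywhere in $[h_T,h_T+2\diam]$, and this slack is fundamentally what limits us to additive fairness $O(\diam)$ rather than a tighter bound. This is also precisely where knowledge of $T$ is essential: without it, a future request could arrive after the corrective batch and re-open the gap, and maintaining additive fairness at every intermediate step would force the $k$-factor blow-up in the competitive ratio that was flagged above.
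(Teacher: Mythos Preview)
Your proposal is correct and follows essentially the same approach as the paper: run the acceptably fair algorithm $A$ to completion, then perform a batch of corrective moves to bring every server's cumulative cost into an interval of width $O(\diam)$ around the maximum, and bound the total overhead using $h_T = O(\OPT)$. The only cosmetic difference is that the paper uses one-way corrective moves of length in $[\diam/2,\diam]$ and stops once each server reaches $[c_{\max}-\diam,c_{\max}]$, whereas you use round-trips of length in $[\diam,2\diam]$ (as in Algorithm~\ref{alg:acctomul}) and stop once each server reaches $[h_T,h_T+2\diam]$; both yield $O(\diam)$-additive fairness and the same $O(k)$ competitive ratio.
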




\begin{proof}
Let \( A \) be any acceptably fair algorithm, and let \( c_i \) denote the final cost of server \( i \) under \( A \), with \( c_{\max} := \max_i c_i \). Since \( A \) is acceptably fair, \( c_{\max} = O(\OPT) \).

After simulating \( A \), for each server \( i \) such that \( c_i < c_{\max} - \diam \), we increase its cost by performing one or more additional moves to points at distances in the range \( [\diam/2, \diam] \) (This point exists; see the first paragraph of proof of Theorem~\ref{thm:atom}). Each such move increases \( c_i \) by at least \( \diam/2 \) and at most \( \diam \). As a result, the final cost of every server lies in the interval \( [c_{\max} - \diam, c_{\max}] \), since we cannot exceed this range with steps of length at most \( \diam \). This ensures additive fairness with slack \( \diam \).

Finally, since \( c_{\max} = O(\OPT) \), and all servers lie within \( \diam \) of it, the total cost is at most
\[
\sum_{i=1}^k c_i \leq k \cdot c_{\max} = O(k \cdot \OPT),
\]
preserving \( O(k) \)-competitiveness.
\end{proof}

This motivates the following conjecture, which captures a potential separation between acceptable and additive fairness.

\begin{conjecture}
\label{conj:additive-barrier}
In the absence of knowledge of the final request, it is not possible to transform every $O(1)$-acceptably fair online algorithm into an $O(\diam)$-additive fair algorithm with a competitive ratio of $O(k)$.
\end{conjecture}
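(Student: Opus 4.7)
The plan is to prove a matching impossibility by exhibiting an $O(1)$-acceptably fair online algorithm $A$ and an adversarial request sequence $\sigma$ such that no online transformation $\mathcal{T}$ can produce $A' = \mathcal{T}(A)$ which is simultaneously $O(\diam)$-additively fair at every prefix and $O(k)$-competitive overall. Because $A'$ does not know the final request, additive fairness must hold at \emph{every} time step, not only at a designated terminal time as in Theorem~\ref{thm:lastAA}; this is the leverage the lower bound will exploit.

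First, I would formalize the transformation model as a black-box reduction: $\mathcal{T}$ observes the request stream and $A$'s moves, and may instruct $A'$ to make any additional corrective moves so long as $A'$ still serves each request. Next, I would construct a hard base algorithm $A$ on a metric of diameter $\diam$ (e.g., a star with $k$ leaves, or a weighted uniform metric on $k+1$ points). The construction must guarantee two properties: (i) $A$ is $O(1)$-acceptably fair, so every server's cost stays at most $O(\OPT(\sigma))$; and (ii) under a specific adversarial input, the identity of $A$'s current heaviest-loaded server cycles across all $k$ servers at regular intervals, with the cost gap between the new maximum and the ``lagging'' servers being $\Omega(\diam)$ at each transition. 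A concrete candidate is a star-like metric where $A$ concentrates work on a rotating ``target'' server chosen to keep all servers $O(1)$-acceptably fair but deliberately imbalanced at intermediate prefixes.

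Third, I would argue that whenever the heaviest-loaded server changes, the additive-fairness constraint forces $A'$ to raise the remaining $k-1$ servers' cumulative costs to within $O(\diam)$ of the new maximum. Each such catch-up move contributes $\Omega(\diam)$, totaling $\Omega(k \cdot \diam)$ per transition. Running this pattern for $\Theta(T/k)$ phases yields a total overhead of $\Omega(T \cdot \diam)$. Meanwhile, a carefully designed adversarial sequence will admit $\OPT(\sigma) = O(T/k)$ via an offline assignment that anticipates the rotations, giving a competitive ratio of $\Omega(k \cdot \diam)$, which exceeds the $O(k)$ target whenever $\diam$ is super-constant relative to the intrinsic problem parameters.

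The main obstacle will be ruling out ``lazy'' transformations that postpone catch-up moves in hopes the heaviest server will change again before a stopping time occurs. To counter this, I would invoke a Yao-style adversary with a distribution over stopping times: for any deterministic transformation, some stopping time in the support must catch $A'$ in a state where the accumulated lag is $\Omega(k \cdot \diam)$, forcing either a violation of additive fairness at that prefix or a large prior investment in corrective movement. A secondary challenge is maintaining the $O(1)$-acceptable-fairness of $A$ throughout such oscillatory inputs; this likely requires carefully interleaving heavy and light phases, perhaps by routing a controlled fraction of requests away from the current target server to cap its growth while still producing the oscillation pattern needed to defeat $\mathcal{T}$.
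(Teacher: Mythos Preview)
This statement is a \emph{conjecture} in the paper; no proof is given. The paragraph preceding it sketches precisely the oscillating-maximum intuition you propose (cycle the identity of the heaviest server so that any additive-fairness wrapper must pay $(k-1)\Theta(\diam)$ in catch-up moves per transition), but the authors leave it as motivation rather than an argument. You are therefore attempting to settle an open problem, not reproducing a known proof.

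Your plan inherits the same gaps that kept the authors from claiming a theorem. First, the arithmetic slips: competitive ratio is dimensionless, so ``$\Omega(k\cdot\diam)$'' cannot be the conclusion. Done carefully (with $\OPT=\Theta(m\,\diam/k)$ over $m$ transitions, as acceptable fairness of $A$ forces), the overhead-to-$\OPT$ ratio comes out $\Theta(k^2)$, which is what you actually want, but the bookkeeping must be made precise. Second, and more fundamentally, you have not pinned down the transformation model. If $\mathcal{T}$ is permitted to discard $A$ entirely and run a fixed online algorithm $B$, then the conjecture collapses to the nonexistence of any $O(\diam)$-additively fair, $O(k)$-competitive online algorithm on the chosen metric class; your construction does not address that, since it relies on $A'$ being tied to $A$'s maximum-cost server. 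You must either restrict to genuinely black-box transformations and justify that restriction, or attack the stronger nonexistence statement directly. Third, the Yao-style stopping-time device you invoke to defeat lazy transformations is the crux of the whole argument, not a detail to fill in: you need to exhibit a joint distribution over $(\sigma,\text{stopping time})$ and show that every deterministic $A'$ either violates $O(\diam)$-additive fairness at some prefix in the support or accumulates cost $\omega(k)\cdot\OPT$. This coupling is exactly where the paper's informal discussion stops, and it is where the real work lies.
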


Finally, we show that additive fairness can be achieved, but only at the cost of a significant deterioration in competitive ratio. In particular, the construction below should be viewed merely as a baseline: the resulting \(c \cdot k\) competitive ratio is quite weak in general and substantially worse than what is suspected to be achievable through a transformation from acceptable or multiplicative fairness.

\begin{theorem}
\label{thm:kkcomp}
Let $A$ be a $c$-competitive online algorithm for the $k$-server problem. Then $A$ can be transformed into a $\diam$-additively fair algorithm with competitive ratio $c\cdot k$.
\end{theorem}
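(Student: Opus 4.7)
The plan is to use the lockstep construction alluded to in the Fairness-versus-Competitiveness remark at the beginning of the paper. Given a $c$-competitive algorithm $A$, I would define the transformed algorithm $A'$ so that whenever $A$ moves a server by distance $d_t$ to serve $\sigma_t$, algorithm $A'$ performs the same move and, in addition, forces each of the remaining $k-1$ servers to pay exactly $d_t$ additional cost through a round-trip that leaves its position unchanged. Because every server's cumulative cost increases by the same amount $d_t$ at every step, the transformed algorithm is trivially $0$-additively fair (hence $\diam$-additively fair), while its per-step cost is exactly $k$ times that of $A$, yielding competitive ratio $c \cdot k$.

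Concretely, at step $t$ let $i_t$ be the server $A$ moves and write $d_t = d(s_{i_t}(t-1),\sigma_t)$. Algorithm $A'$ first serves $\sigma_t$ with server $i_t$ as $A$ would, at cost $d_t$. Then, for each $j \neq i_t$, $A'$ sends server $j$ along a geodesic to a point $x_j$ with $d(s_j(t-1), x_j) = d_t/2$ and back to $s_j(t-1)$, incurring cost exactly $d_t$ and leaving $j$'s position unchanged. Since $d_t \le \diam$ and the $k$-server model permits servers to stop at intermediate points of a geodesic, such an $x_j$ always exists whenever the metric is continuous or is the metric completion of a weighted graph. Fairness and cost then follow immediately by induction on $t$: at the end of step $t$ every server has cumulative cost $\sum_{\tau \le t} d_\tau$, so $|c_i(A',\sigma) - c_j(A',\sigma)| = 0 \le \diam$, and summing the $k \cdot d_t$ per-step costs gives $\text{cost}(A',\sigma) = k \cdot \text{cost}(A,\sigma) \le c k \cdot \OPT(\sigma) + C'$ for the same additive constant $C'$ appearing in $A$'s original competitiveness bound.

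The only real technical obstacle is realizing an exact-cost round-trip of length $d_t$ in a fully discrete metric (e.g.\ a uniform metric), where the minimum nontrivial distance from $s_j(t-1)$ may already overshoot $d_t/2$. I would handle this by concatenating several short round-trips to any reachable nearby point and tracking an overshoot credit for each server: whenever server $j$'s forced movement exceeds $d_t$ by some slack $\eta_j \le \diam$, the next step's forced movement for $j$ is shortened by $\eta_j$, so that the discrepancy from perfect synchronization never exceeds $\diam$ at any point in time. This bookkeeping is routine, preserves the $\diam$-additive fairness invariant, and does not affect the total cost beyond a negligible additive term absorbed into the competitive bound, thereby completing the proof with competitive ratio $c \cdot k$.
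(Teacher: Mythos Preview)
Your lockstep construction is correct and, in fact, is precisely the baseline the paper alludes to in its ``Fairness v.s.\ Competitiveness'' remark in Section~\ref{sect:fairnessDef}. It differs from the paper's own proof of Theorem~\ref{thm:kkcomp}, which uses a \emph{lazy} correction scheme: rather than forcing every non-serving server to match the serving server's cost at every step, the paper runs $A$ unchanged and only injects a $\Theta(\diam)$ corrective round-trip into server $j$ whenever $j$'s cumulative cost falls more than $2\diam$ behind the current maximum. The paper then proves inductively that the maximum server cost in the transformed algorithm never exceeds $\text{cost}(A,t)$ (correction moves never raise the maximum), whence the total cost is at most $k\cdot\text{cost}(A)$.

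Your approach buys simplicity: the cost bound $\text{cost}(A',\sigma)=k\cdot\text{cost}(A,\sigma)$ is immediate, with no induction needed, and in the continuous case you get $0$-additive fairness rather than $O(\diam)$. The paper's lazy scheme avoids unnecessary movement when $A$ is already nearly balanced (potentially useful in practice), but the worst-case guarantee is identical. One small caveat: your overshoot-credit argument for discrete metrics asserts $\eta_j\le\diam$, but a single round-trip to a nearest point can cost up to $2\diam$, so the slack bound should really be $O(\diam)$ rather than exactly $\diam$. This is harmless --- the paper's own proof also only establishes $2\diam$-additive fairness despite the theorem statement saying $\diam$ --- but it is worth stating the bound honestly.
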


\begin{proof}
Assume we apply a similar idea from Theorem~\ref{thm:lastAA}. 
We run the original algorithm \(A\), and each time a server gets more than \(2\diam\) behind the maximum server cost, we perform a correction move as many times as we want to bring it within \(2\diam\) of the maximum. This action adds between \(\diam\) and \(2\diam\), so the server will not become the maximum by performing this action(since we stop if it gets into the range of $2\diam$). This results in \(2\diam\)-additive fairness.
In the transformed algorithm, for every server \(i\), we have:
\[
cc_i(t,\text{this}) \geq \max_j cc_j(t,\text{this}) - 2\diam, 
\quad \text{and} \quad 
cc_i(t,\text{this}) \leq \max_j cc_j(t,\text{this}),
\]
ensuring additive fairness.

We now prove by induction that, at every time step \(t\), the maximum cost in the transformed algorithm is upper-bounded by the total cost of algorithm \(A\) up to time \(t\), i.e.,
\[
\max_j cc_j(t,\text{this}) \leq \text{cost}(A,t).
\]

\noindent \textbf{Base case (\(t = 0\)):}  
Both the maximum cost and \(\text{cost}(A,0)\) are zero.

\smallskip
\noindent \textbf{Inductive step:}  
Suppose the claim holds for time \(t-1\). Let \(c_i(t)\) be the cost incurred by server \(i\) in \(A\) at time \(t\).
We know that in our algorithm the maximum only changes due to the movement of the server according to \(A\), and the correction step does not change the maximum.  
So the maximum cost will be:
\begin{equation}
m = \arg\max_j cc_j(t,\text{this}), 
\qquad
cc_m(t,\text{this}) = cc_m(t-1,\text{this}) + c_m(t).
\end{equation}
Now we can say, based on the inductive hypothesis,
\begin{equation}
cc_m(t-1,\text{this}) \leq \max_j cc_j(t-1,\text{this}) \leq \text{cost}(A, t-1),
\end{equation}
which results in
\begin{equation}
cc_m(t,\text{this}) =cc_m(t-1,\text{this}) + c_m(t) 
\leq \text{cost}(A, t-1) + c_m(t) 
\leq \text{cost}(A, t).
\end{equation}
which proves the induction.

For the final time step \(t\), this gives:
\[
\max_j c_j(\text{this}) \leq \text{cost}(A).
\]

Thus, the total cost of the transformed algorithm is:
\[
\sum_{i=1}^k c_i(\text{this}) 
\leq k \cdot \max_j c_j(\text{this}) 
\leq k \cdot \text{cost}(A).
\]

Since \(A\) is \(c\)-competitive, the total cost is at most \(c \cdot \OPT\), hence:
\[
\text{cost}(\text{this}) \leq c \cdot k \cdot \OPT.
\]

This completes the proof.

\end{proof}
Finally, we note that for $k= O(1)$ we can argue that deterministic algorithms exist which are $1$-competitive and additively fair.

\begin{theorem}
For any constant $k$, there exists a deterministic algorithm for the $k$-server problem that is $O(1)$-competitive and $\diam$-additively fair.
\end{theorem}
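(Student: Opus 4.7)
The plan is to combine a generic competitive online algorithm with the additive-fairness transformation of Theorem~\ref{thm:kkcomp}, exploiting the fact that for constant $k$ the multiplicative blowup of $k$ in the competitive ratio is itself only a constant. As the base algorithm I would take the Work Function Algorithm (WFA), which is $(2k-1)$-competitive on arbitrary metric spaces~\cite{KP95, koutsoupias1994phd}. For constant $k$ this immediately gives an $O(1)$-competitive deterministic online algorithm on every metric.

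Next, I would feed WFA into the transformation described in the proof of Theorem~\ref{thm:kkcomp}: run WFA as usual, and whenever some server's cumulative cost falls more than a prescribed additive slack behind the current maximum cumulative cost, force that server to perform a short corrective round trip of total length in $[\diam/2, \diam]$. By choosing the threshold for correction slightly smaller than $\diam$ (e.g., activate a correction once a server's deficit reaches $\diam/2$), the argument in Theorem~\ref{thm:kkcomp} guarantees that, at every time step, all servers' cumulative costs lie within an additive $\diam$ window of each other. This yields $\diam$-additive fairness.

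For the competitive ratio, the inductive argument in Theorem~\ref{thm:kkcomp} shows that the maximum cumulative cost of the transformed algorithm is bounded by the total cost of the base algorithm, so the total cost of the transformed algorithm is at most $k$ times that of WFA. Combining with WFA's $(2k-1)$-competitiveness yields a competitive ratio of $k(2k-1)$, which is $O(1)$ when $k$ is constant. Thus the transformed algorithm is simultaneously $O(1)$-competitive and $\diam$-additively fair.

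The only delicate point I anticipate is calibrating the correction-threshold and the length of the corrective move so that the resulting additive slack is truly bounded by $\diam$ rather than some larger constant multiple of $\diam$; this is a constant-factor tuning of the parameters used inside the proof of Theorem~\ref{thm:kkcomp} and does not affect any asymptotics. Beyond that, no new ideas are needed: the reduction is entirely black-box, and the restriction to constant $k$ is used solely to absorb the extra factor of $k$ introduced by the fairness transformation into the $O(1)$ competitive bound.
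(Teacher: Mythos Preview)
Your approach is correct and essentially identical to the paper's: apply Theorem~\ref{thm:kkcomp} to an $O(k)$-competitive base algorithm, obtaining a $k\cdot O(k)=O(1)$-competitive, $\diam$-additively fair algorithm for constant $k$. The paper cites ``any $O(k)$-competitive algorithm (e.g., DCA on trees)'' as the base, while you pick WFA; your choice is actually cleaner since it works on arbitrary metrics, but the reduction and the arithmetic are the same.
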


\begin{proof}
By applying Theorem~\ref{thm:kkcomp} to any $O(k)$-competitive algorithm (e.g., DCA on trees), we obtain a $k \cdot O(k)$-competitive algorithm that is also additively fair. Since $k$ is constant, the competitive ratio remains $O(1)$.
\end{proof}

\section{Egalitarian Cost Model}
\label{app:egalitarian}

In the \emph{egalitarian} cost model for the $k$-server problem, the cost of an algorithm
is defined as the maximum distance traveled by any single server, rather than the sum
of distances traveled by all servers.
For a request sequence $\sigma$, let $\mathrm{OPT}_e(\sigma)$ denote the minimum possible
egalitarian cost achievable by any offline solution.

Considering the optimal egalitarian solution total cost of all servers is less than k times the maximum server cost so we can write
\[
\mathrm{OPT}_e(\sigma)\times k  \;\ge\;cost(\mathrm{OPT}_e(\sigma),\sigma)   \;\ge\; {\mathrm{OPT}(\sigma)}.
\]

Thus, the optimal egalitarian cost is at least the average optimal server cost of the normal model,
\[
\mathrm{OPT}_e(\sigma) \;\ge\; \frac{\mathrm{OPT}(\sigma)}{k},
\]
where $\mathrm{OPT}(\sigma)$ denotes the optimal utilitarian (total) cost.

Now suppose that an algorithm $A$ is $(\alpha,\beta)$-fair with respect to the utilitarian
cost, meaning that for every server $i$,
\[
c_i(A,\sigma) \;\le\; \frac{\alpha}{k}\,\mathrm{cost}(A,\sigma) + \beta.
\]
Then the egalitarian cost of $A$ satisfies
\[
A_e(\sigma)
\;=\;
\max_{i \in [k]} c_i(A,\sigma)
\;\le\;
\frac{\alpha}{k}\,\mathrm{cost}(A,\sigma) + \beta.
\]

If, in addition, $A$ is $c$-competitive under the utilitarian objective, i.e.,
$\mathrm{cost}(A,\sigma) \le c \cdot \mathrm{OPT}(\sigma)$ for all $\sigma$, then
\[
A_e(\sigma)
\;\le\;
\alpha c \cdot \mathrm{OPT}_e(\sigma) + \beta,
\]
where the inequality follows from $\mathrm{OPT}_e(\sigma) \ge \mathrm{OPT}(\sigma)/k$.
Thus, $(\alpha,\beta)$-fairness under the utilitarian cost directly implies an $\alpha c$ competitive
guarantee in the egalitarian cost model, up to the same multiplicative factor and additive
slack.

\section{Omitted Proofs}

We now give full proofs for our main results taht were omitted from the main body of the paper.
\label{sec:proofs}

\subsection{Proof of Theorem~\ref{thm:fairness}}
\begin{theorem}[Restatement of Theorem~\ref{thm:fairness}]
Consider any $k$-server input sequence 
$\sigma$ of length $T$ over a metric space of diameter $\diam$. Let $\{{s}_i(t)\}_{i \in [k], t \in [T]}$ be a sequence of server positions that achieves optimal cost on this input sequence. 

For any $\varepsilon > 0$, Algorithm~\ref{alg:fair_cost} takes $\{{s}_i(t)\}_{i \in [k], t \in [T]}$ as input and returns a modified sequence of positions $\{\hat {s}_i(t)\}_{i \in [k], t \in [T]}$ that is $( 1 + \varepsilon, \beta)$-fair with respect to $\OPT(\sigma)$, for 
$
\beta = O\left( \frac{\log k \cdot \diam}{\varepsilon} \right)$. 
Moreover, the total cost of the resulting solution is bounded by $\OPT(\sigma) +  O \left (\frac{k \log k \cdot \diam}{\varepsilon}\right ).$
\end{theorem}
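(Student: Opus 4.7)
My plan is to formalize the three claims sketched in the main text and then combine them with a pigeonhole argument. First I would state and prove the \emph{balancing lemma} (Lemma~\ref{lem:l1}): given any two servers $H$ and $L$ with running-total cost functions $C_H(t)=\sum_{s\le t}\hat c_H^{(r)}(s)$ and $C_L(t)=\sum_{s\le t}\hat c_L^{(r)}(s)$, consider the function $\Phi(z)=\bigl(C_H(T)-C_H(z)\bigr)-\bigl(C_L(T)-C_L(z)\bigr)$ which represents the signed difference in post-$z$ loads. At $z=0$ we have $\Phi(0)=C_H(T)-C_L(T)\ge 0$ and at $z=T$ we have $\Phi(T)=0$. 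Each step decreases $\Phi$ by at most $\hat c_H^{(r)}(z+1)+\hat c_L^{(r)}(z+1)\le 2\diam$, so the first $z$ with $\Phi(z)\le 0$ produces post-swap totals differing by at most $\diam$ (after absorbing the $\le 2\diam$ swap penalty paid once).

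Next I would prove the per-swap cost bookkeeping: by construction of Algorithm~\ref{alg:fair_cost} (lines swapping positions and costs after $z$, plus the two ``swap penalty'' lines), the total cost after swap $r$ exceeds that before swap $r$ by at most $2\diam$, and cost of servers other than $H_r, L_r$ is preserved verbatim (this is Claim~\ref{claim2}). I would then establish Claim~\ref{claim1}: while the algorithm has not terminated, the heaviest server $H_r$ has cost $>(1+\varepsilon)\OPT/k+\beta$ and the lightest has cost $\le\OPT/k$, so the average of the two is at most $\tfrac12\bigl(1+\tfrac{1}{1+\varepsilon}\bigr)$ times the maximum; after balancing by the lemma, both swapped servers end up at approximately this average plus $O(\diam)$, which is a factor $\tfrac{2+\varepsilon}{2+2\varepsilon}<1$ times the previous maximum, provided $\beta$ absorbs the additive $O(\diam)$ slack.

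The central pigeonhole step (Claim~\ref{claim3}) is the following. Consider any window of $k+1$ consecutive swaps. In each of these swaps one server plays the role of $H_r$, so by pigeonhole some server $j$ is chosen as the heaviest in two distinct swaps $r_1<r_2$. Between swap $r_1$ and swap $r_2$ server $j$'s cost is unchanged unless it is involved in a swap (by Claim~\ref{claim2}), and whenever it is involved its cost drops below $\tfrac{2+\varepsilon}{2+2\varepsilon}$ times the current maximum (by Claim~\ref{claim1}). Moreover the maximum cost is monotone non-increasing in $r$. Hence the maximum at swap $r_2$ is at most $\tfrac{2+\varepsilon}{2+2\varepsilon}$ times the maximum at swap $r_1$. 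Iterating, after $R=(k+1)\cdot\log_{(2+2\varepsilon)/(2+\varepsilon)}(k)=O(k\log k/\varepsilon)$ swaps the maximum drops from its initial value (at most $\OPT$, since a single server cannot pay more than the total) to at most $\OPT/k$ plus the additive slack, satisfying the termination condition with $\beta=O(\diam\log k/\varepsilon)$.

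Finally I would combine the pieces: the total cost increase over $\OPT(\sigma)$ is at most $2\diam\cdot R=O(k\log k\cdot\diam/\varepsilon)$, and upon termination each server's cost is at most $(1+\varepsilon)\OPT(\sigma)/k+\beta$, giving $(1+\varepsilon,\beta)$-fairness with respect to $\OPT(\sigma)$. Polynomial runtime follows because each swap can be computed in $O(kT)$ time and there are $O(k\log k/\varepsilon)$ swaps. The main subtlety I anticipate is keeping the additive $\diam$ slack from the balancing lemma from compounding across iterations; this is precisely the role of the additive term $\beta$ in the termination condition, which must be set large enough (the $\tfrac32+\log$ constant in the algorithm) so that the multiplicative progress in Claim~\ref{claim1} is not wiped out by the additive $O(\diam)$ slack at each swap.
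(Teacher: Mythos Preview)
Your plan is essentially the paper's proof: same balancing lemma, same per-swap bookkeeping (Claims~\ref{claim1} and~\ref{claim2}), and the same pigeonhole argument over windows of $k{+}1$ swaps (Claim~\ref{claim3}) to get geometric decay of the maximum load. One correction to your balancing lemma: the $\Phi(z)$ you define is the \emph{suffix}-load difference, so the post-swap total difference is actually $\Phi(0)-2\Phi(z)$, not $\Phi(z)$ itself; you therefore want the $z$ at which $\Phi(z)$ crosses $\Phi(0)/2$, not where it first hits $\le 0$ (equivalently, the paper tracks $D(z)=\bigl(\text{H's swapped total}\bigr)-\bigl(\text{L's swapped total}\bigr)$, which runs from $-\Phi(0)$ at $z=0$ to $+\Phi(0)$ at $z=T$ with step size $\le 2\diam$, hence has a sign-change point with $|D(z)|\le\diam$).
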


\begin{proof}
\label{apx:fairness}
We use the term \textbf{swapping} to refer to the act of exchanging the future work plans of two servers starting from a designated time step. Specifically, if servers $H_r$ and $L_r$ swap at time $z$, each server takes over the remaining assignments of the other from time $z$ onward(Line ~\ref{line:swapc}). 

Swapping increases the cost of each involved server by, at most, the distance between their positions at the swap time. This is reflected in the transition step at time $t = z$, where each server incurs an additional penalty equal to $d(s_{H_r}^{(r)}(z), s_{L_r}^{(r)}(z))$. Since the distance between any two points is at most $\diam$, the total cost increase per swap is at most $2 \cdot \diam$.

Let $r$ denote the number of swaps performed by the algorithm so far. Define $\hat c_x^{(r)}$ to be the total cost incurred by server $x$ after $r$ swaps(which includes swap costs too).
Let $H_r, L_r \in [k]$ denote the servers with the maximum and minimum total costs respectively before the $r$-th round of swapping (see Line~\ref{line:mnmx} in Algorithm~\ref{alg:fair_cost})

We use $\hat c_{H_r}^{(r)}$ to denote the total cost incurred by server $H_r$ at round $r$, and $\hat c_{L_r}^{(r)}$ for the cost of server $L_r$. Our analysis focuses on bounding the difference $\hat c_{H_r}^{(r)} - \hat c_{L_r}^{(r)}$ and showing that it decreases by a constant factor in each round.

To formalize the cost-balancing process between maximum and mimum cost servers, we focus on the problem of comparing two sequences of cost values
Each server's future work plan can be viewed as a sequence of per-request costs. When we consider swapping the remaining workload of two servers, we want to find a split point $z$ such that exchanging their suffixes beyond $z$ balances their total costs up to a small additive slack. The following lemma guarantees the existence of such a split point.

\begin{lemma}
\label{lem:l1}
Let $A = (A_1, A_2, \dots, A_n)$ and $B = (B_1, B_2, \dots, B_n)$ be two sequences of real numbers, where $A_i, B_i \in [0, c]$ for all $i \in [n]$. Then there exists an index $z \in [n]$ such that:
\begin{equation}
\label{eq:lem5}
    \left|  \left( \sum_{i=1}^{z} A_i + \sum_{i=z+1}^{n} B_i  \right)- \left( \sum_{i=1}^{z} B_i + \sum_{i=z+1}^{n} A_i \right) \right| \leq c.
\end{equation}
\end{lemma}

\begin{proof}
\label{app:lemma_l1}
Define a sequence $D_j$ for $j = 0$ to $n$ as:
\[
D_j = \sum_{i=1}^{j} A_i + \sum_{i=j+1}^{n} B_i -  \sum_{i=1}^{j} B_i - \sum_{i=j+1}^{n} A_i .
\]
By defintion of $D$ it suffices to show that there exists $z$ such that $|D_z|\leq c$. 
Now observe that:
\[
D_0 = \sum_{i=1}^{n} B_i - \sum_{i=1}^{n} A_i, \quad D_n = \sum_{i=1}^{n} A_i - \sum_{i=1}^{n} B_i = -D_0.
\]
So $D_0$ and $D_n$ are equal in magnitude but have opposite signs.
We now show that there exists an index $x \in \{0, \dots, n-1\}$ such that $D_x$ and $D_{x+1}$ have opposite signs (or one is zero). Suppose for contradiction that no such index exists—that is, for all $j = 0, \dots, n-1$, $D_j$ and $D_{j+1}$ have the same sign, and none are zero.

In that case, the entire sequence $(D_0, D_1, \dots, D_n)$ has the same sign throughout. But this contradicts the fact that $D_0$ and $D_n$ have opposite signs. 
Therefore, such an index $x$ must exist.
Now consider the change between consecutive terms: 
 \begin{align*}
        D_{x}-D_{x-1} &=  \sum_{i=1}^{x} A_i + \sum_{i=x+1}^{n} B_i - \sum_{i=1}^{x} B_i - \sum_{i=x+1}^{n} A_i - \sum_{i=1}^{x-1} A_i -\sum_{i=x}^{n} B_i + \sum_{i=1}^{x-1} B_i + \sum_{i=x}^{n} A_i \\
D_x - D_{x-1} &= A_x - B_x - B_x + A_x = 2(A_x - B_x), \\
\Rightarrow |D_x - D_{x-1}| &= 2|A_x - B_x| \leq 2c.
\end{align*}
since $A_{x}, B_{x} \in [0, c]$.

We now argue that either $x$ or $x+1$ satisfies the lemma. There are two cases:
\begin{itemize}
    \item If $|D_x| \leq c$, then $x$ is a valid choice for $z$.
    \item Otherwise, since $D_x$ and $D_{x+1}$ have opposite signs and differ by at most $2c$ and $|D_x|>c$, it must be that $|D_{x+1}| < c$, making $x+1$ a valid choice.
\end{itemize}
Hence, there exists some $z \in \{x, x+1\}$ such that:
\[
\left| \sum_{i=1}^{z} A_i + \sum_{i=z+1}^{n} B_i - \left( \sum_{i=1}^{z} B_i + \sum_{i=z+1}^{n} A_i \right) \right| \leq c,
\]
as claimed.
\end{proof}

For each swap between $H_r$ and $L_r$, consider their per-request costs at each step $t$, denoted $\{\hat c_{H_r}^{(r)}(t)\}$. This sequence has length $n$, and each entry is at most $\diam$. By applying Lemma~\ref{lem:l1}, we can find an index $z$ such that $|D_z| \leq \diam$ when swapping the suffixes of these sequences.

To proceed with the proof, we establish three key claims.

\begin{claim}
\label{claim1}
In Algorithm~\ref{alg:fair_cost} with $\beta=2 \cdot 1+\varepsilon \cdot diam \cdot (\frac{3}{2}+\log_{\frac{2+2\varepsilon}{2+\varepsilon}} k)$, after swapping servers $H_{r-1}$ and $L_{r-1}$ in round $r\leq k\cdot \log_{\frac{2+2\varepsilon}{2+\varepsilon}} k$, the total costs of both servers are reduced to at most a $\left( \frac{2+\varepsilon}{2+2\varepsilon}\right)$ fraction of the maximum server cost before the swap. That is,
\begin{align*}
\hat c_{H_{r-1}}^{(r)} &\leq \left( \frac{2+\varepsilon}{2+2\varepsilon} \right) \cdot \hat c_{H_{r-1}}^{(r-1)},  \\
\hat c_{L_{r-1}}^{(r)} &\leq \left( \frac{2+\varepsilon}{2+2\varepsilon} \right) \cdot \hat c_{H_{r-1}}^{(r-1)}. 
\end{align*}
\end{claim}

After each swap, the two involved servers exchange their remaining future responsibilities, which balances their total cost up to an additive penalty of $\diam$. Using Lemma~\ref{lem:l1}, we show that this exchange brings each server’s cost close to the average of their prior costs. By bounding the additive terms using the swap threshold $\beta$ and the maximum number of rounds, we show that both servers' costs shrink by a constant multiplicative factor.

\begin{proof}
\label{app:claim_1}
After swapping in round $r$, server $H_{r-1}$ is reassigned to perform the remaining tasks originally assigned to $L_{r-1}$, and vice versa.

Let $A_i$ denote the cost incurred by server $H_{r-1}$ at time $i$, and let $B_i$ denote the cost incurred by server $L_{r-1}$ at time $i$. We have:
\[
\hat c_{H_{r-1}}^{(r-1)} = \sum_{i=1}^{T} A_i, \quad \text{and} \quad \hat c_{L_{r-1}}^{(r-1)} = \sum_{i=1}^{T} B_i.
\]

Suppose we apply the swap at the index $z \in [T]$ guaranteed by Lemma~\ref{lem:l1}. Then, the new cost of server $H_{r-1}$ in round $r$ is:
\begin{equation}
\label{eq:b1}
\hat c_{H_{r-1}}^{(r)} \leq \sum_{i=1}^{z-1} A_i + \diam + \sum_{i=z}^{T} B_i,
\end{equation}
and the new cost of server $L_{r-1}$ is:
\begin{equation}
\label{eq:b2}
\hat c_{L_{r-1}}^{(r)} \leq \sum_{i=1}^{z-1} B_i + \diam + \sum_{i=z}^{T} A_i.
\end{equation}

We are now aiming to show that, after swapping, the cost incurred by each server is close to the average of their original costs, up to an additive penalty. 
Let:
\begin{align*}
F &= \frac{1}{2} \left( \sum_{i=1}^{z-1} A_i + \sum_{i=z}^{T} B_i + \sum_{i=1}^{z-1} B_i + \sum_{i=z}^{T} A_i \right), \\
G &= \frac{1}{2} \left( \sum_{i=1}^{z-1} A_i + \sum_{i=z}^{T} B_i - \sum_{i=1}^{z-1} B_i - \sum_{i=z}^{T} A_i \right).
\end{align*}
Then, applying the triangle inequality to bound $
\left| F + G \right| \leq \left| F \right| + \left| G \right|$ gives:
\[
\sum_{i=1}^{z-1} A_i + \sum_{i=z}^{T} B_i = F + G 
\leq \frac{1}{2} \left( \sum_{i=1}^{T} A_i + \sum_{i=1}^{T} B_i \right)
+ \frac{1}{2} \left| \sum_{i=1}^{z-1} A_i + \sum_{i=z}^{T} B_i - \sum_{i=1}^{z-1} B_i - \sum_{i=z}^{T} A_i \right|.
\]
By Lemma~\ref{lem:l1}, the absolute value term is at most $\diam$. Therefore:
\[
\sum_{i=1}^{z-1} A_i + \sum_{i=z}^{T} B_i 
\leq \frac{1}{2} \left( \sum_{i=1}^{T} A_i + \sum_{i=1}^{T} B_i \right)
+ \frac{\diam}{2}.
\]
Since $\hat c_{H_{r-1}}^{(r-1)} = \sum_{i=1}^{T} A_i$ and $\hat c_{L_{r-1}}^{(r-1)} = \sum_{i=1}^{T} B_i$, we obtain:
\[
\sum_{i=1}^{z-1} A_i + \sum_{i=z}^{T} B_i 
\leq \frac{1}{2} \hat c_{H_{r-1}}^{(r-1)} + \frac{1}{2} \hat c_{L_{r-1}}^{(r-1)} + \frac{\diam}{2}.
\]

A symmetric argument applies to the cost of server $L_{r-1}$ after the swap. Specifically, by applying the same triangle inequality reasoning but using $-G$ instead of $G$, we obtain:
\begin{equation}
    \label{eq:b3}
\sum_{i=1}^{z-1} B_i + \sum_{i=z}^{T} A_i 
\leq \frac{1}{2} \hat c_{H_{r-1}}^{(r-1)} + \frac{1}{2} \hat c_{L_{r-1}}^{(r-1)} + \frac{\diam}{2}.
\end{equation}
Combining Equation~\eqref{eq:b3} with Equations~\eqref{eq:b1} and~\eqref{eq:b2}, we conclude:
\begin{align}
\label{eq:ee1}
\hat c_{H_{r-1}}^{(r)} &\leq \frac{3}{2} \cdot \diam + \frac{1}{2} \hat c_{H_{r-1}}^{(r-1)} + \frac{1}{2} \hat c_{L_{r-1}}^{(r-1)}, \\
\hat c_{L_{r-1}}^{(r)} &\leq \frac{3}{2} \cdot \diam + \frac{1}{2} \hat c_{H_{r-1}}^{(r-1)} + \frac{1}{2} \hat c_{L_{r-1}}^{(r-1)} \notag
\end{align}


Let $\OPT$ denote the total cost of the initial (optimal) offline solution. Since each swap in Algorithm~\ref{alg:fair_cost} adds at most $2 \cdot \diam$ to the total cost (one transition penalty per server), the total cost at the beginning of round $r$ is at most:
\[
\text{TotalCost}^{(r-1)} \leq \OPT + 2(r-1) \cdot \diam.
\]
Hence, the average cost per server before round $r$ is:
\[
\frac{1}{k} \cdot \left( \OPT + 2(r-1) \cdot \diam \right).
\]
Since $L_{r-1}$ is the server with the \textbf{minimum} cost at the start of round $r$, it must be that:
\begin{equation}
\label{eq:a20}
\hat c_{L_{r-1}}^{(r-1)} \leq \frac{1}{k} \left( \OPT + 2(r - 1) \cdot \diam \right).
\end{equation}

Since the algorithm proceeds to perform a swap in round $r$, it must have been the case that the fairness condition in Algorithm~\ref{alg:fair_cost} was violated. In particular, the server $H_{r-1}$ with the maximum cost must satisfy:
\begin{equation}
\label{eq:a21}
\hat c_{H_{r-1}}^{(r-1)} > \frac{(1+\varepsilon) \cdot \OPT}{k} + \beta.
\end{equation}
Using inequality~\eqref{eq:a20} for $\hat c_{L_{r-1}}^{(r-1)}$ and~\eqref{eq:a21} for $\hat c_{H_{r-1}}^{(r-1)}$, we combine the two to get:
\begin{equation}
  \label{eq:aa22}  
\hat c_{L_{r-1}}^{(r-1)} 
\leq \frac{1}{k} \left( 2(r - 1) \cdot \diam \right) 
+ \frac{1}{1+\varepsilon} \cdot \hat c_{H_{r-1}}^{(r-1)} 
- \frac{\beta}{1+\varepsilon}.
\end{equation}

Substituting Equation~\eqref{eq:aa22} into Equation~\eqref{eq:ee1} we have:
\begin{align}
\hat c_{H_{r-1}}^{(r)} 
&\leq \frac{3}{2} \cdot \diam 
+ \frac{1}{2} \hat c_{H_{r-1}}^{(r-1)} 
+ \frac{1}{2} \left( \frac{\hat c_{H_{r-1}}^{(r-1)}}{1+\varepsilon} - \frac{\beta}{1+\varepsilon} + \frac{2(r - 1) \cdot \diam}{k} \right) \label{eq:c} \\
&= \left( \frac{1}{2} + \frac{1}{2(1+\varepsilon)} \right) \hat c_{H_{r-1}}^{(r-1)} - \frac{\beta}{2(1+\varepsilon)} + \frac{3}{2} \cdot \diam + \frac{(r - 1) \cdot \diam}{k}. \nonumber
\end{align}

To conclude the proof, we substitute the chosen value of $\beta$ into~\eqref{eq:c}. Since the algorithm performs at most $r \leq k \cdot \log_{\frac{2+2\varepsilon}{2+\varepsilon}} k$ rounds, the additive terms in~\eqref{eq:c} are fully absorbed by setting:
\begin{align*}
\frac{\beta}{2+2\varepsilon} &= \frac{( 2 \cdot (1+\varepsilon) \cdot \diam \cdot \left( \frac{3}{2} + \log_{\frac{2+2\varepsilon}{2+\varepsilon}} k \right)}{2(1+\varepsilon)} = \frac{3}{2}\cdot\diam+\diam\cdot \log_{\frac{2+2\varepsilon}{2+\varepsilon}} k,\\
&\geq \frac{3}{2}\cdot\diam+\diam\cdot \frac{r}{k}\geq \frac{3}{2} \cdot \diam + \frac{(r - 1) \cdot \diam}{k}.
\end{align*}
This reduces inequality~\eqref{eq:c} to a purely multiplicative bound:
\[
\hat c_{H_{r-1}}^{(r)} \leq \left( \frac{1}{2} + \frac{1}{2(1+\varepsilon)} \right) \hat c_{H_{r-1}}^{(r-1)} = \left( \frac{2+\varepsilon}{2+2\varepsilon} \right) \hat c_{H_{r-1}}^{(r-1)}.
\]
A symmetric argument works for $\hat c_{L_{r-1}}^r$, satisfying the claim.

\end{proof}

The next claim shows that the maximum cost among all servers is decreasing across rounds, and only the swapped servers may change their cost.

\begin{claim}
\label{claim2}
Assume that $r \leq k \cdot \log_{\frac{2+2\varepsilon}{2+\varepsilon}} k$. Then:
\begin{align}
\hat c_{H_r}^{(r)} &\leq \hat c_{H_{r-1}}^{(r-1)}, \label{eq:decreasing} \\
\text{and for all } x \notin \{H_{r-1}, L_{r-1}\}, \quad \hat c_x^{(r)} &= \hat c_x^{(r-1)}. \label{eq:same}
\end{align}
\end{claim}
\begin{proof}
From Claim~\ref{claim1}, the costs of both $H_{r-1}$ and $L_{r-1}$ decrease by at least a multiplicative factor of $\left( \frac{2+\varepsilon}{2+2\varepsilon} \right) < 1$. Since $H_{r-1}$ had the maximum cost before the swap, this implies that the new maximum cost after round $r$ must be less than or equal to the previous maximum:
\[
\hat c_{H_r}^{(r)} \leq \hat c_{H_{r-1}}^{(r-1)}.
\]

Moreover, all other servers $x \notin \{H_{r-1}, L_{r-1}\}$ are not affected by the swap, and their cost assignments are carried over unchanged in the algorithm (as seen in the cost copying step before the swap in Line~\ref{line:copy} of Algorithm~\ref{alg:fair_cost}). Hence,
\[
\hat c_x^{(r)} = \hat c_x^{(r-1)} \quad \text{for all } x \notin \{H_{r-1}, L_{r-1}\}.
\]\end{proof}

The following claim shows that within any window of $(k+1)$ rounds of Algorithm~\ref{alg:fair_cost}, the maximum server cost shrinks by a multiplicative factor of at least $\left( \frac{2+\varepsilon}{2+2\varepsilon} \right)$. This forms the basis for the geometric convergence of the algorithm.

\begin{claim}
\label{claim3}
Let $r$ be any round in Algorithm~\ref{alg:fair_cost} such that $r\leq k\cdot\log_{\frac{2+2\varepsilon}{2+\varepsilon}} k - k$, and suppose the algorithm runs for $k$ additional rounds beyond $r$. Then:
\begin{equation}
\hat c_{H_{r + k}}^{(r + k)} \leq \left( \frac{2+\varepsilon}{2+2\varepsilon} \right) \cdot \hat c_{H_r}^{(r)}.
\label{eq:claim3}
\end{equation}
\end{claim}


\begin{proof}[Proof Sketch]
    Over any $k+1$ rounds, at least one server must be chosen as the highest-cost server in two rounds. We analyze two cases based on whether this server was swapped as a low-cost server in between. In both cases, applying the previous claims shows that the maximum cost decreases by a constant multiplicative factor. The full argument appears in Appendix~\ref{app:claim3}.
\end{proof}

We are now ready to complete the proof of Theorem \ref{thm:fairness}. The claims established above guarantee that:
\begin{itemize}
    \item The cost of the maximum server always decreases (Claim~\ref{claim2}).
    \item The cost of any server not involved in a swap remains unchanged (Claim~\ref{claim2}).
    \item The cost of any server that is selected as the maximum after $k$ round must decrease by a multiplicative factor of at least $\left( \frac{2+\varepsilon}{2+2\varepsilon} \right)$ (Claim~\ref{claim3}).
\end{itemize}

As a result, the cost of the maximum-cost server decreases geometrically over time until it satisfies the $(1 +\varepsilon, \beta)$-fairness condition given in Equation~\eqref{eq:fairness_general}.

In the remainder of the proof, we upper bound the total number of swaps required before all server costs satisfy the fairness criterion.


From Claim~\ref{claim3}, we know that for every $k$ rounds, the maximum server cost decreases by a factor of $\left( \frac{2+\varepsilon}{2+2\varepsilon}  \right)$. Starting at round $0$, we have:


\begin{align*}
\hat c_{H_{k}}^{(k)} &\leq \left( \frac{2+\varepsilon}{2+2\varepsilon}  \right) \hat c_{H_0}^{(0)}, \label{eq:ineq-base}
\end{align*}

\begin{align*}
\hat c_{H_{2k}}^{(2k)} &\leq \left( \frac{2+\varepsilon}{2+2\varepsilon}  \right) \hat c_{H_{k}}^{(k)} 
\leq \left( \frac{2+\varepsilon}{2+2\varepsilon}  \right)^2 \hat c_{H_0}^{(0)}, \\
&\vdots \\
\hat c_{H_{qk}}^{(qk)} &\leq \left( \frac{2+\varepsilon}{2+2\varepsilon}  \right)^q \hat c_{H_0}^{(0)}.
\end{align*}
It is worth mentioning that $\hat c_{H_0}^{(0)}$ denotes the maximum cost among all servers after executing the initial offline optimal algorithm $A$, before any swaps. Since any server in algorithm $A$ incurs total cost at most $\OPT$, we have:

\begin{align}
\hat c_{H_{qk}}^{(qk)} &\leq \left( \frac{2+\varepsilon}{2+2\varepsilon}  \right)^q \hat c_{H_0}^{(0)} \nonumber \\
&\leq \left( \frac{2+\varepsilon}{2+2\varepsilon}  \right)^q \cdot \OPT. \nonumber
\end{align}
To ensure this final cost is at most $\frac{(1+\varepsilon) \cdot \OPT}{k} + \beta$, it suffices to run the algorithm $qk$ times such that:
\[
\left( \frac{2+\varepsilon}{2+2\varepsilon}  \right)^q \cdot \OPT \leq \frac{(1+\varepsilon) \cdot \OPT}{k}.
\]
Solving for $q$ gives: 
\[
q \geq \log_{\frac{2+2\varepsilon}{2+\varepsilon}} \left( \frac{k}{1+\varepsilon} \right) = \log_{\frac{2+2\varepsilon}{2+\varepsilon}} k - \log_{\frac{2+2\varepsilon}{2+\varepsilon}} (1+\varepsilon) .
\]
by setting total number of swaps to $qk = k \log_{\frac{2+2\varepsilon}{2+\varepsilon}} k$ we satisfy both fairness guarntee and Claims'~\ref{claim1},~\ref{claim2} and \ref{claim3} conditions on $r$. 
For any $\varepsilon > 0$, we have $\log_{\frac{2+2\varepsilon}{2+\varepsilon}} k = \frac{\log k}{\log \frac{2+2\varepsilon}{2+\varepsilon}} =  \Theta\left( \frac{1}{\varepsilon} \log k \right)$ so the total number of swaps $
O\left( \frac{1}{\varepsilon} \cdot k \cdot \log k \right).$
This completes the proof: after at most $O\left( \frac{1}{\varepsilon} \cdot k \cdot \log k \right)$ swaps, Algorithm~\ref{alg:fair_cost} returns a $(1+\varepsilon, \beta)$-fair solution, as claimed.
\end{proof}






\subsection{Proof of Claim~\ref{claim3}}
\begin{claim}[Restatement of Claim~\ref{claim3}]
Let $r$ be any round in Algorithm~\ref{alg:fair_cost} such that $r\leq k\cdot\log_{\frac{2+2\varepsilon}{2+\varepsilon}} k - k$, and suppose the algorithm is running for $k$ additional rounds beyond $r$. Then:
\begin{align*}
\hat c_{H_{r + k}}^{(r + k)} \leq \left( \frac{2+\varepsilon}{2+2\varepsilon} \right) \cdot \hat c_{H_r}^{(r)}.
\end{align*}
\end{claim}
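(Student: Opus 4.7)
The plan is to exploit the pigeonhole principle on the sequence of maximum-cost servers. Since the sequence $H_r, H_{r+1}, \ldots, H_{r+k}$ contains $k+1$ entries drawn from only $k$ distinct servers, there must exist indices $r \leq a < b \leq r+k$ and a server $s$ with $H_a = H_b = s$. Because Claim~\ref{claim2} guarantees that the maximum cost is non-increasing across rounds, it suffices to prove that $\hat c_s^{(b)} \leq \left( \frac{2+\varepsilon}{2+2\varepsilon} \right) \hat c_{H_r}^{(r)}$, since I can then chain this with $\hat c_{H_{r+k}}^{(r+k)} \leq \hat c_{H_b}^{(b)} = \hat c_s^{(b)}$ to conclude.

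The first observation I would use is that $s$ was itself swapped in round $a$ (since $s = H_a$), so Claim~\ref{claim1} immediately bounds its post-swap cost by $\left( \frac{2+\varepsilon}{2+2\varepsilon} \right) \hat c_{H_a}^{(a)}$. To extend this bound all the way to round $b$, I will track the evolution of $\hat c_s^{(\cdot)}$ between rounds $a+1$ and $b$. Let $m^*$ denote the latest round in the interval $[a, b-1]$ in which $s$ participates in a swap, either as $H_{m^*}$ or as $L_{m^*}$; such an $m^*$ always exists because round $a$ itself qualifies. Claim~\ref{claim1}, applied at round $m^*$, yields
\[
\hat c_s^{(m^*+1)} \leq \left( \frac{2+\varepsilon}{2+2\varepsilon} \right) \hat c_{H_{m^*}}^{(m^*)},
\]
independently of whether $s$ played the role of $H$ or $L$ in that swap. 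Claim~\ref{claim2} then gives $\hat c_{H_{m^*}}^{(m^*)} \leq \hat c_{H_r}^{(r)}$. Finally, by the definition of $m^*$, server $s$ is not involved in any swap during rounds $m^*+1, \ldots, b-1$, so a second application of Claim~\ref{claim2} (the non-swapped server retains its cost) yields $\hat c_s^{(b)} = \hat c_s^{(m^*+1)}$. Combining the three inequalities proves the desired bound.

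The main subtlety I anticipate is that, in the span of $k$ rounds, $s$ may be involved in multiple swaps between rounds $a$ and $b$, potentially as the low-cost server $L$ rather than always as the max. A naive argument that only appeals to the swap performed in round $a$ breaks, because $s$'s cost could in principle grow again through a later swap involving a higher-cost partner. Introducing the ``last swap before $b$'' round $m^*$ sidesteps this cleanly: Claim~\ref{claim1} bounds the cost of \emph{both} participants of a swap by the pre-swap maximum, so the role of $s$ in round $m^*$ is immaterial, and $\hat c_{H_{m^*}}^{(m^*)}$ is in turn controlled by $\hat c_{H_r}^{(r)}$ via the monotonicity from Claim~\ref{claim2}. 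The only remaining routine check is that Claim~\ref{claim1} is applicable at round $m^*$, i.e.\ that $m^* \leq k \log_{(2+2\varepsilon)/(2+\varepsilon)} k$; this follows from the hypothesis $r \leq k \log_{(2+2\varepsilon)/(2+\varepsilon)} k - k$ together with $m^* \leq r + k - 1$.
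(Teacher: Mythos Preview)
Your argument is correct and follows the same overall strategy as the paper: apply pigeonhole to the $k+1$ values $H_r,\dots,H_{r+k}$, locate a server that repeats as the maximum, and combine Claim~\ref{claim1} with the monotonicity and stability assertions of Claim~\ref{claim2}.

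Your presentation is in fact slightly cleaner than the paper's. The paper splits into two cases according to whether the repeated server $b$ is ever swapped as the \emph{low}-cost server between its two occurrences as the maximum, and in Case~(II) introduces the last such low-cost round $m_3$. Your single device---letting $m^*$ be the last round in $[a,b-1]$ in which $s$ participates in \emph{any} swap---subsumes both cases at once and sidesteps a minor gap in the paper's write-up: the paper's case split does not explicitly rule out $b$ being swapped as the \emph{high}-cost server in the interior of the interval, which would also change its cost. Your $m^*$ handles that possibility automatically, and your check that $m^*+1\le r+k\le k\log_{(2+2\varepsilon)/(2+\varepsilon)}k$ verifies the applicability of Claim~\ref{claim1}.
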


\begin{proof}
\label{app:claim3}
Suppose we run Algorithm~\ref{alg:fair_cost} for $k+1$ consecutive rounds, starting from round $r$. In each round, exactly one server is designated as the highest-cost server and gets swapped. Since there are only $k$ servers but $k+1$ rounds, the pigeonhole principle implies that at least one server is selected as the highest-cost server in two different rounds.

Formally, there exists a server $b \in [k]$ and two distinct rounds $r + m_1$ and $r + m_2$ with $0\leq m_1 < m_2 \leq k$, such that $
H_{r + m_1} = H_{r + m_2} = b.$
That is, server $b$ is designated as the highest-cost server in both rounds $r + m_1$ and $r + m_2$.
We distinguish two cases:

\smallskip

   \noindent \textbf{Case (I).}The server $b$ is never swapped as a low-cost server during the interval between rounds $r + m_1$ and $r + m_2$. 
   
   \smallskip

   \noindent \textbf{Case (II).} Server $b$ may be swapped as a low-cost server during that interval.

We first consider Case (I). Since $b$ was designated as the highest-cost server in both rounds $r + m_1$ and $r + m_2$ and was not swapped as a low-cost server in between, its cost remained unchanged between rounds $r + m_1 + 1$ and $r + m_2$. We now apply the previous claims step-by-step: 

\begin{align}
\hat c_{H_{r + k}}^{(r + k)} 
&\leq \hat c_{H_{r + m_2}}^{(r + m_2)} 
&&\text{(by Claim~\ref{claim2}, decreasing max cost)} \nonumber \\
&= \hat c_b^{(r + m_2)} 
&&\text{(since $H_{r + m_2} = b$)} \nonumber \\
&= \hat c_b^{(r + m_1 + 1)} 
&&\text{(no swaps involving $b$ between $r + m_1 + 1$ and $r + m_2$)} \nonumber \\
&\leq \left( \frac{2+\varepsilon}{2+2\varepsilon}  \right) \cdot \hat c_b^{(r + m_1)} 
&&\text{(by Claim~\ref{claim1}, applied at $r + m_1$)} \nonumber \\
&= \left( \frac{2+\varepsilon}{2+2\varepsilon}  \right) \cdot \hat c_{H_{r + m_1}}^{(r + m_1)} 
&&\text{(since $H_{r + m_1} = b$)} \nonumber \\
&\leq \left( \frac{2+\varepsilon}{2+2\varepsilon} \right) \cdot \hat c_{H_t}^{(r)} 
&&\text{(again by Claim~\ref{claim2})}. \nonumber
\end{align}

Now in case (II) Suppose server $b$ is swapped as the low-cost server at least once between rounds $r + m_1$ and $r + m_2$. Let $t + m_3$ denote the last such round where this occurs, with $m_1 < m_3 < m_2$.
Then, by Claim~\ref{claim2} (non-increasing maximum cost), we have:
\begin{align}
\hat c_{H_{r + k }}^{(r + k )} 
&\leq \hat c_{H_{r + m_2}}^{(r + m_2)} \label{eq:eqk11} \\
&= \hat c_b^{(r + m_2)} = \hat c_b^{(r + m_3 + 1)} \nonumber \\
&\leq \left( \frac{2+\varepsilon}{2+2\varepsilon}  \right) \cdot \hat c_{H_{r + m_3}}^{(r + m_3)} 
&&\text{(by Claim~\ref{claim1})} \label{eq:eqk112} \\
&\leq \left( \frac{2+\varepsilon}{2+2\varepsilon}  \right) \cdot \hat c_{H_r}^{(r)} 
&&\text{(by Claim~\ref{claim2})}. \label{eq:eqk113}
\end{align}
Thus, in both Case (I) and Case (II), the maximum server cost after $k$ rounds decreases by at least a factor of $\left( \frac{2+\varepsilon}{2+2\varepsilon}  \right)$, completing the proof of Claim~\ref{claim3}.
\end{proof}


``

\subsection{Proof of Theorem~\ref{thm:randomized}}

\begin{theorem}[Restatement of Theorem~\ref{thm:randomized}]
    Let $A$ be any online $k$-server algorithm and let $\sigma$ be any request sequence. Then, Algorithm~\ref{alg:randomized_fairness} with parameter $\gamma > 0$ satisfies the following:

\begin{itemize}
    \item The expected cost incurred by each server is
    \[
    \frac{\text{cost}(A, \sigma)}{k} + O\left( \text{cost}(A, \sigma)^{1/(\gamma+1)} \cdot \diam \right).
    \]
    
    \item More importantly, for any $\varepsilon > 0$, with probability at least
    \[
    1 - k \cdot \exp\left( -\Omega \left ( \frac{\varepsilon^2}{\gamma \cdot k} \cdot \text{cost}(A, \sigma)^{1/(\gamma+1)} \right) \right),
    \]
    each server incurs cost at most
    \[
    (1 + \varepsilon) \cdot \frac{\text{cost}(A, \sigma)}{k} + O\left( \text{cost}(A, \sigma)^{1/(\gamma+1)} \cdot \diam \right).
    \]
\end{itemize}
\end{theorem}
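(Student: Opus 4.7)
The plan is to reduce the analysis to a sum of independent per-phase contributions for each physical server, then apply Bernstein's inequality and a union bound over the $k$ servers, and finally add in the additive cost of the random swaps. First, I would count the number of phases $m$. Since phase $\ell$ terminates once the base algorithm $A$ accumulates cost $\phi_\ell = \ell^\gamma$, and the total cost of $A$ equals $\text{cost}(A,\sigma)$, we have $\sum_{\ell=1}^m \ell^\gamma = \Theta(m^{\gamma+1}/(\gamma+1)) = \Theta(\text{cost}(A,\sigma))$, which gives $m = \Theta(\text{cost}(A,\sigma)^{1/(\gamma+1)})$.

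Next, I would observe that, because $A$ is executed on the relabeled servers and ultimately only sees the \emph{set} of server positions (which is permutation-invariant), the per-label phase costs $c_j(A, \text{phase } \ell)$ for $j \in [k]$ are fully determined by $\sigma$ and the starting positions; they do not depend on the previously drawn permutations. The uniformly random permutation $\pi_\ell$ chosen at the start of phase $\ell$ maps physical server $i$ to a uniformly random label $\pi_\ell^{-1}(i)$, so the cost paid by physical server $i$ during phase $\ell$ is
\[
X_{i,\ell} \;=\; c_{\pi_\ell^{-1}(i)}(A, \text{phase } \ell),
\]
which lies in $[0, \phi_\ell]$ and has expectation $\phi_\ell/k$. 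Since the permutations are drawn independently across phases and the per-label phase costs are deterministic, the variables $X_{i,1}, \ldots, X_{i,m}$ are mutually independent for each fixed $i$.

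Then I would apply Bernstein's inequality to $X_i = \sum_\ell X_{i,\ell}$. Using $\mathrm{Var}(X_{i,\ell}) \leq \phi_\ell \cdot \E[X_{i,\ell}] = \phi_\ell^2 / k$, the total variance is $\sigma^2 = O\!\left(\sum_\ell \phi_\ell^2 / k\right) = O(m^{2\gamma+1}/(k(2\gamma+1)))$, and the maximum per-phase contribution is $M = \phi_m = m^\gamma$. Setting the deviation $t = \varepsilon \cdot \E[X_i] \approx \varepsilon \cdot m^{\gamma+1}/(k(\gamma+1))$, the variance term $t^2/\sigma^2$ and the sub-exponential term $t/M$ both produce a tail bound of the form $\exp(-\Omega(\varepsilon^2 m / (\gamma k))) = \exp(-\Omega(\varepsilon^2 \,\text{cost}(A,\sigma)^{1/(\gamma+1)}/(\gamma k)))$. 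Taking a union bound over the $k$ servers gives the factor $k$ in the failure probability stated in the theorem.

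Finally, I would account for the additive cost of the random swaps. Each permutation at the start of a phase displaces each server by at most $\diam$, so across all $m$ phases each server pays at most $m\cdot \diam = O(\text{cost}(A,\sigma)^{1/(\gamma+1)} \cdot \diam)$ in swap overhead, and the total swap overhead across all servers is $O(k \cdot \text{cost}(A,\sigma)^{1/(\gamma+1)} \cdot \diam)$. Adding this to both the per-server expectation and to the high-probability bound yields the $\beta$ term in the theorem. The main obstacle I anticipate is carefully justifying independence across phases: the positions of the servers evolve under the cumulative effect of the permutations, but I must argue that $A$ only reacts to the positional multiset and hence the phase-cost \emph{vectors} are identically distributed regardless of the past permutations. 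A secondary challenge is choosing constants within Bernstein so that the variance and sub-exponential terms both contribute exponents of order $\varepsilon^2 m/(\gamma k)$, which requires verifying that the dependence on $\gamma$ in the variance sum $\sum_\ell \ell^{2\gamma}$ matches the dependence on $\gamma$ in $t/M$.
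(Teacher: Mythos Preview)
Your proposal is correct and follows essentially the same approach as the paper: count phases via $\sum_\ell \ell^\gamma = \Theta(m^{\gamma+1})$, use the independent uniform permutations to get i.i.d.\ per-phase shares with mean $\phi_\ell/k$ and variance $\le \phi_\ell^2/k$, apply Bernstein with $M=\phi_m$, union bound over servers, and add the $m\cdot\diam$ swap overhead. The one technical wrinkle you glossed over is that a phase's realized cost can overshoot $\phi_\ell$ by up to $\diam$ (the final request that triggers the phase change), so $X_{i,\ell}$ is not strictly bounded by $\phi_\ell$; the paper handles this by introducing truncated variables $Y_{ij}\le X_{ij}\le Y_{ij}+\diam$ and folding the per-phase $\diam$ slack into the additive $O(m\cdot\diam)$ term, which is exactly the kind of bookkeeping you would discover when filling in details.
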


\begin{proof}
\label{apx:randomized}
For $j \in [m]$ let $\phi_j = j^\gamma$ denote the length of phase $j$. Let $X_{ij}$ denote the cost assigned to server $i$ during phase $j$.We define the total cost assigned to server $i \in [k]$ without considering swaps as:
\[
c'_i = \sum_{j=1}^m X_{ij}.
\]
For $j \in [m]$, we define $U_j$ to be the total cost of each phase(look at the Line \ref{line:update_u}), which is sum of all cost assigned to servers: 
\[
 U_j = \sum_{i=1}^k X_{ij} \leq \phi_j + \diam.
\]
Note that in each phase $j$, we maintain the invariant that $U_j < \phi_j$ until the phase ends, and the final request may increase the phase cost by at most $\diam$.
For the final phase, we have $U_m \le \phi_m$, since no further phase change occurs.

\noindent The expected cost $c_i$ for each server $i$ is $c'_i$ plus its swap costs, which are at most $m \cdot \diam$:
\[
\mathbb{E}[c_i] \leq E[c'_i] + m\cdot \diam = \mathbb{E}[X_{i1}] + \mathbb{E}[X_{i2}] + \cdots + \mathbb{E}[X_{im}] + m\cdot \diam ,
\]
Since the assignment of server IDs within each phase is uniform at random we have:
\[
\mathbb{E}[X_{ij}] = \frac{\sum_{i=1}^k X_{ij}}{k} =\frac{U_j}{k}.
\]
So, recalling that the total cost incurred by algorithm $A$ is $\text{cost}(A) = \sum_{j=1}^{m} U_j$, the expected total cost for server $i$ is:
\begin{equation}
\label{eq:r1}
\mathbb{E}[c_i] \leq \sum_{j=1}^{m} \frac{U_j}{k} + m\cdot \diam = \frac{\text{cost}(A)}{k} + m\cdot \diam.
\end{equation}

To relate the number of phases $m$ to the total cost $\cost(A,\sigma)$, we bound the cumulative cost incurred across phases.
By construction of Algorithm~\ref{alg:randomized_fairness}, in each phase $j < m$ the phase ends only after the accumulated cost reaches the threshold $\phi_j$, and serving the final request may increase the phase cost by at most $\diam$.
Thus, phase $j$ incurs cost at most $\phi_j + \diam$.
For the final phase, no further phase change occurs, and therefore $U_m \le \phi_m$.
Consequently, we have

\begin{equation}
    \sum_{j=1}^{m-1} \phi_j\leq\text{cost}(A) \leq \sum_{j=1}^{m} (\phi_j + \diam) = \sum_{j=1}^{m} j^\gamma + m \cdot \diam. \label{eq:phase_cost_bound}
\end{equation}
We use the standard bound based on integral approximation:
\[
\int_1^m x^\gamma \, dx \leq \sum_{j=1}^m j^\gamma \leq \int_1^{m+1} x^\gamma \, dx,
\]
which yields 
\begin{equation}
\label{eq:bound1}
\sum_{j=1}^m j^\gamma = \Theta \left (\frac{m^{\gamma+1}}{\gamma + 1} \right ) \quad \text{for } \gamma > -1.
\end{equation}
Using equation~\eqref{eq:phase_cost_bound} and equation~\eqref{eq:bound1} for a fixed constant  $\gamma>0$, we conclude that:
\[
\text{cost}(A) = \Theta \left (\frac{m^{\gamma+1}}{\gamma + 1} \right ),
\quad \text{so} \quad m = \Theta\left( \text{cost}(A)^{\frac{1}{\gamma+1}} \right).
\]
Plugging this back into the Equation~\eqref{eq:r1} for the expected cost of server $i$, we obtain:
\[
\mathbb{E}[c_i] \leq \frac{\text{cost}(A)}{k} + O\left( \text{cost}(A)^{\frac{1}{\gamma+1}} \cdot \diam \right).
\]
%
%
So, for any fixed constat $\gamma >0$, the algorithm is $(1,O( \text{cost}(A)^{\frac{1}{\gamma+1}} \cdot \diam )$-fair in expectation (i.e., ex-ante fair). 

We now show that the algorithm is also fair with high probability (i.e., ex-post fair), which is the more interesting result. We will prove concentration of the random costs of the servers using Bernstein's inequality. 
\begin{fact}[Bernstein’s Inequality]
\label{thm:ber}
Let $X_1, \dots, X_n$ be independent random variables and assume that each satisfies $|X_i| \leq M$ almost surely. Let $S_n = \sum_{i=1}^n X_i$ and let $\sigma^2 = \sum_{i=1}^n \operatorname{Var}(X_i)$. Then for any $\varepsilon > 0$,
\[
\Pr\left( S_n - \sum_{i=1}^n \E[X_i]  \geq \varepsilon \right) \leq \exp\left( -\frac{\varepsilon^2}{2\sigma^2 + \frac{2}{3}M\varepsilon} \right).
\]
\end{fact}

To avoid complications from phase overflow (where the final request in a phase may cause the total cost to exceed the phase threshold $\phi_j$), we define new random variables $Y_{ij}$ as follows: let $Y_{ij}$ denote the cost assigned to server $i$ during phase $j$, counting only up to the point when the phase threshold $\phi_j$ is reached. That is,
\[
Y_{ij} \leq X_{ij} \leq Y_{ij} + \diam,
\]
due to at most one request exceeding the threshold per phase.

Now we define $D_i$:
\[
D_i := \sum_{j=1}^m Y_{ij},
\]
which represents the total cost assigned to server $i$ excluding the phase overflow terms and swapping costs. We now analyze the concentration of $D_i$ using Bernstein's inequality.
\[
\operatorname{Var}(Y_{ij}) = \mathbb{E}[Y_{ij}^2] - \mathbb{E}[Y_{ij}]^2 \leq \phi_j^2 \left( \frac{1}{k} - \frac{1}{k^2} \right) < \frac{\phi_j^2}{k}.
\]

Next step Define
\[
Z \;\coloneqq\; \sum_{j=1}^{m-1} j^\gamma \;+\; U_m .
\]
Then, since $\E[Y_{ij}] = j^\gamma/k$ for $j<m$ and $\E[Y_{im}] = U_m/k$, we have
\[
\E[D_i] \;=\; \sum_{j=1}^m \E[Y_{ij}] \;=\; \frac{1}{k}\left(\sum_{j=1}^{m-1} j^\gamma + U_m\right)
\;=\; \frac{Z}{k}.
\]


We now apply Bernstein's inequality (Fact~\ref{thm:ber}) to bound the probability that $D_i$ (which also bounds $c_i$ up to an additive $2m \cdot \diam$) deviates significantly above its expectation:
\[
\Pr\left(D_i-\mathbb{E}[D_i] \geq \varepsilon \cdot \frac{Z_i}{k} \right) =\Pr\left(D_i \geq (1+\varepsilon) \cdot \frac{Z_i}{k} \right)
\leq \exp\left( \frac{-\varepsilon^2 \cdot Z_i^2 / k^2}{2\sigma^2 + \frac{2}{3} M \cdot \varepsilon \cdot Z_i/k} \right),
\]
where:
 \( M = \phi_m = m^\gamma \) is the maximum possible value of any $Y_{ij}$,
and the total variance satisfies:
\[
\sigma^2 \leq \sum_{j=1}^m \operatorname{Var}(Y_{ij}) \leq \frac{1}{k} \left( \sum_{j=1}^{m-1} \phi_j^2 + U_m^2 \right)\leq \frac{1}{k} \left( \sum_{j=1}^{m} \phi_j^2  \right),
\]
where \( U_m \leq \phi_m \) is the cost of the final phase and $\varepsilon$ is artbitray fixed constant.

To complete the proof, we verify that both terms in the denominator of the Bernstein inequality are asymptotically dominated by the numerator.

\noindent Recall that
\[
Z_i = \sum_{j=1}^{m-1} \phi_j + U_m .
\]
Since $\phi_j = j^\gamma$ and $U_m \le \phi_m$, we have
\[
Z_i
= \Theta\!\left( \sum_{j=1}^{m} j^\gamma \right)
= \Theta\!\left( \frac{m^{\gamma+1}}{\gamma+1} \right).
\]
It follows that
\[
Z_i^2
= \Theta\!\left( \frac{m^{2\gamma+2}}{(\gamma+1)^2} \right),
\qquad
\frac{Z_i^2}{k^2}
= \Theta\!\left( \frac{m^{2\gamma+2}}{k^2(\gamma+1)^2} \right).
\]
Next, recall that
\[
\sigma^2 = \sum_{j=1}^m \operatorname{Var}(Y_{ij})
\le \frac{1}{k} \sum_{j=1}^m \phi_j^2 .
\]
Since $\phi_j = j^\gamma$, this yields
\[
\sigma^2
= O\!\left( \frac{1}{k} \sum_{j=1}^m j^{2\gamma} \right)
= O\!\left( \frac{m^{2\gamma+1}}{k(2\gamma+1)} \right).
\]
For Bernstein’s inequality, the random variables are uniformly bounded as
\[
|Y_{ij}| \le M \coloneqq \phi_m .
\]
Therefore,
\[
M Z_i
= \phi_m Z_i
= \Theta\!\left( m^\gamma \cdot \frac{m^{\gamma+1}}{\gamma+1} \right)
= \Theta\!\left( \frac{m^{2\gamma+1}}{\gamma+1} \right),
\]
and consequently,
\[
\frac{2}{3}\cdot \frac{M Z_i}{k}
= \Theta\!\left( \frac{m^{2\gamma+1}}{k(\gamma+1)} \right).
\]

Plugging the bounds on $\E[D_i]$, $\sigma^2$, and $M=\phi_m$ derived above into
Bernstein’s inequality, and using $0<\varepsilon\le 1$, we obtain
\[
\Pr\left(D_i \ge (1+\varepsilon)\E[D_i]\right)
\le
\exp\!\left(-\Omega\!\left(\frac{\varepsilon^2}{\gamma}\cdot\frac{m}{k}\right)\right).
\]

Where \( m = \text{cost}(A)^{1/(\gamma+1)} \). Now, for fixed \( \varepsilon > 0 \), we conclude that with probability at least
\[
1 - \exp\left( -\varepsilon^2 \cdot \frac{1}{\gamma} \cdot \Omega\left( \text{cost}(A)^{1/(\gamma+1)}/k \right) \right),
\]
we have
\[
D_i < (1 + \varepsilon) \cdot \frac{\text{cost}(A)}{k}.
\]
Since the actual cost satisfies \( c_i \leq D_i + 2m \cdot \diam \), we have that, with the same probability,
\[
c_i \leq (1 + \varepsilon) \cdot \frac{\text{cost}(A)}{k} + O\left( \text{cost}(A)^{1/(\gamma+1)} \cdot \diam \right).
\]
Finally, to ensure this holds simultaneously for all servers, we apply a union bound. With probability at least
\[
1 - k \cdot \exp\left( -\varepsilon^2 \cdot \frac{1}{\gamma} \cdot \Omega\left( \text{cost}(A)^{1/(\gamma+1)}/k \right) \right),
\]
we have
\[
\forall i \in [k]: \quad
c_i \leq (1 + \varepsilon) \cdot \frac{\text{cost}(A)}{k} + O\left( \text{cost}(A)^{1/(\gamma+1)} \cdot \diam \right),
\]
as claimed.

\end{proof}

\begin{remark}[Choosing a target failure probability]
Fix $\varepsilon,\gamma>0$ and a target failure probability $\zeta\in(0,1)$.
Theorem~\ref{thm:randomized} guarantees failure probability at most $\zeta$ provided
\[
k\cdot \exp\!\left(-\Omega\!\left(\frac{\varepsilon^2}{\gamma k}\cdot \cost(A,\sigma)^{1/(\gamma+1)}\right)\right)\le \zeta.
\]
Equivalently, it suffices that
\[
\cost(A,\sigma)\ \ge\ \left(\Theta\!\left(\frac{\gamma k}{\varepsilon^2}\cdot \log\frac{k}{\zeta}\right)\right)^{\gamma+1}.
\]
\end{remark}

\subsection{Proof of Theorem~\ref{thm:additive-fairness}}
\begin{theorem}[Restatement of Theorem~\ref{thm:additive-fairness}]
Let $\sigma$ be any request sequence over a line metric with diameter $\diam$, and let $c_i$ denote the total cost incurred by server $i$ when serving $\sigma$ using the Double Coverage Algorithm (DCA). Then for any $i, j \in [k]$,
\[
|c_i - c_j| = O(k \cdot \diam).
\]
Therefore, DCA in line metric is $(1, O(k \cdot \diam))$-fair. 
\end{theorem}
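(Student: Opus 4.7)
The approach follows the sketch provided in the paper. The plan is to exploit two structural properties of DCA on a line: (i) the left-to-right order of servers is preserved throughout the execution, and (ii) whenever a request falls strictly between two adjacent servers, they move toward it at equal speed until one arrives. First I would fix the servers' labels $1,\dots,k$ by their initial left-to-right order on the line, and argue non-crossing: when the two neighbors of a request move toward it, they move at equal speed until one reaches the request, so they can never pass each other; when the request lies to the left of server $1$ or to the right of server $k$, only the extremal server moves. Hence this labeling is invariant in time.

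Next I would decompose each server's cost as $c_i = R_i + L_i$, where $R_i$ and $L_i$ denote the total rightward and leftward movement of server $i$, and introduce the net displacement $d_i = R_i - L_i$. Since server $i$'s final position stays within the diameter-$\diam$ metric, $|d_i| \le \diam$. The central observation is the coupling identity $R_i = L_{i+1}$ for every $1 \le i \le k-1$. This holds because the only way server $i$ moves right is when a request falls in the interval between servers $i$ and $i+1$ (otherwise either $i+1$ is the nearer side, or the request lies to the left of $i$); during each such event, DCA advances both $i$ and $i+1$ by the same amount until one hits the request, so the per-request rightward movement of $i$ equals the per-request leftward movement of $i+1$. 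Summing over events gives the identity, and boundary requests (outside the leftmost/rightmost server) only affect $L_1$ or $R_k$ and do not enter this equation.

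From here I would compute, for $1 \le i \le k-1$,
\begin{equation*}
c_{i+1} - c_i = (R_{i+1} + L_{i+1}) - (R_i + L_i) = R_{i+1} - L_i = (R_{i+1} - L_{i+1}) + (R_i - L_i) = d_{i+1} + d_i,
\end{equation*}
where the middle equality uses $L_{i+1} = R_i$. Since $|d_\ell| \le \diam$ for every $\ell$, we obtain $|c_{i+1} - c_i| \le 2\diam$. Telescoping the adjacent differences, for arbitrary $i < j$,
\begin{equation*}
|c_j - c_i| \;\le\; \sum_{\ell = i}^{j-1} |c_{\ell+1} - c_\ell| \;\le\; 2(j-i)\cdot \diam \;\le\; 2(k-1)\cdot \diam \;=\; O(k \cdot \diam),
\end{equation*}
which yields the claim, and consequently $(1, O(k\cdot\diam))$-fairness with respect to $\mathrm{cost}(\mathrm{DCA},\sigma)$ (or $\OPT(\sigma)$, since DCA is $k$-competitive).

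The main obstacle is the careful justification of the coupling identity $R_i = L_{i+1}$: one has to verify event by event that the only requests contributing to $R_i$ are exactly those contributing to $L_{i+1}$, and that within each such event the two servers move identical distances even when one reaches the request strictly before the other (the equal-speed rule of DCA makes this work). Once this per-event matching is nailed down, summation over the request sequence and the telescoping argument are straightforward.
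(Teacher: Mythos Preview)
Your proposal is correct and essentially identical to the paper's own proof: both fix the left-to-right labeling, decompose $c_i = R_i + L_i$, establish the coupling $R_i = L_{i+1}$, derive $c_{i+1}-c_i = d_i + d_{i+1}$ with $d_i = R_i - L_i$, and telescope to the $2(k-1)\cdot\diam$ bound. The only cosmetic difference is that the paper writes out the exact telescoped expression $d_{i+j} + 2d_{i+j-1} + \cdots + 2d_{i+1} + d_i$ before bounding it, whereas you apply the triangle inequality to the adjacent differences directly; both yield the same bound.
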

\begin{proof}\label{apx:dca_on_line}
We consider the Double Coverage Algorithm (DCA) on a line metric.
Since DCA never allows servers to pass each other, we can index the servers
from left to right as $1,2,\dots,k$ according to their initial positions, and
this ordering is preserved throughout the execution.

For each server $i$, let $R_i$ denote the total distance it moves to the right
and $L_i$ the total distance it moves to the left. The total cost incurred by
server $i$ is therefore
\[
c_i = R_i + L_i.
\]

Under DCA on a line, a server moves to the right if and only if a request
arrives between servers $i$ and $i+1$, in which case both servers move toward
the request at equal speed until it is served. Similarly, server $i$ moves to
the left if and only if a request arrives between servers $i-1$ and $i$.
Consequently, every unit of rightward movement by server $i$ corresponds to an
equal unit of leftward movement by server $i+1$, and hence
\[
R_i = L_{i+1}.
\]

Define the net displacement of server $i$ as
\[
d_i \coloneqq R_i - L_i.
\]
Since all server movement occurs within a line segment of diameter $\diam$, we
have $|d_i| \le \diam$ for all $i$.

We now relate the costs of adjacent servers. Using the above definitions,
\begin{align*}
c_i &= R_i + L_i = 2R_i - d_i, \\
c_{i+1} &= R_{i+1} + L_{i+1} = 2L_{i+1} + d_{i+1}.
\end{align*}
Using $R_i = L_{i+1}$, we obtain
\[
c_{i+1} - c_i = d_{i+1} + d_i.
\]

For any $j \ge 1$, the difference between servers $i$ and $i+j$ can be written
as a telescoping sum:
\[
c_{i+j} - c_i
= \sum_{\ell=0}^{j-1} (c_{i+\ell+1} - c_{i+\ell})
= d_{i+j} + 2d_{i+j-1} + \cdots + 2d_{i+1} + d_i.
\]
Since $|d_\ell| \le \diam$ for all $\ell$, we conclude that
\[
|c_{i+j} - c_i| \le 2j \cdot \diam \le 2(k-1)\cdot \diam.
\]

Thus, the difference in cost between any two servers is bounded by
$O(k\cdot \diam)$, which implies that DCA is $(1, O(k\cdot \diam))$-fair on line
metrics(additively fair algorithm )
\end{proof}

\subsection{Omitted Details of Theorem~\ref{thm:multiplicatively_fair}}
The key idea of the proof is to relate the individual server costs to how the
distance between the two servers evolves over time. For $k=2$, every movement
either increases or decreases the distance between the servers, and this change
is captured precisely by the divergence and convergence terms. Lemma~\ref{lemma:DCATwo}
formalizes this intuition by bounding the imbalance between divergence and
convergence using the diameter of the metric.
\begin{lemma}
For any input $\sigma$ and any server $i$ in the $k=2$ setting, we have
\begin{equation}
   -\diam \leq div_i(\sigma) + div_{i'}(\sigma) - 2\,con_i(\sigma) \leq \diam,
\end{equation}
where $\diam$ is the diameter of the metric. \label{lemma:DCATwo}\label{lem:DCAdivConv}
\end{lemma}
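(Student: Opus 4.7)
The approach is to track the evolution of the inter-server distance $D(t) \coloneqq d(s_i(t), s_{i'}(t))$ over the request sequence and show that its net change equals exactly $div_i(\sigma) + div_{i'}(\sigma) - 2\,con_i(\sigma)$. Since $D(t)$ is a distance in a metric of diameter at most $\diam$, we have $D(t) \in [0,\diam]$ for every $t$, so $|D(T) - D(0)| \le \diam$, which immediately yields the claimed two-sided bound.

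The heart of the argument is a per-request identity: letting $\Delta_t$ denote the contribution of serving request $\sigma_t$ to any running quantity, I aim to establish
\[
\Delta_t D \;=\; \Delta_t div_i \;+\; \Delta_t div_{i'} \;-\; 2\,\Delta_t con_i.
\]
I would prove this by case analysis using the fact that the three points $s_i(t-1)$, $s_{i'}(t-1)$, $\sigma_t$ in a tree have a unique Steiner point $c$ at which their pairwise paths meet, and the behavior of DCA splits into three sub-cases depending on where $c$ lies. In the case $c = \sigma_t$ (the request lies on the path between the two servers), both servers move at unit speed toward $\sigma_t$ until one arrives, contributing only to converging cost; $D$ shrinks by exactly $2\,\Delta_t con_i$. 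In the case $c$ coincides with a server's current position, the other server is blocked on its path to $\sigma_t$ and stands still, while the unblocked server moves purely away from its twin, so $\Delta_t D$ equals that diverging move. In the interior-Steiner case, both servers first converge toward $c$ in lockstep; once one reaches $c$ it lies on the other's path to $\sigma_t$, so the trailing server halts and the leader diverges the remaining distance $d(c, \sigma_t)$. A direct computation in each sub-case verifies the identity.

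Telescoping across all requests then gives
\[
D(T) - D(0) \;=\; \sum_{t=1}^{T} \Delta_t D \;=\; div_i(\sigma) + div_{i'}(\sigma) - 2\,con_i(\sigma),
\]
and combining with $|D(T) - D(0)| \le \diam$ completes the proof.

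The main delicacy is the interior-Steiner case, which uses the $k=2$-specific fact that both servers advance at unit speed during any converging sub-phase, so that $\Delta_t con_i = \Delta_t con_{i'}$ holds per unit of time. This symmetry is precisely what licenses the coefficient $-2$ on $con_i$ in the identity, and it is also the property that fails once $k \ge 3$, where a third server can intercept one member of a converging pair mid-motion and decouple their progress; this explains why the approach cannot be pushed beyond the two-server case without substantial modification.
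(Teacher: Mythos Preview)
Your proposal is correct and follows essentially the same approach as the paper: track the inter-server distance, observe that diverging moves increase it by the moved amount while converging moves decrease it by twice the per-server converging amount, and then use $D(0),D(T)\in[0,\diam]$ to bound the net change. Your Steiner-point case analysis is more detailed than the paper's one-line justification, but the underlying argument is identical.
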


\begin{proof}\label{apx:dcadivconv}
Let $g$ denote the initial distance between servers $i$ and $i'$. We track the distance between them at each step. Each time exactly one server diverges from the other, the distance increases by the amount of that divergence; each time both converge, it decreases by twice the amount each of them converges. Hence,
\begin{equation*}
    distance(i,i') \;=\; g + div_i(\sigma) + div_{i'}(\sigma) - 2\,con_i(\sigma).
\end{equation*}
At the end of the sequence, the distance is upper bounded by $\diam$, so we obtain
\begin{equation*}
    0 \;\leq\; g + div_i(\sigma) + div_{i'}(\sigma) - 2\,con_i(\sigma) \;\leq\; \diam.
\end{equation*}
And since $g \le \diam$, it follows that
\begin{equation*}
    -\diam \;\leq\; div_i(\sigma) + div_{i'}(\sigma) - 2\,con_i(\sigma) \;\leq\; \diam.
\end{equation*}
\end{proof}

\subsection{Proof of Theorem~\ref{thm:DCA_additive_fair}}
\begin{theorem}[Restatement of Theorem~\ref{thm:DCA_additive_fair}]
There exists a tree metric and a request sequence $\sigma$ such that, when the Double Coverage Algorithm (DCA) is run on $\sigma$, some server incurs cost $\Omega(k \cdot \OPT(\sigma))$, where $\OPT(\sigma)$ denotes the cost of the offline optimal solution.
\end{theorem}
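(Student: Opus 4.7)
The plan is to build an explicit tree metric and request sequence that force DCA to load essentially all the movement onto a single server while the offline optimum stays bounded. I would start with the tree: take a ``spine'' of $k+1$ major nodes $v_1,\dots,v_{k+1}$ connected in a path by edges of length close to $1$ (with tiny perturbations chosen below), and attach a minor leaf $u_i$ to each $v_i$ by an edge of length $\varepsilon$, for some $\varepsilon$ to be chosen very small. The initial placement is $s_1$ at $v_1$ and $s_i$ at $v_{i+1}$ for $i=2,\dots,k$, so that $v_2$ starts empty and creates a gap that $s_1$ must traverse. The request sequence is
$$
\sigma \;=\; v_2,\,u_2,\,v_3,\,u_3,\,v_4,\,u_4,\,\dots,\,v_{k+1},\,u_{k+1}.
$$
The spine perturbations would be chosen so that, at the moment each spine request $v_{i+1}$ is issued, server $s_{i+1}$ is infinitesimally closer to $v_{i+1}$ than $s_1$ is.

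I would then trace DCA phase by phase. At the start of phase $i$, the intended invariant is that $s_1$ sits within $O(\delta_i)$ of $v_i$, $s_{i+1}$ still sits at $v_{i+2}$, every server $s_j$ with $j \le i$ has been parked on its minor leaf $u_j$, and all heavier-indexed servers $s_{i+2},\dots,s_k$ are blocked by $s_{i+1}$ on any path to the requested point. The first request of the phase, $v_{i+1}$, activates only $s_1$ and $s_{i+1}$, which move at equal speed toward $v_{i+1}$; by the perturbation $s_{i+1}$ arrives first and serves the request while $s_1$ advances by roughly $1$, landing near but slightly behind $v_{i+1}$. The next request, $u_{i+1}$, then has $s_{i+1}$ sitting between $s_1$ and the leaf, so $s_1$ is blocked and only $s_{i+1}$ moves, paying an additional $\varepsilon$ to reach $u_{i+1}$. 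This restores the invariant with $i$ replaced by $i+1$ and iterates $k$ times, so $s_1$ incurs cost $\Omega(k)$ while every other server moves a total of $O(1+\varepsilon)$.

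The offline optimum, by contrast, can be controlled directly: it moves $s_1$ once from $v_1$ to $v_2$ at cost $1$, and for every $i\ge 2$ serves the pair $(v_{i+1},u_{i+1})$ by sending the already-resident server $s_i$ onto $u_{i+1}$ and back for total cost $2\varepsilon$. This gives $\OPT(\sigma)\le 1+2\varepsilon k$, which is $O(1)$ once $\varepsilon\le 1/k$. Combining the two bounds yields $c_1(\mathrm{DCA},\sigma)=\Omega(k)=\Omega(k\cdot \OPT(\sigma))$, which is the desired conclusion. To obtain instances on which $\OPT(\sigma)$ is arbitrarily large relative to $k$ and the diameter, I would simply run the same phase pattern in both directions alternately --- after $s_1$ reaches $v_{k+1}$, a symmetric sequence drives it back to $v_1$ --- and repeat this $N$ times, scaling both $\OPT(\sigma)$ and $s_1$'s cost by $N$.

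The main obstacle will be choosing the perturbations so that the DCA trace behaves exactly as described at every phase. Specifically, I need that in each phase $i$: (a) $s_{i+1}$ strictly beats $s_1$ to $v_{i+1}$; (b) after the spine request, $s_1$'s final position lies on the $v_i$--$v_{i+1}$ edge strictly to the left of $v_{i+1}$, so that the subsequent leaf request $u_{i+1}$ really does place $s_{i+1}$ between $s_1$ and the target; and (c) no stale server $s_j$ sitting on $u_j$ for $j<i$ inadvertently gets unblocked or interferes with the race. I would handle this by setting $d(v_i,v_{i+1})=1-\delta_i$ for a strictly decreasing sequence $\delta_i$ with each $\delta_i\ll\varepsilon$, which keeps $s_1$ within $O(\delta_i)$ of $v_{i+1}$ after each spine request and guarantees that every parked $s_j$ with $j<i$ remains far enough behind (on its leaf, distance $\varepsilon-O(\delta)$ from the path) that $s_{i+1}$ is always the correct blocker when the leaf request arrives, preserving the invariant and making the inductive trace go through.
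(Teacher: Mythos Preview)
Your construction is essentially the paper's: a spine with minor leaves, an initial gap at $v_2$, and a sequence that drags $s_1$ across the spine one unit per phase while each other server makes a single unit move onto a leaf. The paper collapses your (spine, leaf) request pair into a single leaf request (the right-side server races $s_1$ to the junction and then continues onto the leaf in one shot), and it avoids perturbing edge lengths by instead starting $s_1$ at a fixed $2\varepsilon$ offset left of node~$1$; these are cosmetic differences.

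Two small slips to fix. First, your $\delta_i$ must be \emph{increasing}, not decreasing: in phase~$1$ you need $d(v_2,v_3)<d(v_1,v_2)$, i.e.\ $\delta_2>\delta_1$, for $s_2$ to win the race. Tracking the invariant shows $s_1$ ends phase~$i$ at $v_{i+1}-(\delta_{i+1}-\delta_1)$, so the only real requirement is $\delta_i>\delta_1$ for all $i>1$. Second, the spine request does not activate ``only $s_1$ and $s_{i+1}$'': the most recently parked server $s_i$ (on leaf $u_i$) is also initially unblocked and moves distance $\delta_i-\delta_1$ toward $v_i$ before $s_1$ reaches $v_i$ and blocks it. This is harmless as long as $\delta_i-\delta_1<\varepsilon$, which is exactly the concern you flag in~(c), and the paper handles the same phenomenon by making each leaf length $2\varepsilon$ so the parked server has room to drift.
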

\begin{proof}
\label{apx:DCA_not_fair}

We first describe the hard tree metric, and then the hard request sequence on this metric.

\smallskip
\begin{figure}[t]
    \centering
    \includegraphics[width=0.8\linewidth]{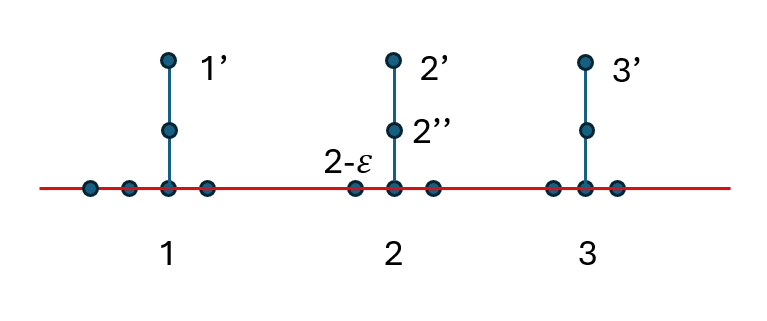}
    \caption{Illustration of the tree metric used in the proof of Theorem~\ref{thm:DCA_additive_fair}.
The horizontal red line represents the spine of the metric, with major nodes labeled
$1,2,3$. Each major node has an attached minor node at distance $\varepsilon$ which is connected to a leaf at distance$\varepsilon$.
And also we have nodes at $i\pm\varepsilon$ and one node at $1-2\varepsilon$.
The construction highlights the geometric structure of the metric; the request sequence
used in the proof exploits this structure to create large imbalance in server costs.}
    \label{fig:dca-hard-instance}
\end{figure}
\noindent\textbf{Metric.}
Fix a small $\varepsilon$.
Let the \emph{spine} be a path with vertices $1,2,\dots,k{+}2$ and unit edges $d(i,i{+}1)=1$.
Subdivide each spine edge near both endpoints by adding the vertices $i-\varepsilon$ on $(i{-}1,i)$ and $i+\varepsilon$ on $(i,i{+}1)$
(so $d(i,i-\varepsilon)=d(i,i+\varepsilon)=\varepsilon$ along the spine).
For each $i\in\{1,\dots,k{+}2\}$ attach a length-$2\varepsilon$ leaf at $i$
(a two-edge path of two $\varepsilon$-edges).
Let $i'$ denote the far endpoint of this leaf, and let $i''$ denote the midpoint
of the leaf, i.e., the point at distance $\varepsilon$ from $i$.

Finally, extend the spine by adding two additional points:
a point $1-2\varepsilon$ at distance $2\varepsilon$ to the left of vertex~$1$,
and a point $k+2+2\varepsilon$ at distance $2\varepsilon$ to the right of vertex~$k+2$. 

\noindent\textbf{Initial configuration.}
Place servers at $1-2\varepsilon,3,4,5,\dots,k{+}1$ (no server at $2$). We focus on the leftmost server $s_1$ (starting at $1-\varepsilon$). 

\noindent\textbf{Request sequence (one right-sweep; short form).}
We issue the following requests in order:
\[
2',\ 3',\ 4',\ \dots,\ k',\ \ k{+}2{+}2\varepsilon,\ \ k{+}2{+}\varepsilon.
\]
Then we \emph{mirror} the pattern from the right side (replace $i$ by $k{+}3{-}i$ and $\varepsilon$ by $-\varepsilon$) to perform a left-sweep and restore the initial pattern; repeating yields an arbitrarily long sequence.

\smallskip

\noindent\textbf{DCA dynamics after each request.}
We now describe the behavior of DCA step by step:
\textbf{Request at $2'$.}
Initially $s_1=1-\varepsilon$, $s_2=3$, $s_3=4$, etc. 
The servers unblocked on the unique paths to $2'$ are $s_1$ and $s_2$, so both start moving.
After time $1$, $s_1$ reaches $2-\varepsilon$ while $s_2$ reaches $2$; at this moment $s_1$ becomes \emph{blocked}
(because $s_2$ lies on its path to $2'$) and stops at $2-\varepsilon$.
Then $s_2$ continues alone along edge $(2,2')$ for distance $2\varepsilon$ and serves the request at $2'$.
Thus:
\[
s_1(1)=2-\varepsilon, \quad s_2(1)=2', \quad s_3(1)=4, \ s_4(1)=5, \ \dots
\]
\textbf{Request at $3'$.}
All servers whose unique path to $3'$ is unobstructed begin moving: $s_1$, $s_2$, and $s_3$.

\smallskip

\noindent\emph{Blocking events and timings.}
At time $t=\varepsilon$, $s_1$ reaches node $2$, while $s_2$ (coming along the leaf) reaches the midpoint of edge $(2,2')$ vertex $2''$.
Since $s_1$ lies on $s_2$’s path to $3'$, $s_2$ becomes \emph{blocked} at this midpoint and stops there.
$s_1$ continues toward $3$.

At time $t=1$, $s_1$ is at $3-\varepsilon$ and $s_3$ reaches node $3$.
Because $s_3$ is now on $s_1$’s path to $3'$, $s_1$ becomes \emph{blocked} at $3-\varepsilon$ and stops.
$s_3$ continues alone along the leaf from $3$ to $3'$ and serves the request at time $t=1+2\varepsilon$.

\emph{State after serving $3'$.}
\[
s_1(2) = 3-\varepsilon,\qquad s_2(2) = \text{midpoint of }(2,2'),\qquad s_3(2) = 3',\qquad s_4(2)=5,\ \dots
\]
\textbf{Continuing for $4',5',\dots,k'$.}
By induction, at each step, the designated leftmost server $s_1$ advances exactly one unit further
along the spine, stopping at position $i+1-\varepsilon$ after moving toward request~$i'$. 
At the same time, the server initially located at $i{+}1$ is diverted to the leaf node $i'$ and then will move to midpoint $(i-1)''$. 
Thus, after processing $i'=5',6',\dots,k'$, we have
\[
s_1(k-1) = k-\varepsilon,\qquad
s_j(k-1) = j''\ \text{for } j=2,3,\dots,k-1,\qquad
s_k(k-1) = k'
\]
In total, $s_1$ has moved $k-1$ units along the spine,
while each intermediate server $s_j$ ($2\le j\le k$) has made exactly one excursion of length $1+2\varepsilon$
when it was the right-hand partner for request~$j'$ and $\varepsilon$ when $s_1$ went for the next request.

\smallskip

\noindent\textbf{Request at $k{+}2{+}2\varepsilon$.}
Both $s_k$ (at $k'$) and $s_1$ (at $k-\varepsilon$) are unblocked toward this leaf and begin moving.
After distance $\varepsilon$, $s_k$ reaches $k''$ and is then blocked by $s_1$ at node $k$.
From that point, only $s_1$ continues moving along the spine and serves the request at $k{+}2{+}2\varepsilon$.
Thus $s_1$ incurs an additional distance of $2+\varepsilon$ in this step, while $s_k$ moves only $\varepsilon$.

\smallskip

\noindent\textbf{Request at $k{+}2{+}\varepsilon$.}
At this moment all servers are unblocked toward the spine.
Each parked server at $j''$ moves distance $\varepsilon$ back to its spine node $j$,
for example server $s_k$ moves distance $\varepsilon$ from $k''$ back to $k$,
and $s_1$ moves distance $\varepsilon$ back from $k{+}2{+}2\varepsilon$ to $k{+}2{+}\varepsilon$.
Hence every server shifts by exactly $\varepsilon$ and all servers return to the spine. All servers are in a similar postion to their initial positions.

Let $f$ be the left--right mirror of the spine defined by
\[
f(x) \coloneqq (k+3) - x,
\]
extended naturally to the subdivided points and leaves by mapping
$i' \mapsto (k+3-i)'$, $i'' \mapsto (k+3-i)''$, and
$(i\pm \varepsilon) \mapsto (k+3-i) \mp \varepsilon.
$
Let $\mathbf{s}{(0)}=(s_1{(0)},\dots,s_k{(0)})$ denote the initial server
configuration and let $\mathbf{s}^{(\mathrm{R})}$ denote the configuration at the
end of the right-sweep (after serving request $k+2+\varepsilon$).
Then $\mathbf{s}({k+1})$ has the same pattern as $\mathbf{s}^{(0)}$ under
mirror symmetry: namely,
\[
\bigl(s_1{(\mathrm{k+1})},\,s_2{(\mathrm{k+1})},\,\dots,\,s_k{(\mathrm{k+1})}\bigr)
=
\bigl(f(s_1{(0)}),\,f(s_2{(0)}),\,\dots,\,f(s_k{(0)})\bigr).
\]
Therefore, issuing the mirrored request sequence (replacing $x$ by $f(x)$, i.e.,
replacing $i$ by $k+3-i$ and $\varepsilon$ by $-\varepsilon$) produces the
symmetric left-sweep and returns the servers to the original configuration.


Once this sweep is complete, we reverse the process to bring $s_1$ back to the leftmost position, and repeat. 
Assuming $r$ repetitions, the cost of the $s_1$ server is
\begin{equation}
    c_1 = r \cdot (k+1+4\varepsilon) = \Theta(rk)
\end{equation}
The total cost incurred by all other servers is
\begin{equation}
    \sum c_i = r \cdot(1+3\varepsilon)\cdot k = \Theta(rk).
\end{equation}
%
%

Moreover, the total cost of the optimal solution is less than $r\cdot(2+8\varepsilon)+((k-2)\cdot 2\varepsilon)=\Theta(r)$, since this optimal solution could move only $s_1$ to do 2' then every server can serve the request at its own leaf with cost $2\varepsilon$, and server $k$ can serve the request at $k+2+2\varepsilon$ and return to $k+2+\varepsilon$ position. So the total cost will be the movement of $1+3\varepsilon$ for $s_1$,  $2\varepsilon$ for each intermediate server, and $2+5\varepsilon$ for $s_k$ in each sweep(expect the first sweep which you have to do $3\varepsilon$ not 5). 

Thus, we have that $s_1$ pays cost $\Omega(k \cdot \OPT(\sigma))$ as claimed.
%


\end{proof}

\end{document}